\documentclass[sigconf]{acmart}
\AtBeginDocument{%
  \providecommand\BibTeX{{%
    \normalfont B\kern-0.5em{\scshape i\kern-0.25em b}\kern-0.8em\TeX}}}

\usepackage{multirow}
\usepackage{caption}
\usepackage[textfont=normalfont]{subcaption}
\usepackage{enumitem}
\usepackage[export]{adjustbox}
\setlist[itemize]{leftmargin=*}
\usepackage{color}
\definecolor{bc}{RGB}{0,0,255}
\definecolor{rc}{RGB}{255, 0, 0}
\usepackage[ruled, vlined, linesnumbered]{algorithm2e} 
\usepackage{algpseudocode}
\usepackage{amsmath}

\usepackage{amssymb}
\usepackage{amsthm}
\usepackage{changepage}
\usepackage{tabularx}
\usepackage{url}






\newtheorem{observation}{Observation}

\newcommand{\hide}[1]{}

\newcommand{\SG}{\mathcal{G}}
\newcommand{\SV}{\mathcal{V}}

\newcommand{\SH}{\mathcal{H}}
\newcommand{\SHP}{\mathcal{H'}}
\newcommand{\SN}{\mathcal{N}}
\newcommand{\SE}{\mathcal{E}}

\newcommand{\tv}{t_{v,d_\theta}}

\newcommand{\SGTv}{\mathcal{G}^{(t_{v,d_\theta})}}
\newcommand{\SGTe}{\mathcal{G}^{(t_{e,d_\theta})}}
\newcommand{\SGT}{\mathcal{G}^{(t)}}
\newcommand{\SVT}{\mathcal{V}^{(t)}}

\newcommand{\SET}{\mathcal{E}^{(t)}}

\newcommand{\SGR}{\mathcal{\tilde{G}}}
\newcommand{\SGRT}{\mathcal{\tilde{G}}^{(t)}}
\newcommand{\SERT}{\mathcal{\tilde{E}}^{(t)}}
\newcommand{\SVRT}{\mathcal{\tilde{V}}^{(t)}}

\newcommand{\SNTi}{m^{(t)}_i}
\newcommand{\SNTone}{m^{(t)}_1}
\newcommand{\SNTtwo}{m^{(t)}_2}
\newcommand{\SNTthirty}{m^{(t)}_{30}}
\newcommand{\SNTj}{m^{(t)}_j}
\newcommand{\STR}{\mathcal{\tilde{T}}}


\newcommand{\ST}{\mathcal{T}}



\newcommand{\bit}{\begin{itemize}}
\newcommand{\eit}{\end{itemize}}
\newcommand{\ben}{\begin{enumerate*}}
\newcommand{\een}{\end{enumerate*}}

\newtheorem{problem}{Problem}

\newcommand{\ul}{\underline}

\newcommand{\f}[1]{\textbf{#1}}
\newcommand{\s}{\underline}

\newtheorem{definition}{Definition}
\newtheorem{theorem}{Theorem}
\newtheorem{lemma}{Lemma}
\newcommand{\DTV}{d^{(t)}(v)}

\newcommand{\DTVi}{d^{(t_i-1)}(v)}
\newcommand{\DTUi}{d^{(t_i-1)}(u)}



\setlength{\abovecaptionskip}{0.25cm}
\setlength{\skip\footins}{0.05cm}


\begin{document}
\newcommand{\smallsection}[1]{{\vspace{0.05in} \noindent {\bf{\underline{\smash{#1}}}}}}

\title{\huge Graphlets over Time: A New Lens for Temporal Network Analysis}


\settopmatter{authorsperrow=4}
\author{Deukryeol Yoon}
\affiliation{%
  \institution{KAIST AI}
  \city{Seoul}
  \country{South Korea}
}
\email{deukryeol.yoon@kaist.ac.kr}

\author{Dongjin Lee}
\affiliation{%
  \institution{KAIST EE}
  \city{Daejeon}
  \country{South Korea}
}
\email{dongjin.lee@kaist.ac.kr}

\author{Minyoung Choe}
\affiliation{%
  \institution{KAIST AI}
  \city{Seoul}
  \country{South Korea}
}
\email{minyoung.choe@kaist.ac.kr}

\author{Kijung Shin}
\affiliation{%
  \institution{KAIST AI \& EE}
  \city{Seoul}
  \country{South Korea}
}
\email{kijungs@kaist.ac.kr}





\setlength{\textfloatsep}{0.12cm}
\setlength{\dbltextfloatsep}{0.12cm}
\setlength{\abovecaptionskip}{0.12cm}
\setlength{\skip\footins}{0.12cm}

\settopmatter{printacmref=false} 
\settopmatter{printfolios=false}
\fancyhead{}
    
\begin{abstract}
    Graphs are widely used for modeling various types of interactions, such as email communications and online discussions. Many of such real-world graphs are temporal, and specifically, they grow over time with new nodes and edges.

Counting the instances of each graphlet (i.e., an induced subgraph isomorphism class) has been successful in characterizing local structures of graphs, with many applications. While graphlets have been extended for temporal graphs, the extensions are designed for examining temporally-local subgraphs composed of edges with close arrival times, instead of long-term changes in local structures.

In this paper, as a new lens for temporal graph analysis, we study the evolution of distributions of graphlet instances over time in real-world graphs at three different levels (graphs, nodes, and edges). At the graph level, we first discover that the evolution patterns are significantly different from those in random graphs. Then, we suggest a graphlet transition graph for measuring the similarity of the evolution patterns of graphs, and we find out a surprising similarity between the graphs from the same domain. At the node and edge levels, we demonstrate that the local structures around nodes and edges in their early stage provide a strong signal regarding their future importance. In particular, we significantly improve the predictability of the future importance of nodes and edges using the counts of the roles (a.k.a., orbits) that they take within graphlets.
\end{abstract}

\maketitle

\begin{figure}[t]
     \captionsetup[subfigure]{justification=centering}
    \begin{subfigure}{0.23\textwidth}
         \includegraphics[width=\textwidth]{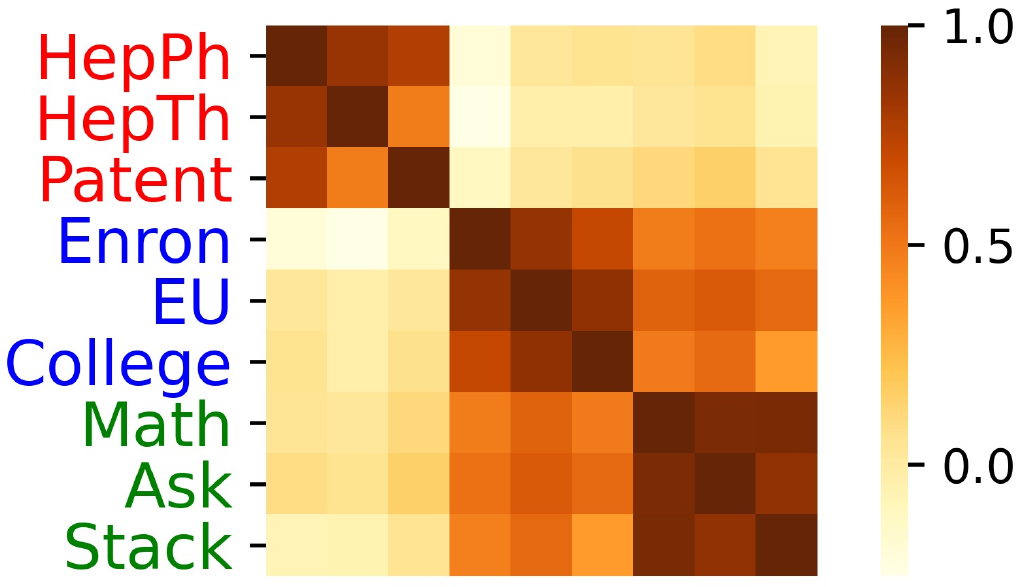}
         \caption{Similarity between graphs w.r.t. \textbf{graphlet transitions} 
         \newline(classification accuracy = \textbf{97.2\%})}
     \end{subfigure}
     \begin{subfigure}{0.23\textwidth}
        \includegraphics[width=\textwidth]{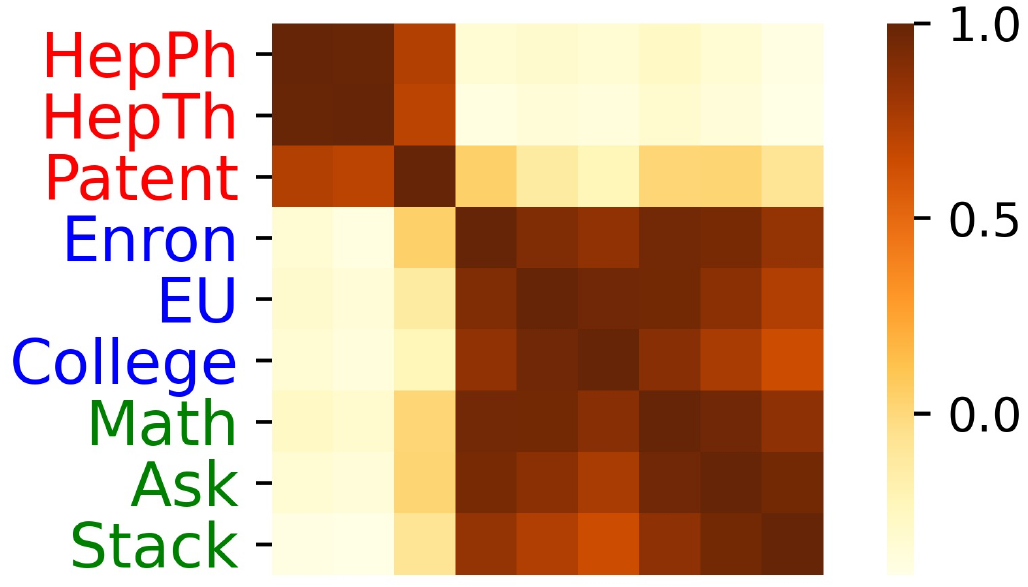}
        \caption{Similarity between graphs w.r.t. \textbf{graphlet occurrences}
        (classification accuracy = \textbf{83.3}\%)}
     \end{subfigure}
     
      \caption{\label{fig:domain_correlation} 
      Real-world temporal graphs from the same domain share similar evolution patterns captured by transitions between graphlets. The figures show the pairwise similarity between 9 graphs from 3 domains (distinguished by text colors) with respect to the transitions between graphlets (see (a)) and the occurrences of graphlets (see (b)). The domains of graphs can be classified more accurately in (a) than in (b). Specifically, with the best thresholds of similarity, the classification accuracy is $97.2\%$ in (a) and $83.3\%$ in (b). See Section~\ref{section:graph:transition} for details about the similarity measures.
      }
\end{figure}

\section{Introduction}
\label{sec:intro}
Graphs are a simple yet powerful tool, and thus they have been used for representing various types of 
interactions: email communications, online Q/As, research collaborations, to name a few.
Due to newly formed interactions, such real-world graphs are \textit{temporal}, i.e., they evolve over time with new nodes and edges.
Many studies have examined the dynamics of real-world temporal graphs and revealed interesting patterns, including densification \cite{leskovec2005graphs}, shrinking diameter \cite{leskovec2005graphs}, and temporal locality in triangle formation \cite{lee2020temporal}.


Graphlets have been widely employed for analyzing local structures of graphs.
\textit{Graphlets} \cite{prvzulj2007biological} are defined as the sets of isomorphic small subgraphs with a predefined number of nodes.
Specifically, the relative counts of the instances of different graphlets effectively characterize the local structures of graphs, with successful applications in graph classification \cite{milo2002network,milo2004superfamilies}, community detection \cite{arenas2008motif,benson2016higher,tsourakakis2017scalable}, anomaly detection \cite{juszczyszyn2011motif}, and node embedding \cite{liu2021motif,lee2019graph,yu2019rum}.


As temporal graphs are pervasive, the concept of graphlets has been generalized in a number of ways for temporal graph analysis. 
\textit{Temporal network motifs} \cite{paranjape2017motifs, kovanen2011temporal} are sets of temporal subgraphs that are (a) identical not just topologically but also temporally, 
(b) composed of a fixed number of nodes, and (c) temporally local, i.e., composed of edges whose arrival times are close enough (see Section~\ref{section:relwork} for details).
Due to the last condition, they are suitable for analyzing short-term changes of graphs but not for long-term changes in local structures, which are the focus of this paper.  

In this paper, we examine the long-term evolution of local structures captured by graphlets, as a new lens for temporal graph analysis, in nine real-world temporal graphs from three different domains. Our analysis is at three levels: graphs, nodes, and edges. 


At the graph level, we first investigate the changes in the distributions of graphlet instances over time. We find out that the evolution patterns are distinguished from those in randomized graphs that are obtained by randomly shuffling edges. Moreover, the evolution patterns in graphs from the same domain share some common characteristics.
In order to compare the evolution patterns in a systematic way, we introduce \textit{graphlet transition graphs}, which encode transitions between graphlets due to changes in graphs.
As shown in Figure~\ref{fig:domain_correlation}(a), graphs from the same domain share similar graphlet-transition patterns, which facilitates accurate graph classification, although the sizes of the graphs vary.



\begin{figure*}[t]
    \centering
    \begin{subfigure}{0.19\textwidth}\captionsetup{justification=centering} 
        \includegraphics[height=5cm]{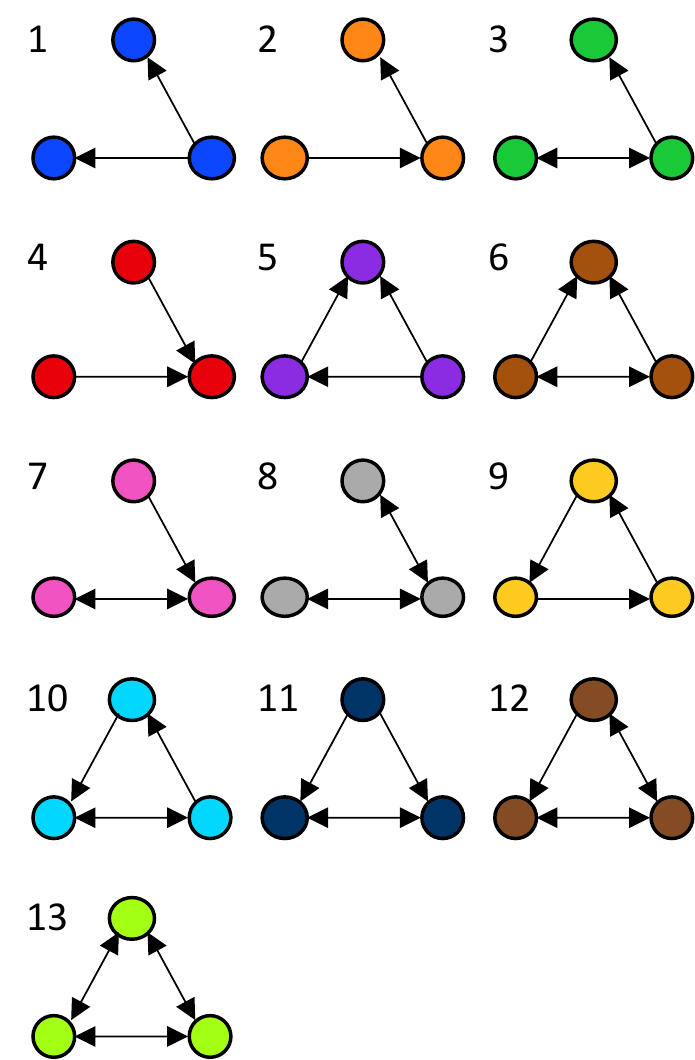}
        \caption{$13$ graphlets}
     \end{subfigure}
     \hspace{1mm} 
     \begin{subfigure}{0.39\textwidth}
        \includegraphics[height=5cm]{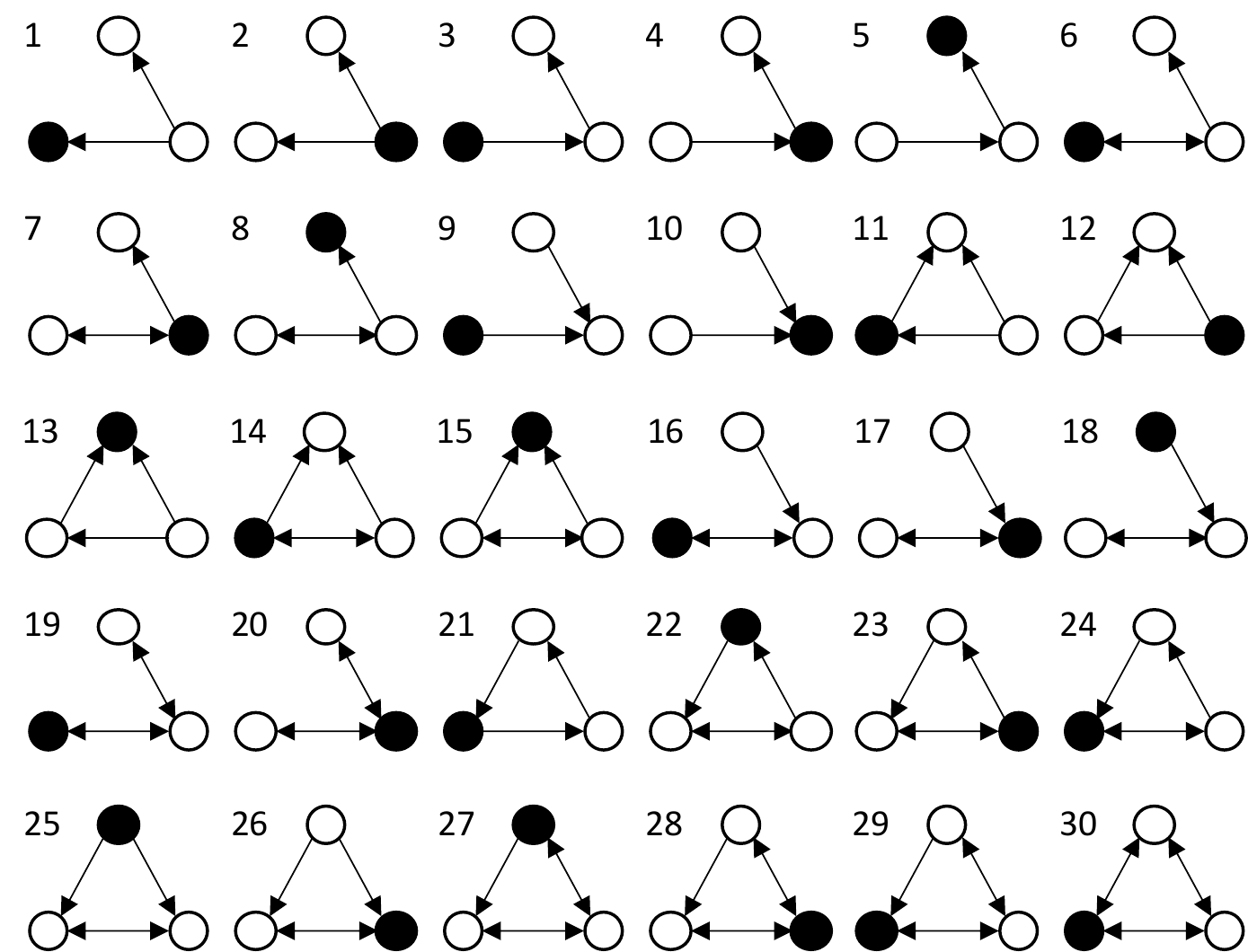}
        \caption{$30$ node roles (also known as, node orbits)}
     \end{subfigure}
     \begin{subfigure}{0.39\textwidth}
        \includegraphics[height=5cm]{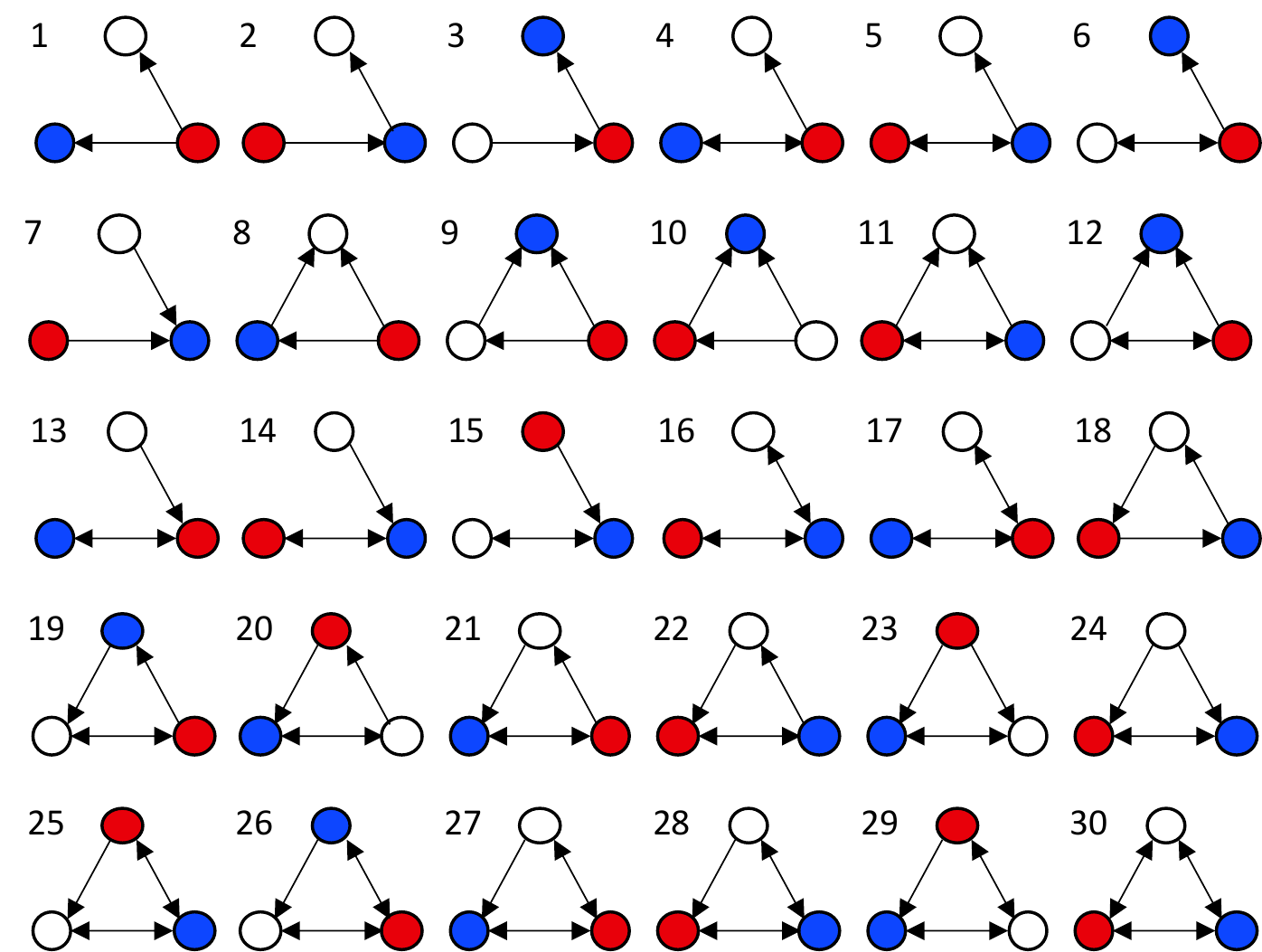}
        \caption{$30$ edge roles (also known as, edge orbits) }
     \end{subfigure}
      \caption{\label{fig:graphlet_and_role} (a) The 13 graphlets \cite{prvzulj2007biological} with three nodes. (b) The 30 node roles \cite{prvzulj2007biological} within the graphlets (see the positions of black nodes). (c) The 30 edge roles within the graphlets (see the positions of edges from a red node to a blue node).
      }
\end{figure*}

At the node and edge levels, we investigate how local structures around each node and edge in their early stage signal their future importance.
Specifically, as local structures, we consider \textit{node roles} (formally, node automorphism orbits \cite{prvzulj2007biological}) and \textit{edge roles} \cite{hovcevar2016computation}, which are roughly sets of symmetric positions of nodes and edges within graphlets.
We also demonstrate that the counts of the roles taken by each node and edge in their early stage are more informative than previously-used features \cite{yang2014predicting}, and they are complementary to simple global features (e.g., total counts of nodes and edges) for the task of predicting future centralities (specifically, in-degree, betweenness~\cite{freeman1977set}, closeness~\cite{bavelas1950communication}, and PageRank \cite{page1999PageRank}).

We summarize our contributions as follows: 
\bit
    \item \textbf{Patterns:} We make several interesting observations about the temporal evolution of graphlets: a surprising similarity in graphs from the same domain and local-structural signals regarding the future importance of nodes and edges.
    \item \textbf{Tool:} We introduce graphlet transition graphs, which is an effective tool for measuring the similarity of local dynamics in temporal graphs of different sizes.
    \item \textbf{Prediction:} We enhance the prediction accuracy of the future importance of nodes and edges by introducing role-based local features, which are complementary to global features.
\eit
\noindent\textbf{Reproducibility:} The code and the datasets are available at 
\url{https://github.com/deukryeol-yoon/graphlets-over-time}.

In Section~\ref{section:prelim}, we introduce basic concepts, notations, and datasets. In Section~\ref{section:graph}, we present our graph-level analysis. 
In Section~\ref{section:node} and Section~\ref{section:edge}, we present our node-level and edge-level analyses. 
In Section~\ref{section:relwork}, we present a brief survey of related works.
In Section~\ref{section:conclusion}, we conclude our work.
\section{Basic Concepts, Notations, and Data}\label{section:prelim}

In this section, we first introduce some basic concepts and notations. Then, we describe the nine datasets used in this paper.

\subsection{Basic Concepts and Notations}
\label{sec:prelim:concept}

\smallsection{Temporal Graph:} 
A \textit{temporal graph} $\SG = (\SV, \SE, \ST)$ consists of a set of nodes $\SV$, a set of directed edges $\SE:=\{e_1,\cdots,e_{|\SE|}\}$, and a multiset of edge arrival times $\ST:=[t_1,\cdots, t_{|\SE|}]$.
For each directed edge $e_i\in \SE$,  we use $t_i \in \ST$ to denote the arrival time of $e_i$.
We use $u \rightarrow v$ to denote a directed edge from a node $u$ to a node $v$, and the nodes $u$ and $v$ are \textit{adjacent} if $u \rightarrow v$ or $v \rightarrow u$ exists.
From now on, we will use the term \textit{edge} to indicate a directed edge when there is no ambiguity.

\begin{table}[t]
	\centering
	\caption{\label{tab:notations} Table of symbols.}
    \resizebox{\columnwidth}{!}{
	\begin{tabular}{ r | l }
		\toprule
		{\bf Notation} & {\bf Definition} \\
		\midrule
	    $\SG = (\SV, \SE, \ST)$ & temporal graph with nodes $\SV$, edges $\SE$, and times $\ST$  \\
	    $\SGT=(\SVT,\SET)$ & snapshot of $\SG$ at time $t$ \\
	    \midrule
	    $\SGR= (\SV, \SE, \STR)$ & a temporal graph randomized from $\SG$ \\
	    $\SGRT=(\SVRT,\SERT)$ & snapshot of $\SGR$ at time $t$ \\
	    \midrule
	    $\SNTi(v)$ & count of node role $i$ at a node $v$ in $\SGT$ \\
	    
		\bottomrule
	\end{tabular}}
\end{table}


\smallsection{Randomized Graph:} 
A \textit{randomized graph} $\SGR= (\SV, \SE, \STR)$ of $\SG=(\SV,\SE, \ST)$ is obtained by assigning arrival times in $\ST$ to edges in $\SE$ uniformly at random in a one-to-one manner. For each edge $e_i\in \SE$, we use $\tilde{t}_i\in \STR$ to denote the arrival time assigned to it. 

\smallsection{Snapshot:}
We define the \textit{snapshot at time $t$} of $\SG=(\SV, \SE, \ST)$ as $\SGT=(\SVT,\SET)$ where $\SET:=\{e_i\in \SE:t_i \leq t\}$ and $\SVT\subseteq \SV$ is the endpoints of any edge in $\SET$.
That is, $\SGT$ consists of the nodes and edges arriving at time $t$ or earlier. Similarly, the snapshot at time $t$ of $\SGR= (\SV, \SE, \STR)$ is 
$\SGRT=(\SVRT,\SERT)$ where $\SERT:=$ $\{e_i\in \SE:\tilde{t}_i \leq t\}$ and $\SVRT$ is the endpoints of any edge in $\SERT$. 
We define the \textit{neighbors} of a node $v\in\SVT$ in a snapshot $\SGT$ as the nodes adjacent to $v$ in $\SGT$.  
We define the \textit{degree} of a node $v\in\SVT$ in a snapshot $\SGT$, which is denoted by $\DTV$, as the number of directed edges whose endpoints include $v$ in $\SGT$.
We simply use $d(v)$ to denote the degree of the node $v$ in the last snapshot $\SG^{(t_{|\SE|})}$.

\smallsection{Induced Subgraphs:} 
A subgraph of a snapshot $\SGT=(\SVT,\SET)$ is \textit{induced} if and only if it consists of a subset of $\SVT$ and all of the edges connecting pairs of the nodes in the subset. 
Two subgraphs $\SH$ and $\SHP$ are \textit{isomorphic} if there exists a one-to-one mapping $f$ between the nodes of both graphs such that there exists an edge from a node $u$ to a node $v$ in $\SH$ if and only if there exists an edge from the node $f(u)$ to the node $f(v)$ in $\SHP$.

\smallsection{Graphlets:}
A \textit{graphlet} is the set of induced subgraphs that are isomorphic to each other.
In this paper, we limit our attention to the $13$ graphlets consisting of three connected nodes. 
An induced subgraph is called an \textit{instance} of graphlet $k$ if it is isomorphic to the $k$-th graph in Figure~\ref{fig:graphlet_and_role}(a).



\smallsection{Node Roles:} 
Consider an induced subgraph $\SH$ with a node set $\SV'$. 
An \textit{automorphism} of $\SH$ is an isomorphism between $\SH$ and itself. i.e., an automorphism of $\SH$ is a one-to-one mapping between nodes of $\SH$ such that there exists an edge from a node $u$ to a node $v$ in $\SH$ if and only if there exists an edge from the node corresponding to $u$ to the node corresponding to $v$ in $\SH$.  
If denoting the set of automorphisms of $\SH$ by $Aut(\SH)$,
the \textit{automorphism orbit} of a node $u\in \SV'$ is the set $\{y \in \SV' : \exists g \in Aut(\SH) \text{ s.t. } y = g(u)\}$ of nodes \cite{prvzulj2007biological}.
Formally, \textit{node roles} are node automorphism orbits, and roughly, they are sets of symmetric positions of nodes within graphlets.
Figure~\ref{fig:graphlet_and_role}(b) (see the positions of black nodes) shows all $30$ node roles in the $13$ graphlets that we consider.
We say a node $v$ ``takes'' node role $i$ in a graphlet instance if there exists an isomorphism of the graphlet instance and the $i$-th graph in Figure~\ref{fig:graphlet_and_role}(b) that maps $v$ to the black node in the graph.
We define the \textit{count of node role $i$ at a node $v$} as the number of graphlet instances where $v$ takes $i$, and $\SNTi(v)$ denotes the count at a snapshot $\SGT$.

\begin{table}[t]
	\centering
	\caption{\label{tab:data} Summary of nine real-world temporal graphs used throughout this paper.} 
     \resizebox{\columnwidth}{!}{
		\begin{tabular}{c | c | c | c | c }
			\toprule
			{\bf Domain }& {\bf Dataset} & {$|V|$} & {\bf$|E_T|$} & {\bf Period}\\
			\midrule
			\multirow{3}{*}{Citation}
			& \ul{\bf{\smash{HepPh}}}    & $34,565$      & $346,849$     & 9 years \\ 
			& \ul{\bf{\smash{HepTh}}}    & $18,477$      & $136,190$     & 10 years \\
			& \ul{\bf{Patent}}           & $3,774,362$   & $16,512,782$  & 25 years \\
			\midrule
			\multirow{3}{*}{Email/Message}
			& \ul{\bf{Enron}}               & $55,655$      & $209,203$     & 24 years \\
		    & \ul{\bf{EU}}                  & $986$         & $24,929$      & 1.5 years \\
		    & \ul{\bf{\smash{College}}}   & $1,899$       & $20,296$      & 0.5 years \\
		    \midrule
		    \multirow{3}{*}{Online Q/A}
		    & \ul{\bf{Ask}}ubuntu             & $159,316$     & $262,106$     & 6 years \\
		    & \ul{\bf{Math}}overflow          & $24,818$      & $90,489$      & 7 years \\
		    & \ul{\bf{Stack}}overflow         & $2,601,977$   & $16,266,395$  & 8 years \\
			\bottomrule
		\end{tabular}}
\end{table}

\smallsection{Edge Roles:} 
Consider an induced subgraph $\SH$ with an edge set $\SE'$.
Based on the concepts defined above, we define the \textit{edge role} of an edge $u\rightarrow v$ is the set $\{x\rightarrow y \in \SE' : \exists g \in Aut(\SH) \text{ s.t. } x = g(u) \wedge y = g(v)\}$ of edges.
Roughly, edge roles are the sets of symmetric positions of edges within graphlets.
Figure~\ref{fig:graphlet_and_role}(c) (see the positions of edges from a red node to a blue node) shows all $30$ edge roles in the $13$ considered graphlets.
We say an edge $u\rightarrow v$ ``takes'' edge role $j$ in a graphlet instance if there exists an isomorphism of the graphlet instance and the $j$-th graph in Figure~\ref{fig:graphlet_and_role}(c) that maps $u$ and $v$ to the red node and the blue node, respectively, in the graph.
We define the \textit{count of edge role $j$ at an edge $e$} as the number of graphlet instances where $e$ takes $j$.

\subsection{Datasets}\label{section:datasets}
Throughout this paper, we use the nine real-world temporal graphs from the three domains, which are summarized in Table~\ref{tab:data}. 

\smallsection{Citation Graphs:} 
Each node is a paper or a patent. Each directed edge from a node $u$ to a node $v$ means that $u$ cites $v$. 

\smallsection{Email/Message Graphs:}
Each node is a user. Each directed edge from a node $u$ to a node $v$ indicates that $u$ sends $v$ emails (messages). 

\smallsection{Online Q/A Graphs:}
Each node is a user. Each directed edge from a node $u$ to a node $v$ means that $u$ answers $v$'s questions.
\begin{table*}[t]
\begin{center}
\vspace{-2mm}
\caption{\label{tab:graphlet_evolution} 
Ratios of graphlets over time. The colors in the plots are matched with the colors of the graphlets in Figure~\ref{fig:graphlet_and_role}, and the evolution ratio means the fraction of edges added to graphs.
The evolution patterns in real-world graphs vary depending on domains (Observation~\ref{obs:graphlet_evolve}), and they are clearly distinguished from the evolution patterns in randomized graphs (Observation~\ref{obs:graphlet_evolve:random}). 
}
\scalebox{0.95}{
\begin{tabular}{c|ccc|ccc}
    \toprule
    & \multicolumn{3}{c|}{Temporal graph $\SG$} & \multicolumn{3}{c}{Randomized graph $\SGR$} \\
    \hline
    \parbox[t]{2mm}{\multirow{8}{*}{\rotatebox[origin=c]{90}{\ \ \ \ \ Citation}}} &  
        \raisebox{-.9\totalheight}{\includegraphics[width=0.1475\textwidth]{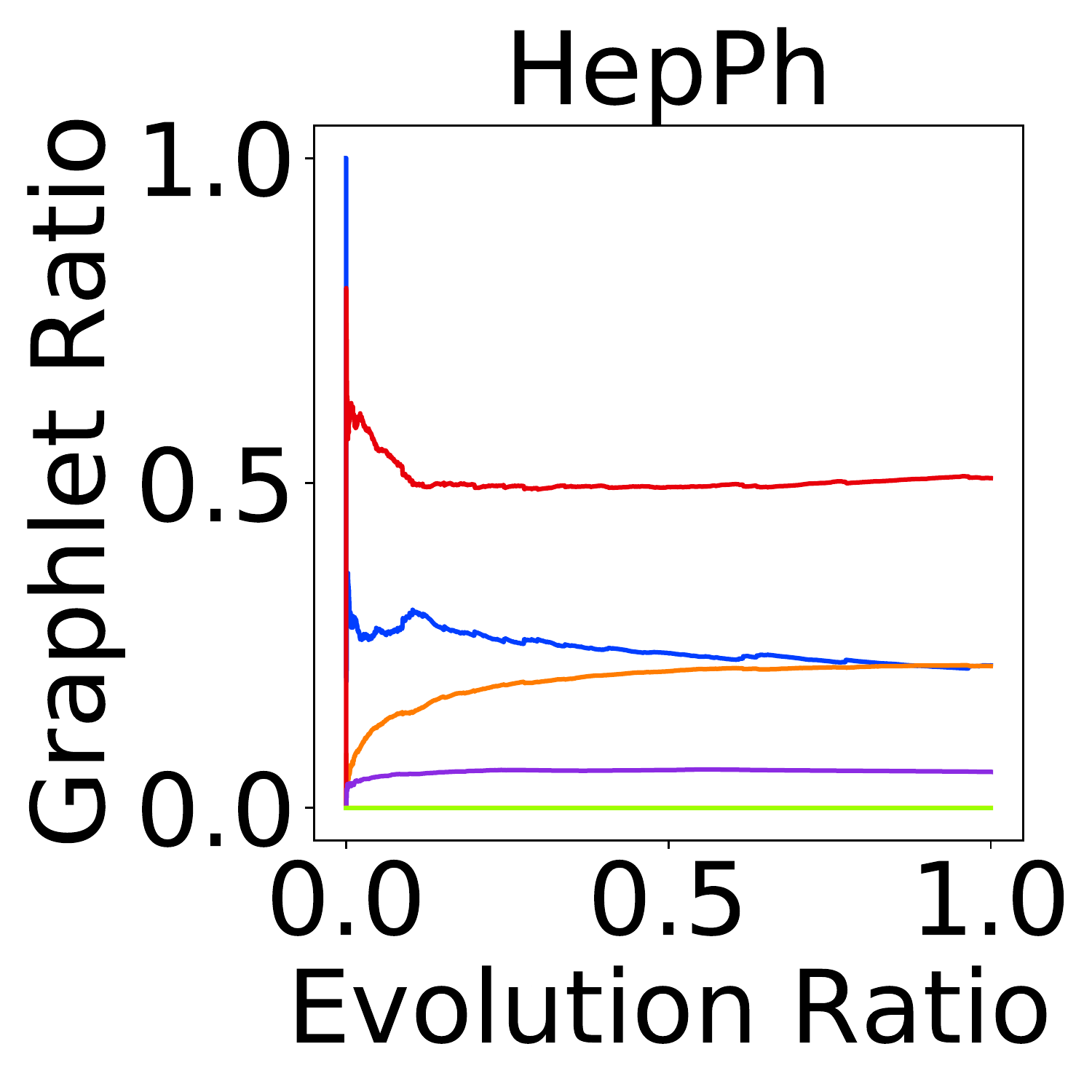}} &
        \raisebox{-.9\totalheight}{\includegraphics[width=0.1475\textwidth]{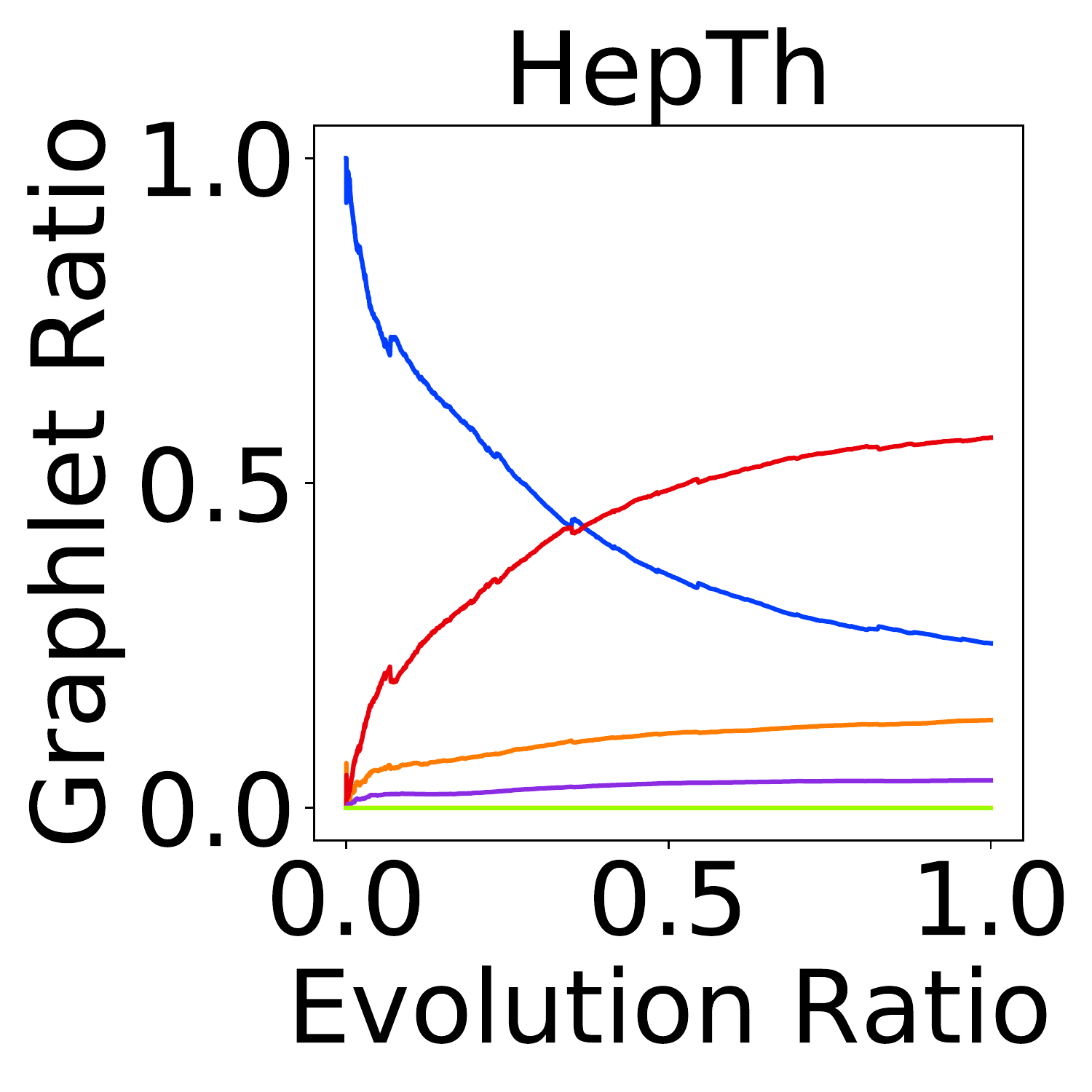}} & 
        \raisebox{-.9\totalheight}{\includegraphics[width=0.1475\textwidth]{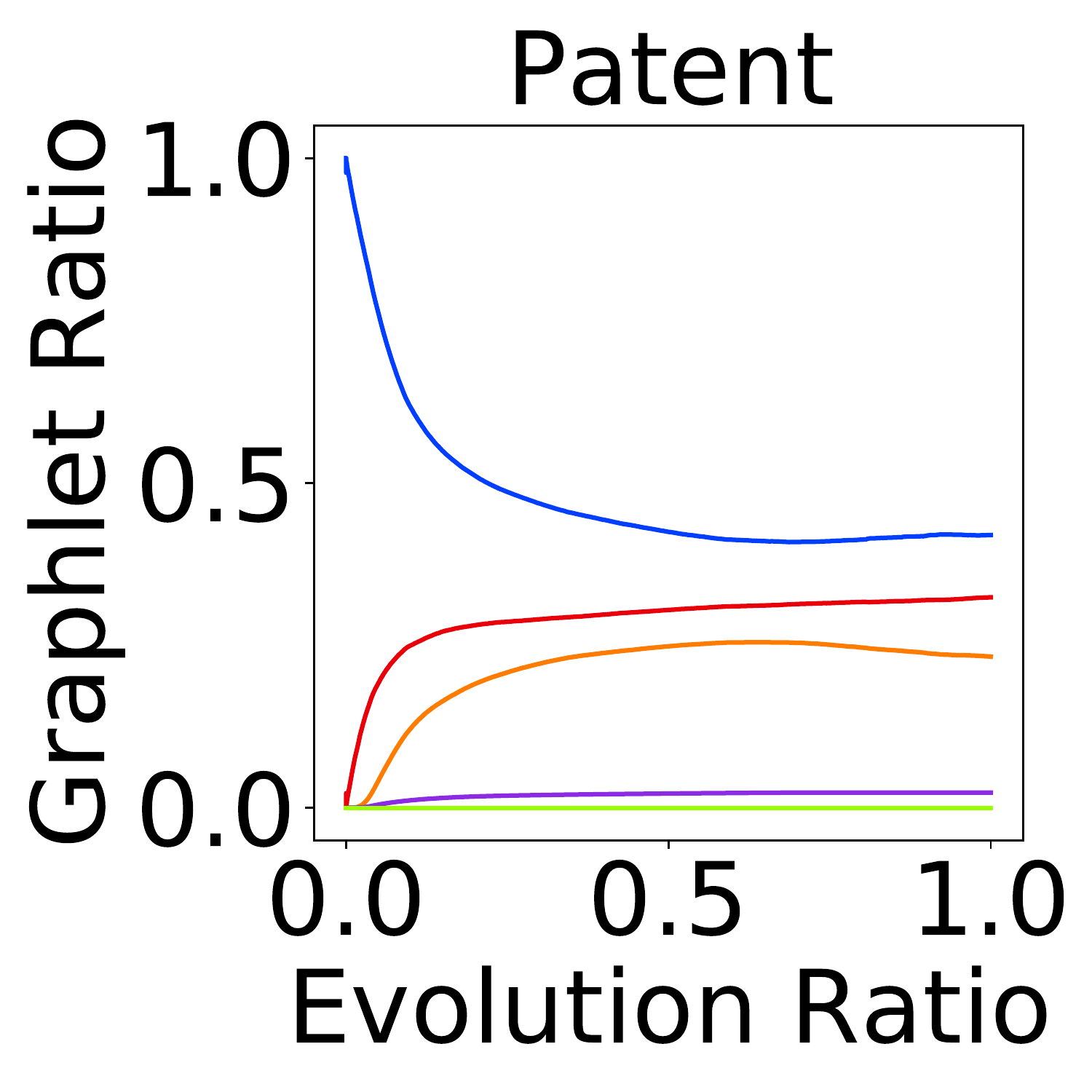}} & 
        \raisebox{-.9\totalheight}{\includegraphics[width=0.1475\textwidth]{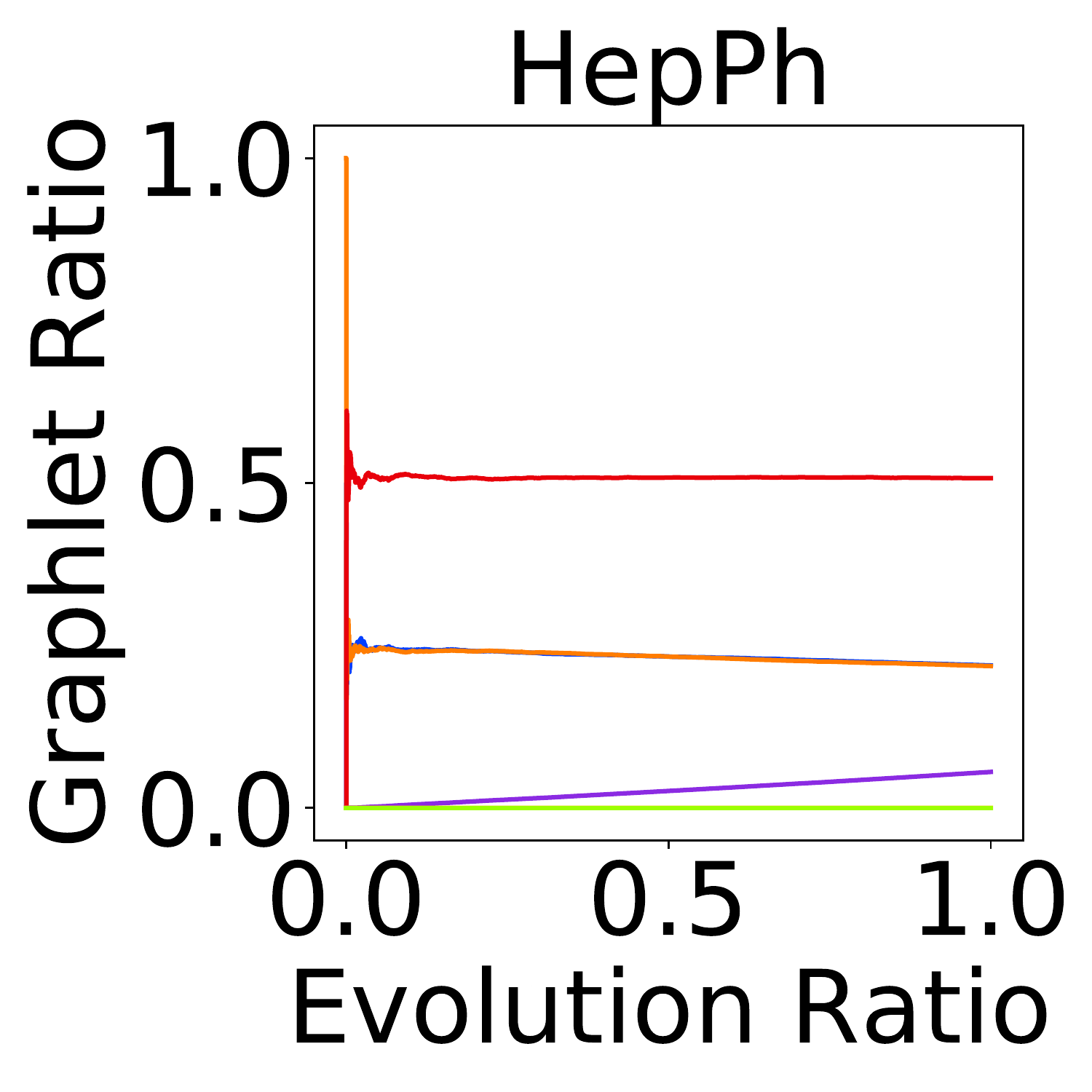}} & \raisebox{-.9\totalheight}{\includegraphics[width=0.1475\textwidth]{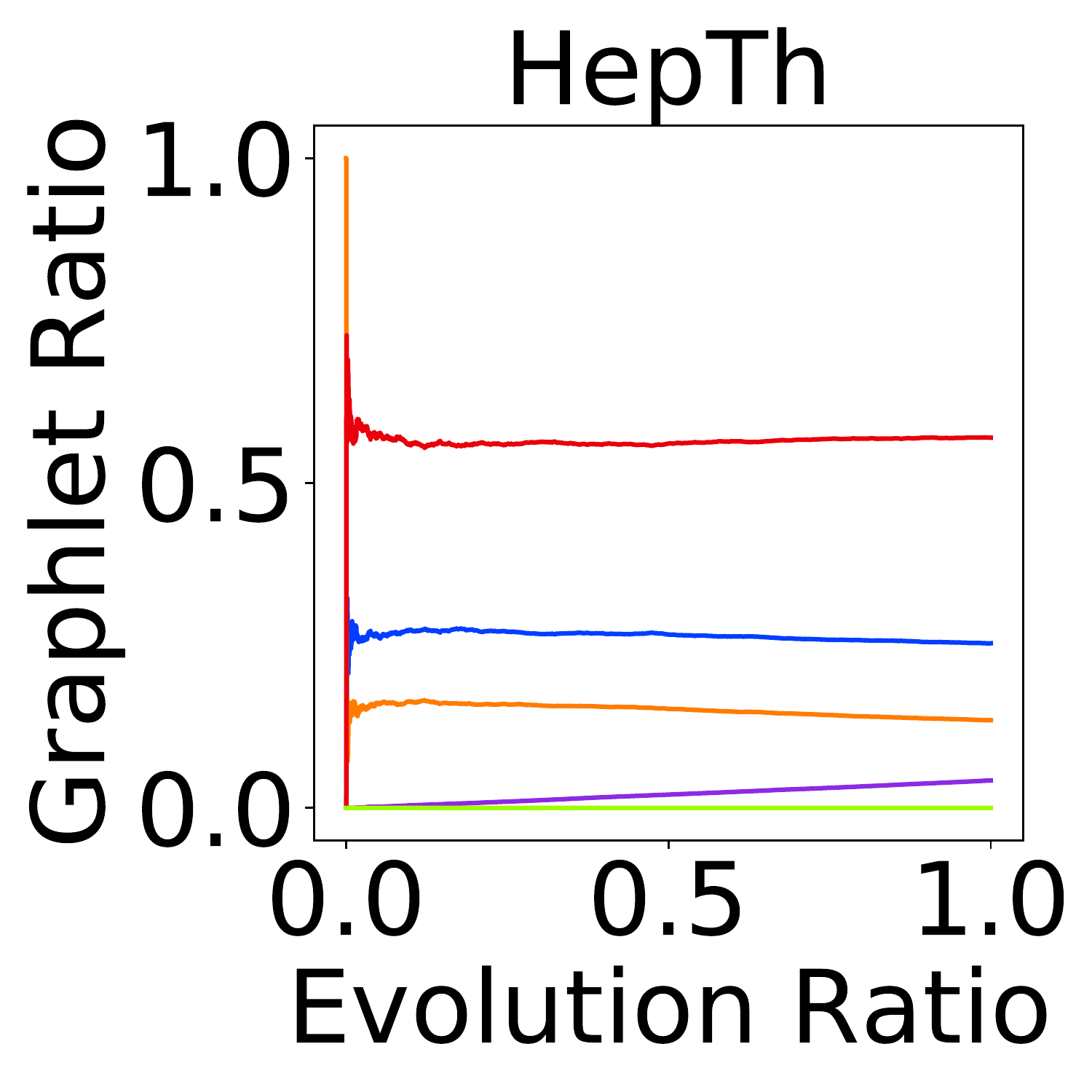}} &
        \raisebox{-.9\totalheight}{\includegraphics[width=0.1475\textwidth]{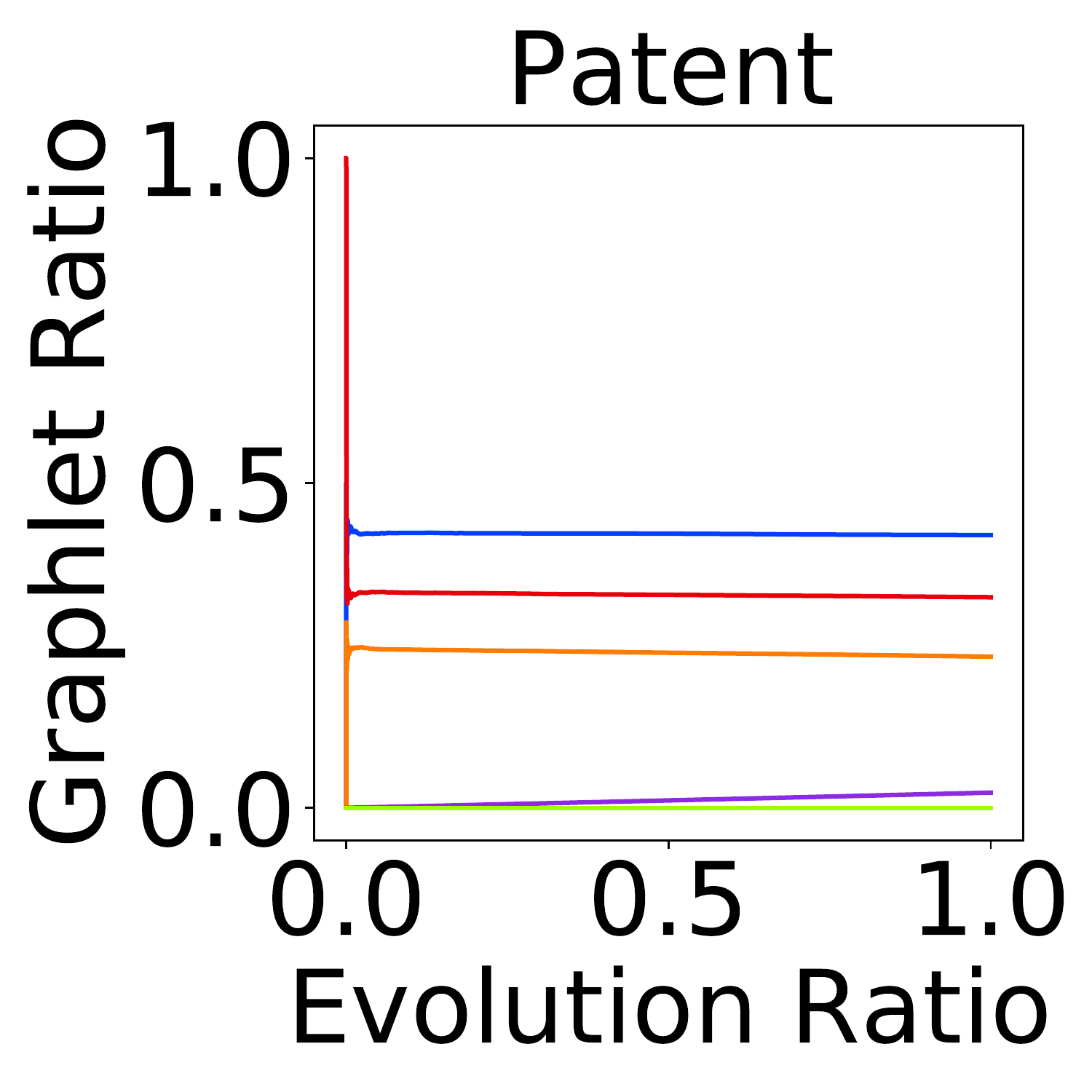}} \\
    \hline
    \parbox[t]{2mm}{\multirow{8}{*}{\rotatebox[origin=c]{90}{\ \ \ \ \ Email/Message}}} &  
        \raisebox{-.9\totalheight}{\includegraphics[width=0.1475\textwidth]{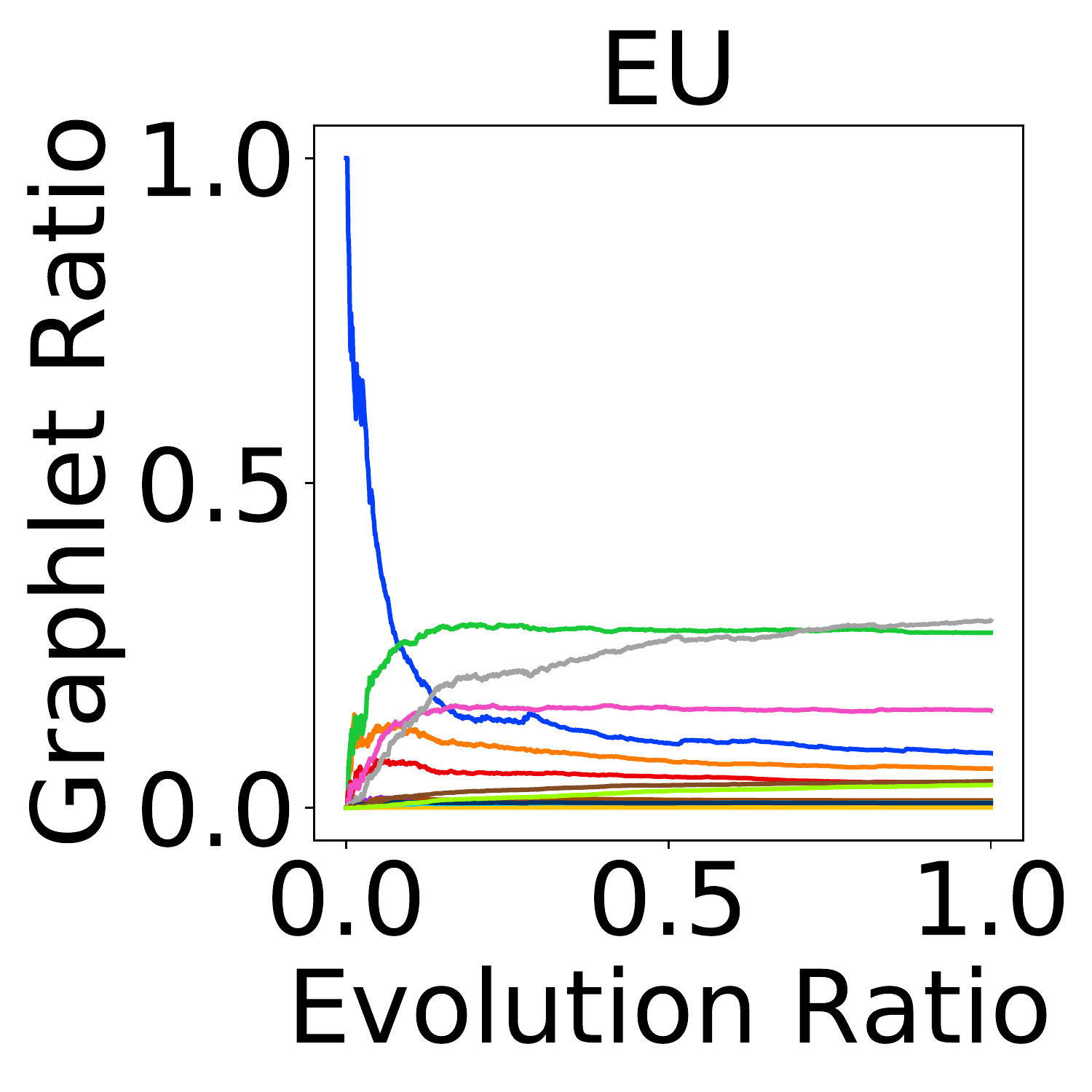}} & \raisebox{-.9\totalheight}{\includegraphics[width=0.1475\textwidth]{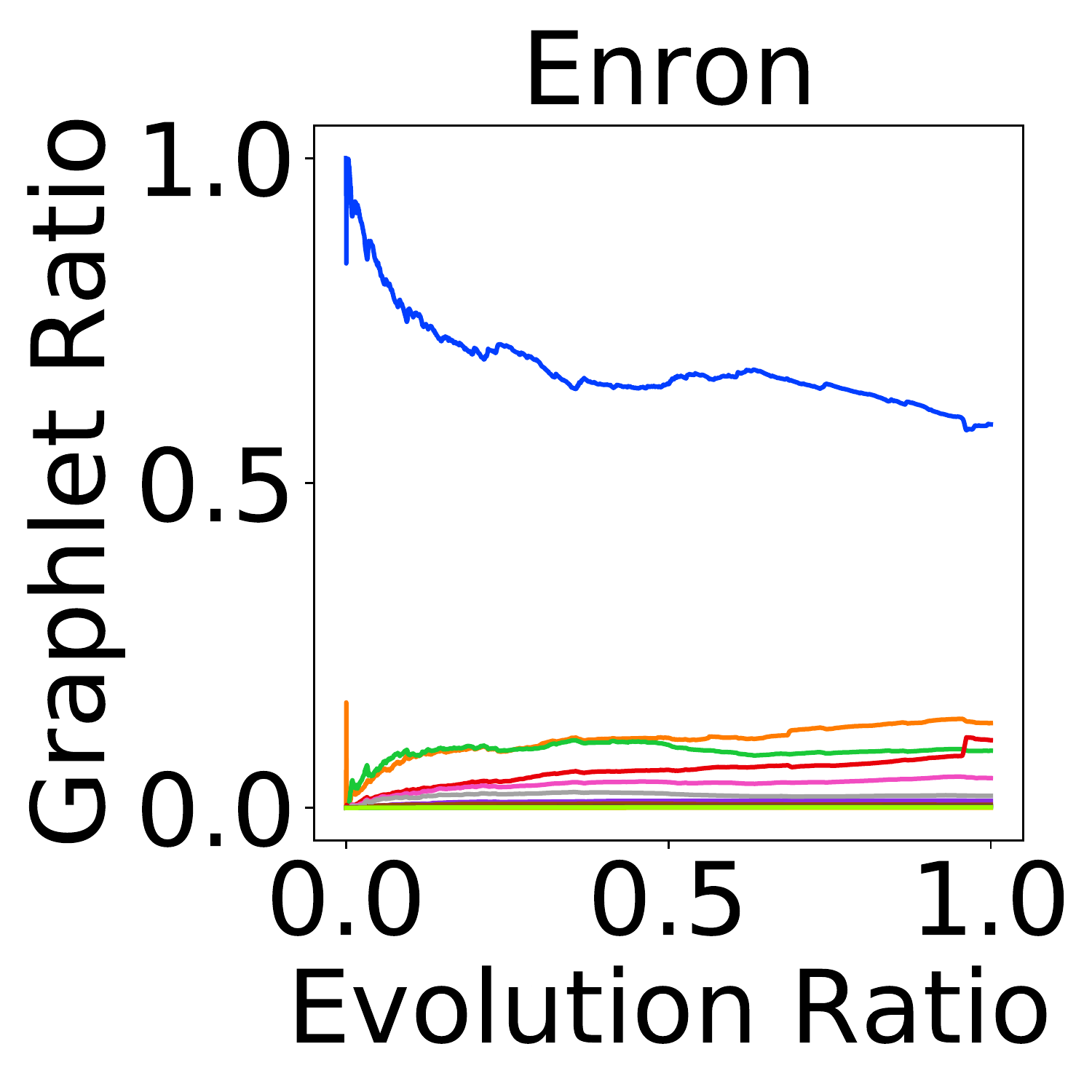}} & 
        \raisebox{-.9\totalheight}{\includegraphics[width=0.1475\textwidth]{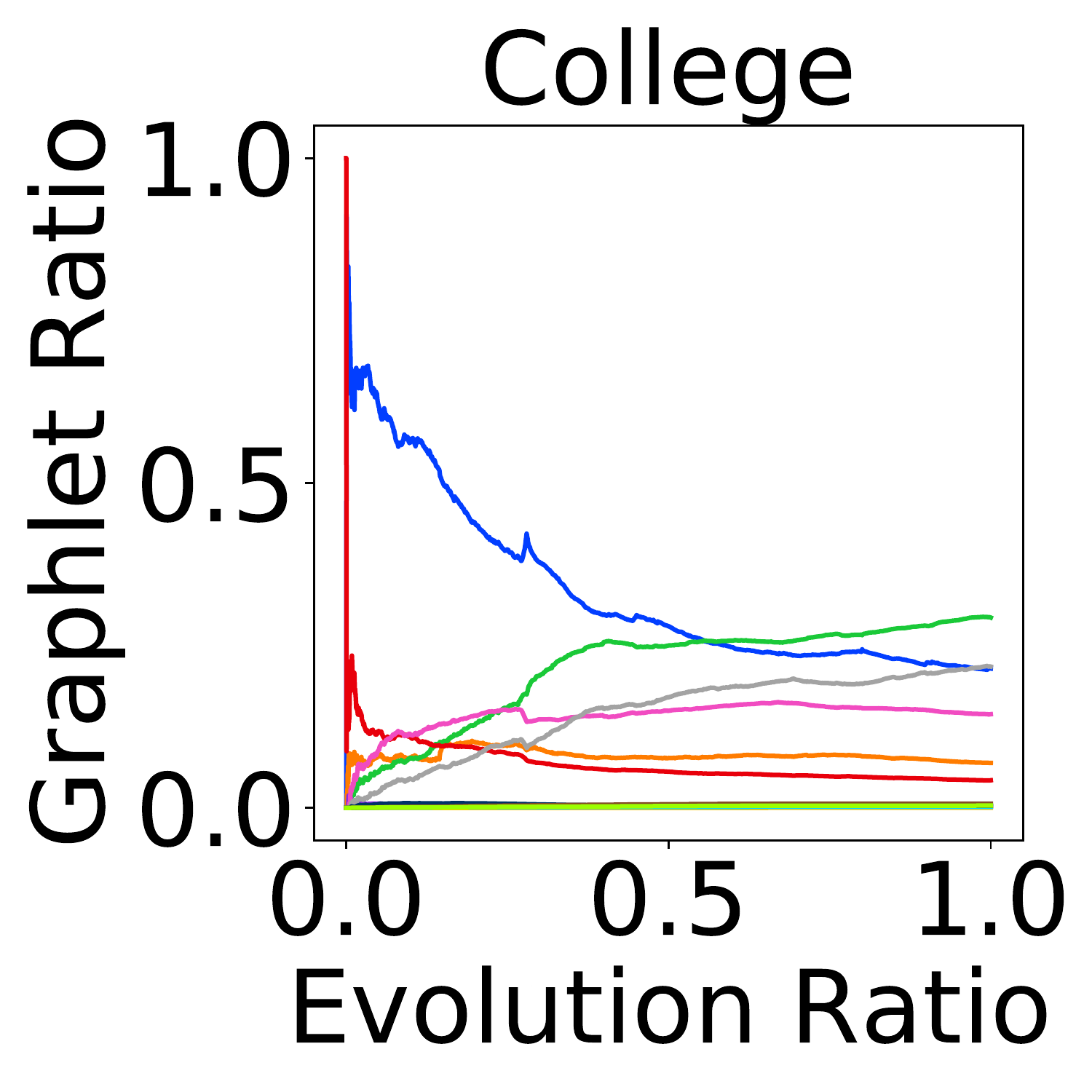}} & 
        \raisebox{-.9\totalheight}{\includegraphics[width=0.1475\textwidth]{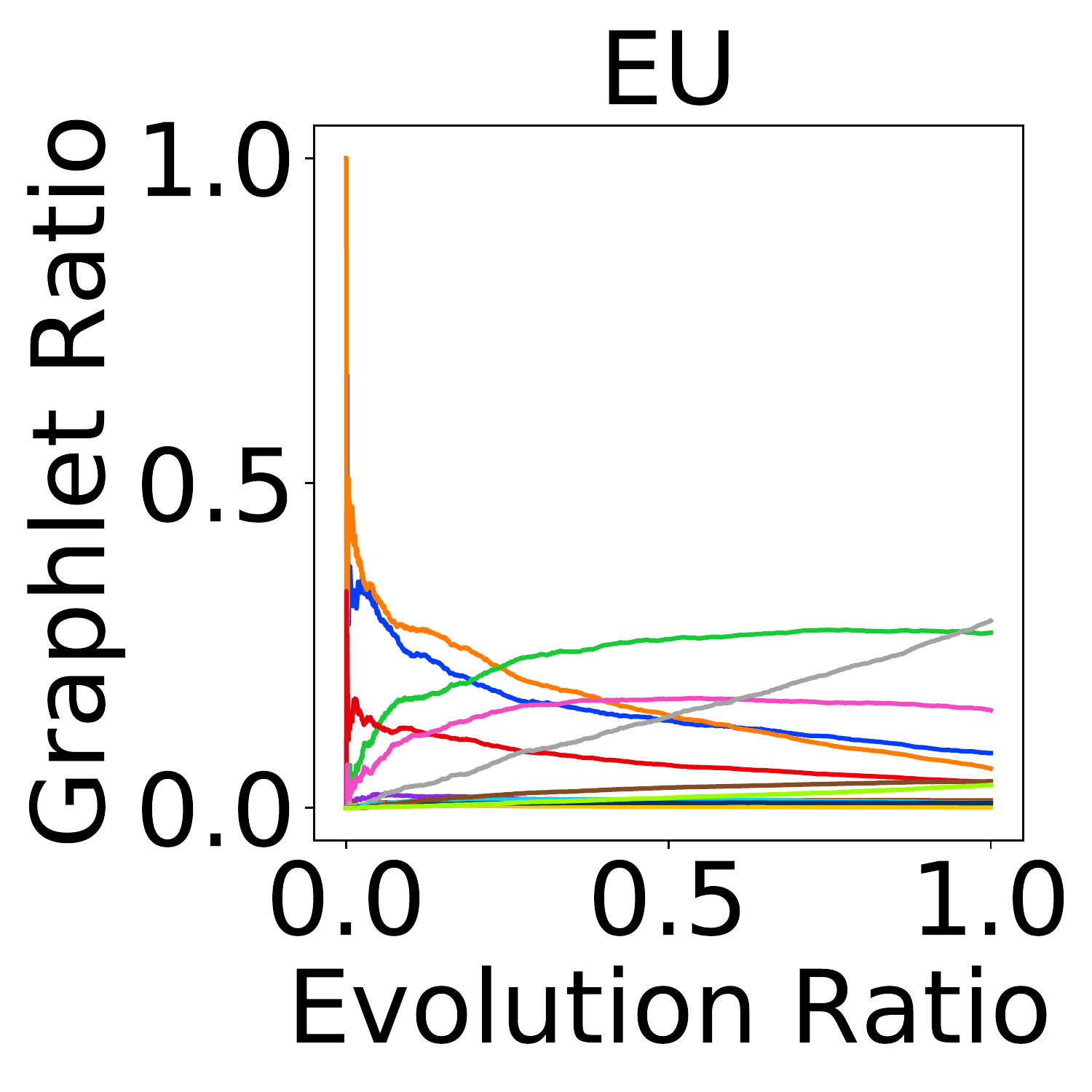}} & \raisebox{-.9\totalheight}{\includegraphics[width=0.1475\textwidth]{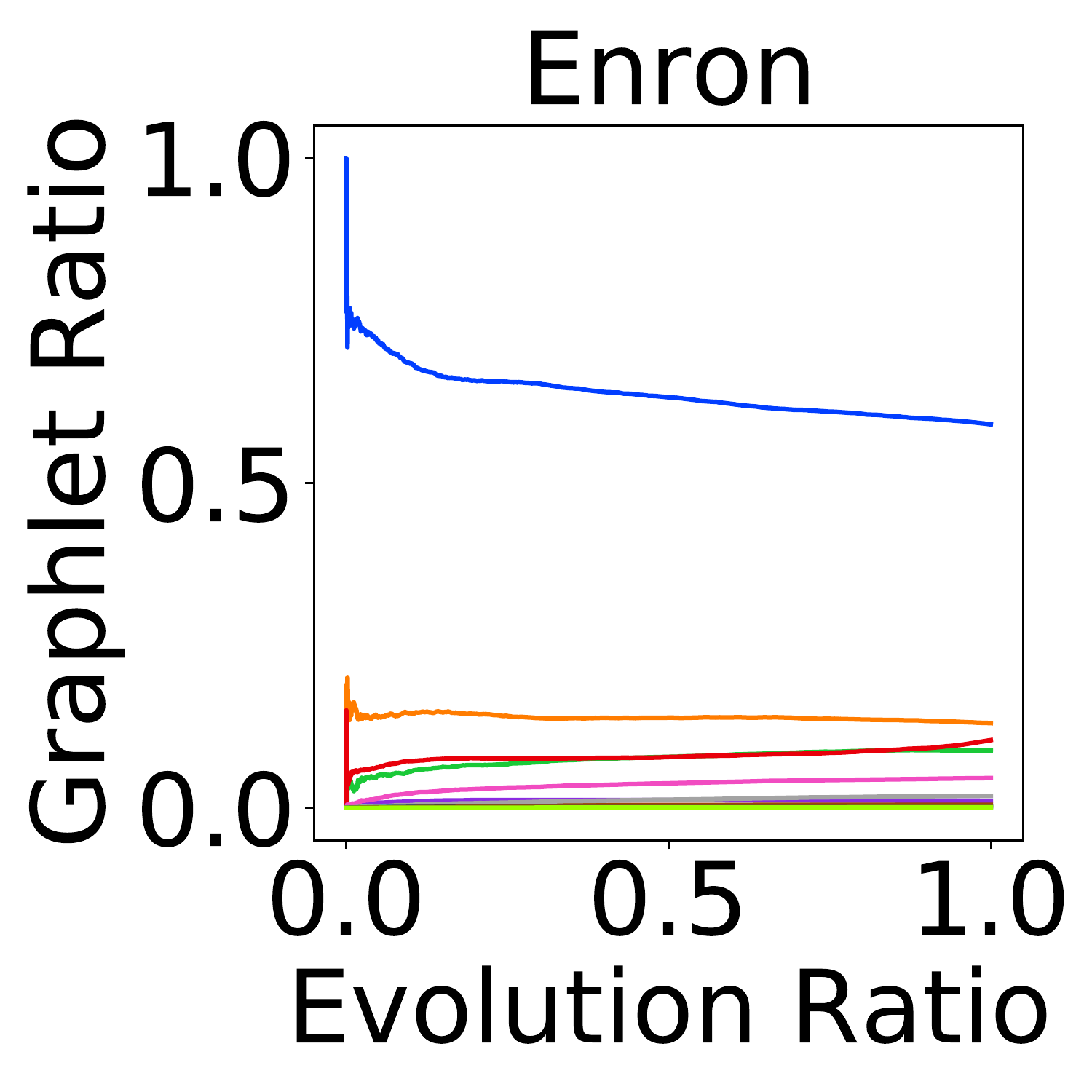}} &
        \raisebox{-.9\totalheight}{\includegraphics[width=0.1475\textwidth]{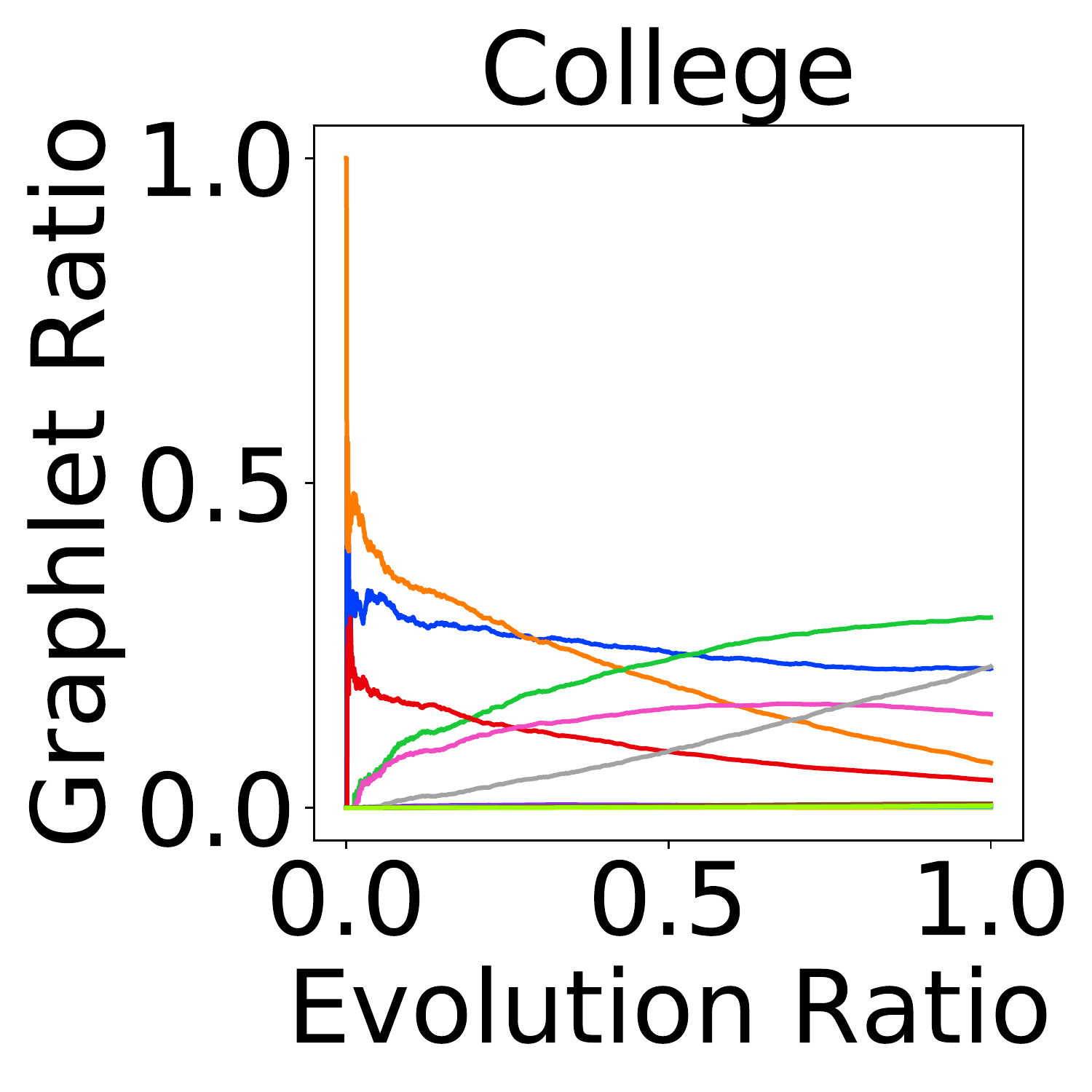}} \\
    \hline
    \parbox[t]{2mm}{\multirow{8}{*}{\rotatebox[origin=c]{90}{\ \ \ \ Online Q/A}}} &  
        \raisebox{-.9\totalheight}{\includegraphics[width=0.1475\textwidth]{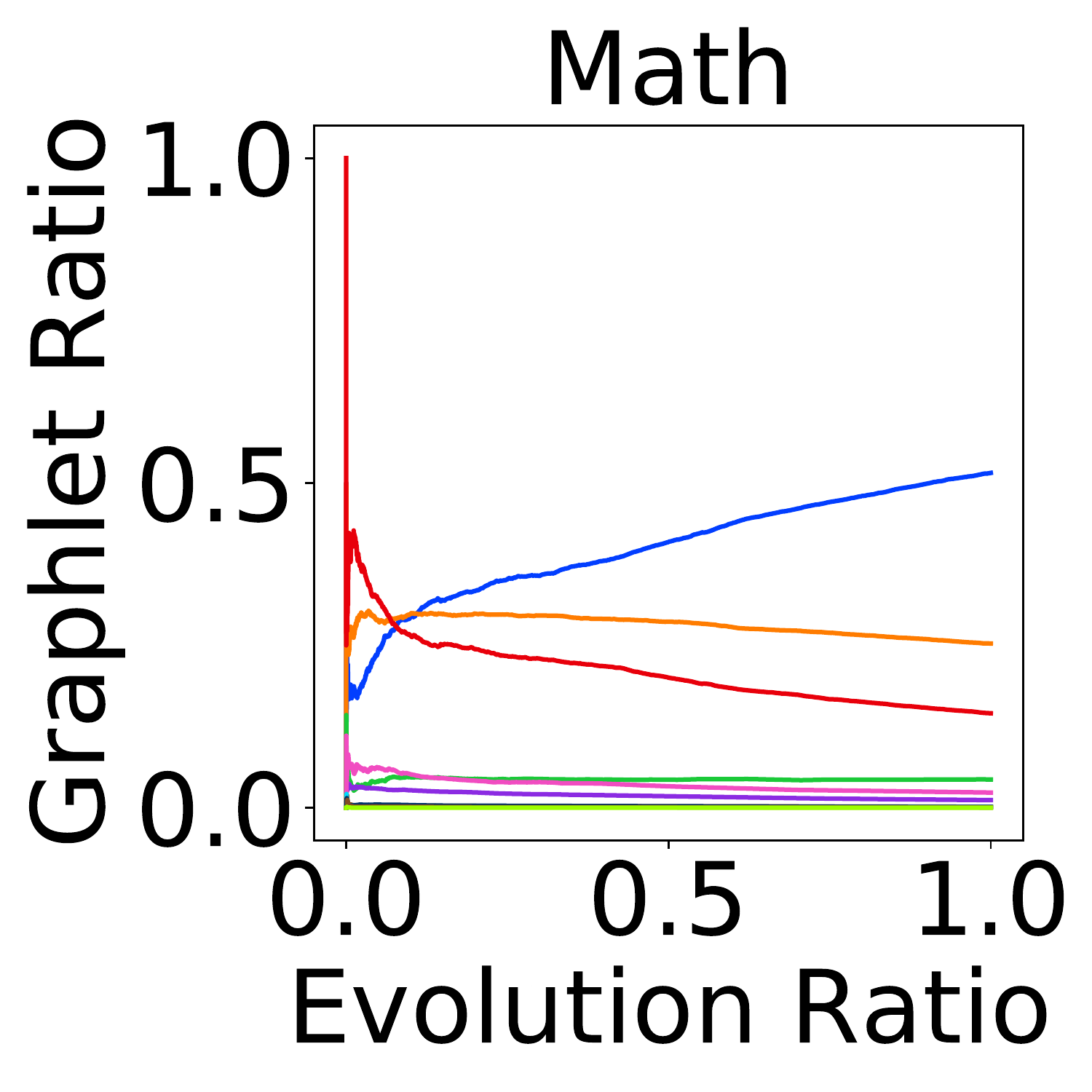}} & \raisebox{-.9\totalheight}{\includegraphics[width=0.1475\textwidth]{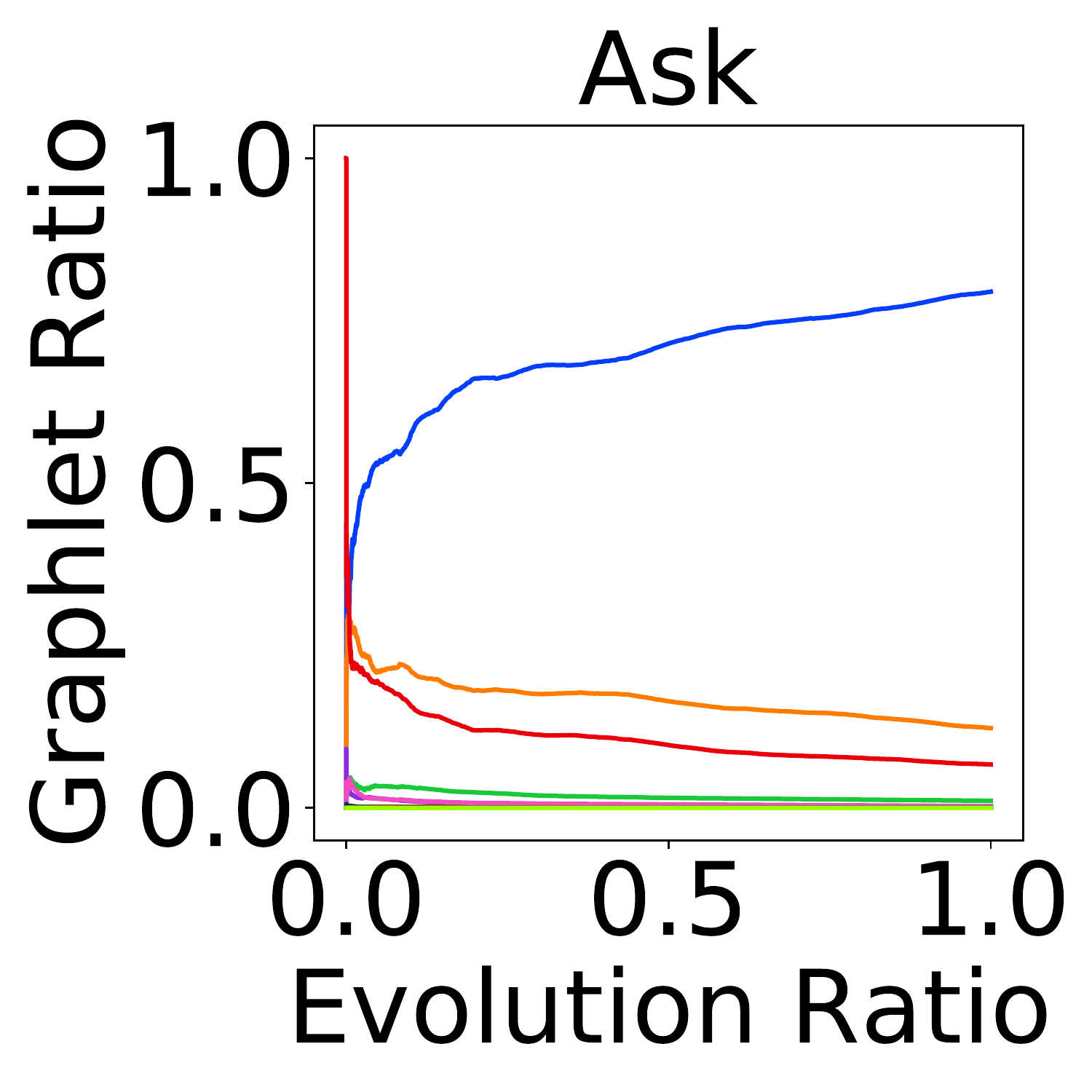}} & 
        \raisebox{-.9\totalheight}{\includegraphics[width=0.1475\textwidth]{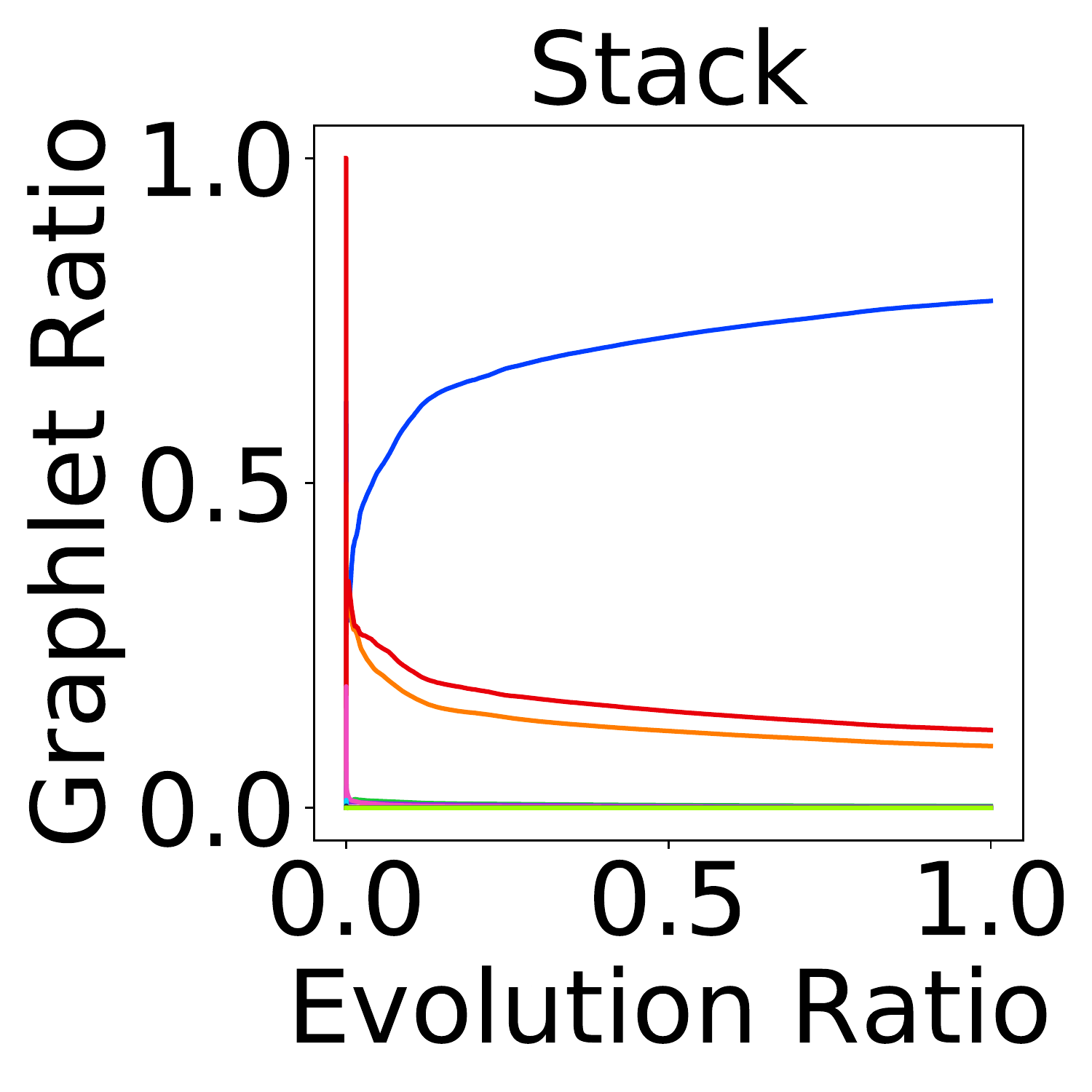}} & 
        \raisebox{-.9\totalheight}{\includegraphics[width=0.1475\textwidth]{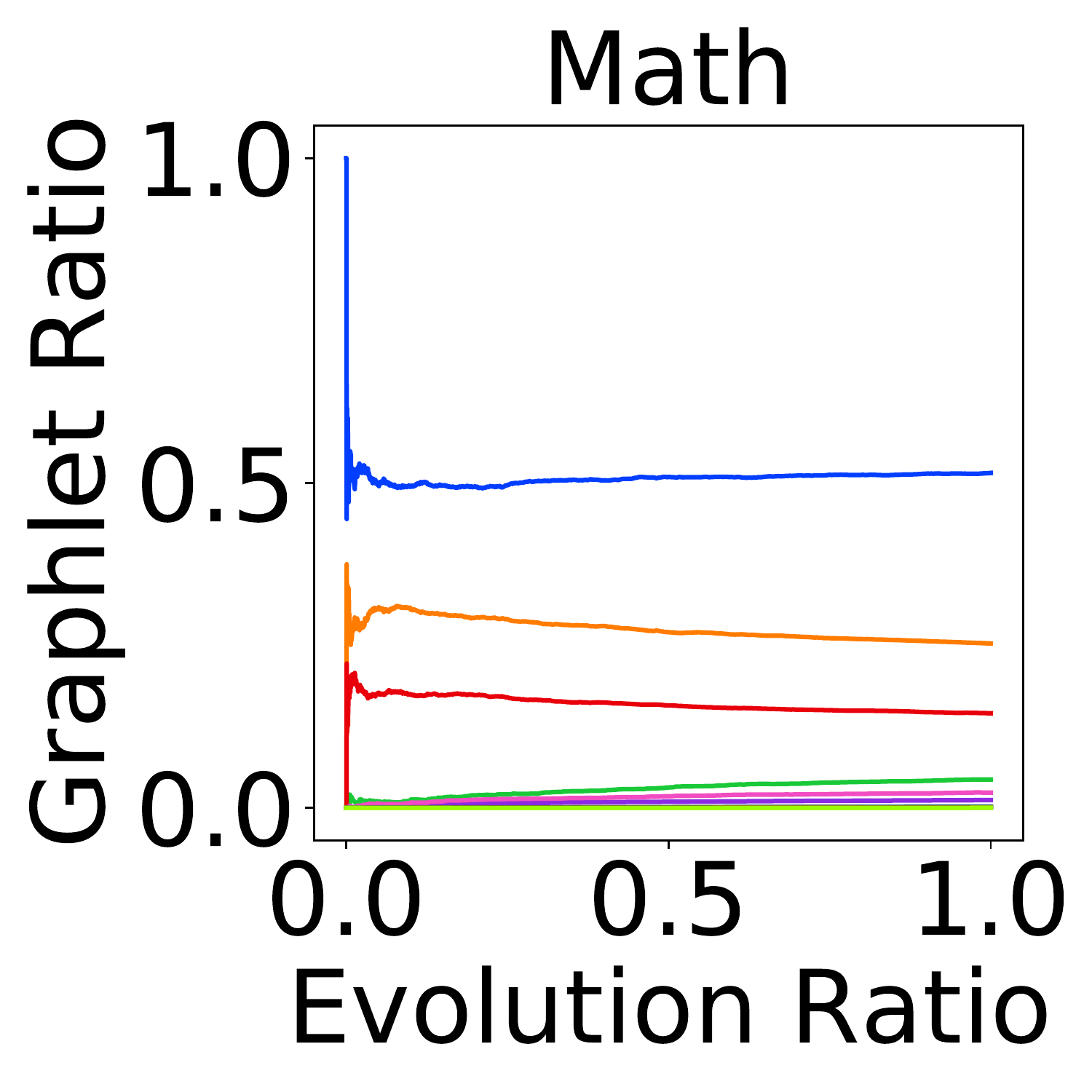}} & \raisebox{-.9\totalheight}{\includegraphics[width=0.1475\textwidth]{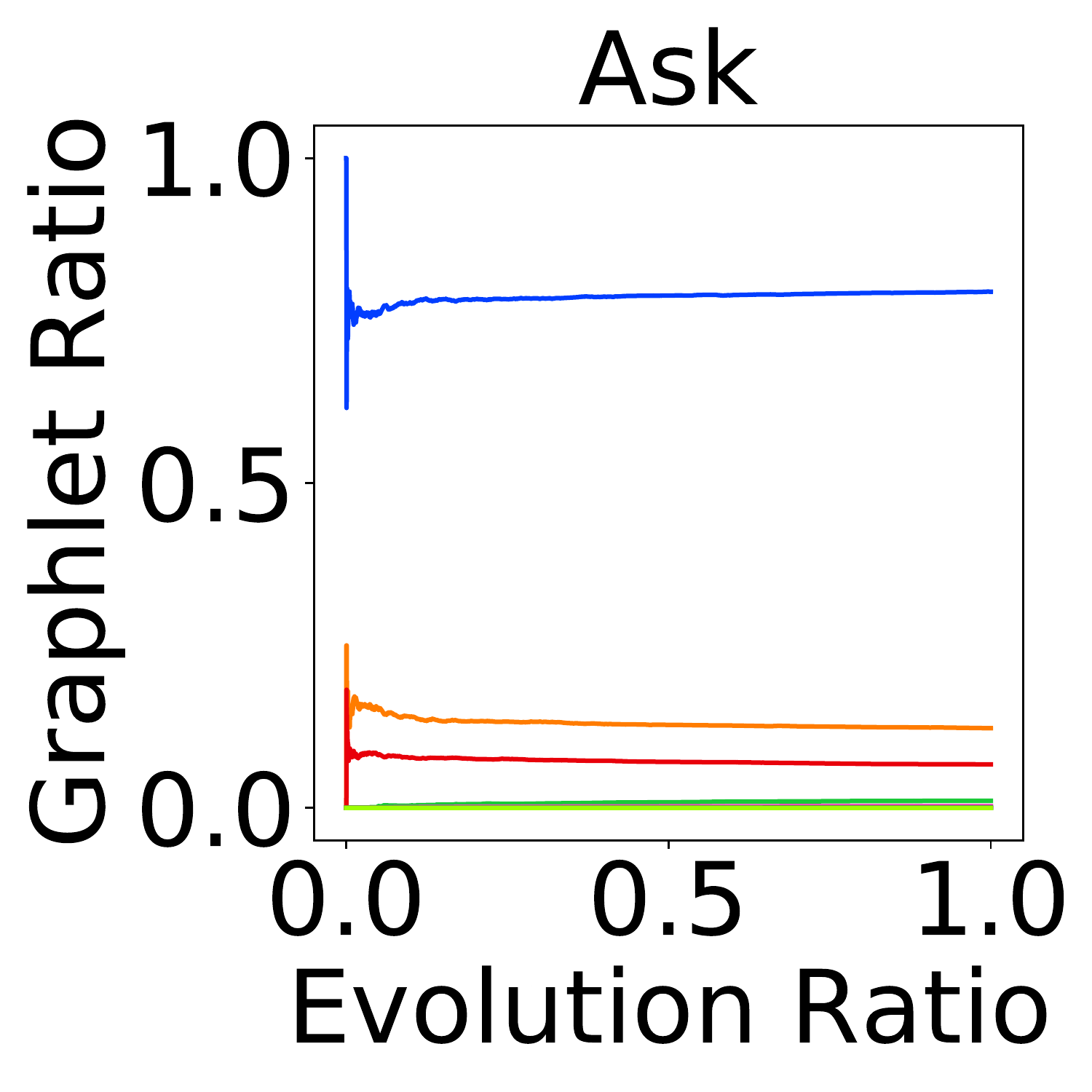}} &
        \raisebox{-.9\totalheight}{\includegraphics[width=0.1475\textwidth]{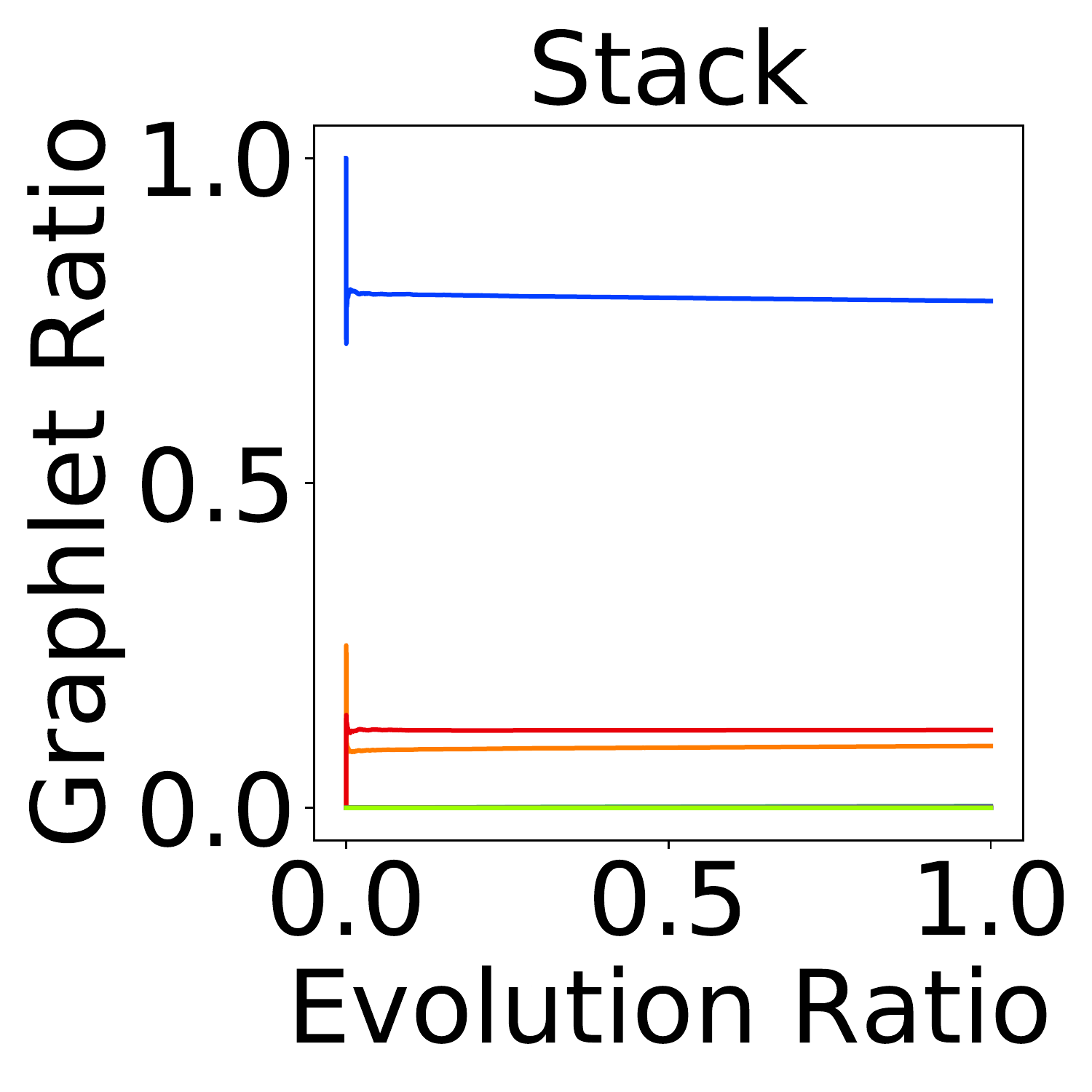}} \\
    \bottomrule
\end{tabular}}
\end{center}
\end{table*}

\begin{algorithm}[t]
\small
	\caption{Counting the Instances of Each Graphlet in a Temporal Graph \label{alg:track_graphlet}} 
	\SetKwInOut{Input}{Input}
	\SetKwInOut{Output}{Output}
	\SetKwProg{Procedure}{Procedure}{}{}
	\SetKwFunction{update}{UPDATE}
	
	\Input{Temporal Graph $\SG=(\SV, \SE, \ST)$}
	\Output{The count of the instances of each graphlet in $\SG$}
	\vspace{3pt}
	
	Initialize the count of the instances of each graphlet to zero \\
	Initialize $\SE$ to an empty set \\
    \vspace{3pt}
    
	\For{\upshape\textbf{each edge }$e_i=u \rightarrow v$ in arrival order}{
	    
	    $\SN \leftarrow $ union of the neighbors of $u$ and the neighbors of $v$ (except for $u$ and $v$) \label{alg:track_graphlet:line:neighbors}
	    
	    \For{\upshape\textbf{each} $w\in \SN$}{
	        \If{$u$, $v$ \textbf{and} $w$ form a graphlet instance}{ \label{alg:track_graphlet:line:if_instance}
	         decrement the count of the graphlet of the instance formed by $u$, $v$ and $w$ \label{alg:track_graphlet:line:decrease}
	         \\
            
            }
         }
         
         add $u\rightarrow v$ to $\SE$
         
         \For{\upshape\textbf{each} $w\in \SN$}{
              increment the count of the graphlet of the instance formed by $u$, $v$ and $w$ \label{alg:track_graphlet:line:increase}
	         \\
         }
	    \vspace{3pt}
	    
	}
	
	\textbf{return} count of the instances of each graphlet instances \\
\end{algorithm}

\section{Graph Level Analysis}\label{section:graph}

In this section, we study the evolution of local structures in real-world graphs.
We examine the dynamics in the distribution of graphlet instances and transitions between graphlets.

\subsection{Global Level 1. Graphlets Over Time}
\label{section:graph:time}
We track how the ratio of the instances of each graphlet changes as the considered real-world graphs evolve over time. Our tracking algorithm, which is described in Algorithm~\ref{alg:track_graphlet}, is adapted from StreaM \cite{schiller2015stream}, which maintains the counts of the instances of the $4$-node undirected graphlets in a fully dynamic graph stream, where edges are not just added but also deleted over time. The time complexity of Algorithm~\ref{alg:track_graphlet} is $\Theta(\Sigma_{v\in\SV}(d(v))^2)$, as proven in Theorem~\ref{thm:time:track}.
It should be noticed that, by Lemma~\ref{lem:time:optimality}, the time complexity is $\Theta($the number of instances of all graphlets in the last snapshot$)$, which is the optimal time complexity achievable by any algorithm that counts graphlet instances by enumerating them.

\begin{theorem} \label{thm:time:track} 
The time complexity of Algorithm \ref{alg:track_graphlet} is $\Theta(\Sigma_{v\in\SV}(d(v))^2)$.
\end{theorem}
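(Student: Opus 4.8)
The plan is a standard aggregate (amortized) analysis: bound the cost of the iteration that processes each edge in terms of its endpoints' degrees \emph{at the moment of insertion}, then sum these costs over all edges and regroup the resulting double sum node by node.

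First I would fix the natural implementation model: store each snapshot as adjacency lists backed by hash sets, so that (i) testing whether a given directed edge is present takes $O(1)$ time and (ii) forming the set $\SN$ in Line~\ref{alg:track_graphlet:line:neighbors} (the union of the neighbor lists of $u$ and $v$, minus $\{u,v\}$) takes $\Theta(\DTUi+\DTVi)$ time and produces a set of size $|\SN|\le\DTUi+\DTVi$, where $e_i=u\rightarrow v$ is the edge being processed and the degrees are taken in the snapshot just before $e_i$ is added. For each $w\in\SN$, deciding whether $u,v,w$ induce a graphlet instance and which one needs only a constant number of $O(1)$ adjacency queries (the pair $\{u,v\}$ is already known to be adjacent), so each of the two loops over $w\in\SN$ costs $\Theta(|\SN|)\subseteq O(\DTUi+\DTVi)$; together with the $\Theta(1)$ per-edge overhead this shows the $i$-th iteration runs in $\Theta(\DTUi+\DTVi+1)$ time. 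I expect the lower-bound half of this per-iteration claim to be the one subtle point: it is what forces the implementation to genuinely materialize $\SN$, and it is really the only modeling choice in the argument.

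Next I would sum over all edges to get total running time $\Theta\!\Big(\sum_{i=1}^{|\SE|}(\DTUi+\DTVi+1)\Big)$ and evaluate the sum node by node. Each edge $e_i$ contributes $\DTVi$ on behalf of its endpoint $v$; fixing $v\in\SV$ and listing the edges incident to $v$ in arrival order as $e_{i_1},\dots,e_{i_{d(v)}}$, the degree of $v$ just before $e_{i_j}$ is added is exactly $j-1$ (precisely $e_{i_1},\dots,e_{i_{j-1}}$ are the incident edges already present). Hence $v$'s total contribution is $\sum_{j=1}^{d(v)}(j-1)=\binom{d(v)}{2}$, so $\sum_{i=1}^{|\SE|}(\DTUi+\DTVi)=\sum_{v\in\SV}\binom{d(v)}{2}$. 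Using $\sum_{i=1}^{|\SE|}1=|\SE|$ and $\sum_{v\in\SV}d(v)=2|\SE|$, I would then simplify $\sum_{v\in\SV}\binom{d(v)}{2}+|\SE|=\tfrac12\sum_{v\in\SV}d(v)^2-\tfrac12\sum_{v\in\SV}d(v)+|\SE|=\tfrac12\sum_{v\in\SV}d(v)^2$, which yields the claimed bound $\Theta\!\big(\sum_{v\in\SV}d(v)^2\big)$.

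The hard part is not a calculation but getting the bookkeeping right: the iteration cost must be expressed via the pre-insertion degrees $\DTUi,\DTVi$ (not the final degrees $d(u),d(v)$) so that the per-node terms telescope cleanly into $\binom{d(v)}{2}$, and one should notice the exact cancellation that absorbs the $\Theta(1)$-per-edge overhead, turning the naive $\Theta\!\big(\sum_{v\in\SV}d(v)^2+|\SE|\big)$ into the stated $\Theta\!\big(\sum_{v\in\SV}d(v)^2\big)$. For the matching lower bound both the $\binom{d(v)}{2}$ part and the $|\SE|$ part matter, since either can be the dominant term.
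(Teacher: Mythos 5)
Your proposal is correct and follows essentially the same route as the paper: bound the cost of processing each edge $e_i=u\rightarrow v$ by the pre-insertion degrees $\DTUi+\DTVi$ (the cost of forming $\SN$, with all graphlet tests being $O(1)$), then sum over edges and regroup by node to get $\Theta(\sum_{v\in\SV}(d(v))^2)$. Your version is in fact slightly more careful than the paper's: by making the per-node regrouping explicit as $\binom{d(v)}{2}$ and by retaining the $\Theta(1)$ per-edge overhead, you close a small gap in the published argument, since $\sum_{e_i}(\DTUi+\DTVi)=\sum_{v\in\SV}\binom{d(v)}{2}$ alone can be $o(\sum_{v\in\SV}(d(v))^2)$ (e.g., when the final graph is a matching), and the additive $|\SE|$ term is what restores the matching lower bound.
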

\begin{proof}
Since the number of nodes forming each graphlet instance is a constant, finding the graphlet corresponding to a given instance and updating the corresponding count (lines \ref{alg:track_graphlet:line:if_instance}-\ref{alg:track_graphlet:line:decrease} and \ref{alg:track_graphlet:line:increase}) take $O(1)$ time. Thus, the time complexity of processing each incoming edge $e_i=u\rightarrow v$ is that of computing the union of the neighbors of $u$ and $v$ (line~\ref{alg:track_graphlet:line:neighbors}), which is $\Theta(\DTUi+\DTVi)$.
Hence, the total complexity is $\Theta(\sum_{e_i=u \rightarrow v \in E} (\DTUi+\DTVi) = \Theta(\sum_{v\in \SV}(d(v))^2)$.
\end{proof}
\begin{lemma} \label{lem:time:optimality} 
The number of instances of all graphlets in a snapshot $\SGT$ is $\Theta(\Sigma_{v\in\SVT}(\DTV)^2)$.
\end{lemma}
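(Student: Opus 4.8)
The plan is to count graphlet instances by a double-counting argument against the \emph{wedges} (length-two paths) of the underlying undirected graph of $\SGT$. Let $G'=(\SVT,E')$ be the simple undirected graph obtained from $\SGT$ by ignoring edge directions and identifying $u\to v$ with $v\to u$, and let $d'(v)$ denote the degree of $v$ in $G'$. A graphlet instance in $\SGT$ is exactly a three-node set $\{u,v,w\}$ inducing a weakly connected subgraph, which holds iff $\{u,v,w\}$ induces a connected subgraph in $G'$; hence the number of graphlet instances in $\SGT$ equals the number of connected three-node vertex subsets of $G'$. Moreover, since each neighbor of $v$ in $G'$ corresponds to either one or two directed edges of $\SGT$ incident to $v$, we have $d'(v)\le\DTV\le 2\,d'(v)$, so $\sum_{v\in\SVT}(\DTV)^2=\Theta\big(\sum_{v\in\SVT}(d'(v))^2\big)$. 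It therefore suffices to prove that the number of connected three-node subsets of $G'$ is $\Theta\big(\sum_{v\in\SVT}(d'(v))^2\big)$.

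For this I would count wedges, i.e.\ pairs $(v,\{x,y\})$ with $x,y$ two distinct neighbors of $v$ in $G'$; there are exactly $\sum_{v\in\SVT}\binom{d'(v)}{2}$ of them. A connected three-node subset $\{u,v,w\}$ induces in $G'$ either a path (two edges) or a triangle (three edges); in the path case it is covered by exactly one wedge (centered at the middle node) and in the triangle case by exactly three wedges, so
\[
\tfrac13\sum_{v\in\SVT}\binom{d'(v)}{2}\;\le\;(\text{number of graphlet instances in }\SGT)\;\le\;\sum_{v\in\SVT}\binom{d'(v)}{2},
\]
i.e.\ the instance count is $\Theta\big(\sum_{v}\binom{d'(v)}{2}\big)$. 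What then remains is to show $\sum_{v}\binom{d'(v)}{2}=\Theta\big(\sum_{v}(d'(v))^2\big)$: the inequality $\binom{d'(v)}{2}\le\tfrac12(d'(v))^2$ is immediate, and $\binom{d'(v)}{2}\ge\tfrac14(d'(v))^2$ whenever $d'(v)\ge 2$, so $\sum_{v}\binom{d'(v)}{2}\ge\tfrac14\sum_{v:\,d'(v)\ge 2}(d'(v))^2$.

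The one point requiring care — and the main obstacle — is the contribution of degree-$1$ nodes of $G'$ (degree-$0$ nodes cannot occur, since every node of $\SVT$ is an endpoint of some edge): each adds $1$ to $\sum_v(d'(v))^2$ but $0$ to $\sum_v\binom{d'(v)}{2}$, and for pathological graphs (e.g.\ a disjoint union of single edges) such nodes could dominate. This is handled by noting that every edge of $\SGT$ lies in some graphlet instance unless its connected component is just that edge; under the mild (and here implicit) assumption that $\SGT$ has no such trivial component, every edge of $G'$ has an endpoint of degree $\ge 2$, so the map sending each degree-$1$ node to its unique incident edge is injective, whence $\#\{v:d'(v)=1\}\le|E'|\le\tfrac12\sum_{v:\,d'(v)\ge2}(d'(v))^2$. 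Therefore $\sum_{v}(d'(v))^2=\Theta\big(\sum_{v:\,d'(v)\ge2}(d'(v))^2\big)=\Theta\big(\sum_{v}\binom{d'(v)}{2}\big)$, and chaining the three $\Theta$-equivalences (instances $\asymp\sum_v\binom{d'(v)}{2}\asymp\sum_v(d'(v))^2\asymp\sum_v(\DTV)^2$) gives the claim.
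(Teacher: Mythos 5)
Your proposal is correct and follows essentially the same double-counting argument as the paper: the paper's quantity $C$ (instances consisting of $v$ and two of its neighbors, summed over $v$) is exactly your wedge count, and both proofs then use the fact that each instance is hit by at least one and at most three such wedges. The one substantive difference is your treatment of degree-one nodes: the paper simply asserts that the per-node count is $\Theta((\DTV)^2)$ and waves at ``$\DTV\ge 1$,'' whereas a node with a single neighbor contributes $1$ to $\Sigma_v(\DTV)^2$ but $0$ wedges, so the lemma as literally stated is false for, e.g., a disjoint union of single edges ($0$ instances versus $\Theta(|\SVT|)$ on the right-hand side). Your explicit assumption that no connected component is a single edge, together with the injection from degree-one nodes into edges, is a genuine and correctly executed repair of a gap that the paper's own proof leaves open; your bookkeeping between the undirected neighbor count $d'(v)$ and the directed degree $\DTV$ is likewise a detail the paper elides but that checks out.
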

\begin{proof}
Given a snapshot $\SGT=(\SVT, \SET)$, for each node $v\in \SVT$, if we count the instances of all graphlets that consist of $v$ and its two neighbors, then the count of such instances is $\Theta((\DTV)^2)$ for each node $v$, and since $\DTV\geq 1$ for every node $v$, the total count $C$ is $\Theta(\Sigma_{v\in\SVT}(\DTV)^2)$. 

\smallsection{Lower Bound:} Since each graphlet instance, which consists of three nodes, is counted at most three times, $C$ is at most three times the number of instances of all graphlets in $\SGT$.
In other words, the number of instances of all graphlets is at least $1/3$ of $C$, and thus it is $\Omega(\Sigma_{v\in\SVT}(\DTV)^2)$.

\smallsection{Upper Bound:} In each graphlet instance, there exists at least one center node, who composes the graphlet together with its neighbors. Thus, each instance is counted at least once, and thus $C$ is at least the number of instances of all graphlets in $\SGT$.
In other words, the number of instances of all graphlets is at most $C$, and thus it is $O(\Sigma_{v\in\SVT}(\DTV)^2)$.
\end{proof}

As seen in Table~\ref{tab:graphlet_evolution}, the dynamics of the ratios depend on the domains of the graphs, as summarized in Observation~\ref{obs:graphlet_evolve}.

\noindent\fbox{%
        \parbox{\columnwidth}{%
        \vspace{-2mm}
        \begin{observation} \label{obs:graphlet_evolve}
            The dynamics in the distributions of graphlet instances in graphs from the same domain share some commonalities.
            \begin{itemize}
                \item Instances of graphlet 4 are more dominant in the citation graphs than other graphs. 
                \item Graphlets with many edges (e.g., graphlets 8, 12, and 13) account for a larger fraction in email/message networks than in other networks. 
                \item The fraction of graphlet 1 increases over time only in the online Q/A graphs.
            \end{itemize}
        \end{observation}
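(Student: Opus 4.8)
Because Observation~\ref{obs:graphlet_evolve} is an empirical claim rather than a deductive one, its ``proof'' is a measurement-and-verification plan: the goal is to \emph{substantiate} the three stated commonalities by directly tracking, for each of the nine datasets in Table~\ref{tab:data}, how the normalized count of each of the $13$ three-node directed graphlets evolves as edges arrive. First I would run Algorithm~\ref{alg:track_graphlet} on each temporal graph $\SG$, but instead of reporting only the final counts I would record the $13$ graphlet counts at a sequence of checkpoints indexed by the \emph{evolution ratio} $r\in(0,1]$, i.e., right after the first $\lceil r|\SE|\rceil$ edges have been processed. At each checkpoint the $13$ counts are normalized to sum to one, giving a point on the $12$-dimensional simplex; sweeping $r$ from $0$ to $1$ yields exactly the trajectories shown in Table~\ref{tab:graphlet_evolution}. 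By Theorem~\ref{thm:time:track} this sweep is feasible even for the largest graphs (Patent, Stackoverflow), since the total work is $\Theta(\sum_{v}d(v)^2)$.

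Next I would verify each bullet point as a concrete quantitative statement about these trajectories. For the first bullet, compare the ratio curve of graphlet~4 across domains and check that at essentially every checkpoint the three citation graphs (HepPh, HepTh, Patent) dominate the six non-citation graphs. For the second, check that the summed ratio of the edge-dense graphlets $8,12,13$ is consistently larger for the email/message graphs (Enron, EU, College) than for the others. For the third, check that the ratio of graphlet~1 (the sparsest of the $13$) is increasing in $r$ for the online Q/A graphs (Ask, Math, Stack) and exhibits no such increasing trend for any graph outside that domain. To make ``share some commonalities'' precise and robust rather than purely visual, I would additionally report a simple within-domain versus between-domain separation statistic computed directly on the trajectory curves (e.g., pairwise curve distances), mirroring the classification-accuracy methodology used elsewhere in the paper.

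The main obstacle I anticipate is not any single calculation but the faithful, reproducible realization of these qualitative claims at scale. The graphlet ratios must be computed with an indexing of the three-node \emph{directed} connected subgraphs that matches Figure~\ref{fig:graphlet_and_role}(a) exactly, so orientation bookkeeping in line~\ref{alg:track_graphlet:line:if_instance} of Algorithm~\ref{alg:track_graphlet} has to be gotten right; and on the billion-edge graphs the checkpointed counting must be engineered to fit in memory while still producing curves smooth enough to read off monotonicity. A secondary subtlety is the choice of checkpoint schedule and normalization so that the ``increases over time'' claim for graphlet~1 is a genuine trend and not an artifact of early-stage noise, when the total number of graphlet instances is tiny. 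If both of these are handled carefully, reading the three commonalities off the resulting curves is routine.
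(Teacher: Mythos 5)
Your plan matches the paper's own support for this observation: the paper runs Algorithm~\ref{alg:track_graphlet}, records the normalized counts of the $13$ graphlets as a function of the evolution ratio (fraction of edges added), plots the resulting trajectories in Table~\ref{tab:graphlet_evolution}, and reads the three domain commonalities directly off those curves. Your added within-domain/between-domain separation statistic is a reasonable extra (the paper defers that kind of quantification to the graphlet-transition-graph analysis in Section~\ref{section:graph:transition}), but the core measurement-and-inspection methodology is the same.
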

        \vspace{-2mm}
        }%
    }

\noindent However, the dynamics are not exactly the same within domains. For example, while graphlets 1, 2, and 4 are dominant compared to other graphlets in all citation graphs, the ratios among them vary greatly in different graphs.

We also notice a consistent difference between the dynamics in real-world graphs and those in randomized graphs (see Section~\ref{sec:prelim:concept}), as summarized in Observation~\ref{obs:graphlet_evolve:random}.

\vspace{0.5mm}
\noindent\fbox{%
        \parbox{0.98\columnwidth}{%
        \vspace{-2mm}
        \begin{observation} \label{obs:graphlet_evolve:random}
            The ratios of graphlet instances change more linearly in randomized graphs than in real-world graphs.
        \end{observation}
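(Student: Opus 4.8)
Observation~\ref{obs:graphlet_evolve:random} is an empirical claim, so what I would present is (i) a quantitative reading of ``more linearly'' and (ii) a heuristic argument for why that reading comes out as stated.

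\emph{Making ``linearity'' measurable.} For each dataset, each graphlet $k\in\{1,\dots,13\}$, and each of the two graphs $\SG$ and $\SGR$, I would take the ratio $r_k(\alpha)$ (the fraction of instances of graphlet $k$ among all graphlet instances) as a function of the evolution ratio $\alpha=|\SET|/|\SE|$ -- exactly the curves already drawn in Table~\ref{tab:graphlet_evolution} -- fit a least-squares line to $r_k(\cdot)$, and record the coefficient of determination $R_k^2$ (equivalently, the normalized residual). A single linearity score for a graph is the instance-count-weighted mean of the $R_k^2$ over $k$, and the content of the observation is that this score is larger for $\SGR$ than for $\SG$ on (essentially) all nine datasets, which I would verify directly.

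\emph{Why randomization linearizes.} In $\SGR$, the snapshot $\SGRT$ at evolution ratio $\alpha$ is a \emph{uniformly random} $\alpha$-fraction of the final edge set, so up to lower-order terms (valid since $|\SE|$ is large) each final edge is present independently with probability $\alpha$. Hence the expected count of induced instances of graphlet $k$ in $\SGRT$ equals $\sum_H n_H\, c_{k,H}\,\alpha^{e(k)}(1-\alpha)^{e(H)-e(k)}$, a fixed polynomial in $\alpha$ of degree at most $6$, where $H$ ranges over the 3-node configurations realized in the final graph, $n_H$ counts the triples realizing $H$, $e(\cdot)$ counts directed edges, and $c_{k,H}$ counts embeddings of graphlet $k$ in $H$. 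Since these counts are sums of many weakly dependent indicators, a second-moment (or Azuma-type) bound shows they concentrate about their means, so each $r_k(\alpha)$ in $\SGR$ is, up to vanishing fluctuations, a deterministic smooth curve that varies slowly over $[0,1]$ and is therefore well approximated by a line. In $\SG$ the edges instead arrive in their true, non-exchangeable order, and real temporal graphs do so in bursts with strong temporal locality (e.g., in triangle formation~\cite{lee2020temporal}); such bursts produce abrupt jumps in local structure and hence in $r_k(\alpha)$, degrading the linear fit.

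\emph{Main obstacle.} ``Linearity'' is not an intrinsic property: part (ii) rigorously delivers only \emph{smoothness and concentration} for $\SGR$, not genuine linearity -- a ratio of polynomials need not be near any line -- and no theorem forbids a particular real graph from having nearly linear curves. Thus the statement is inherently statistical and its justification must ultimately rest on the empirical aggregate of part (i), with part (ii) only explaining the mechanism. The one genuinely technical step is controlling the positive correlations among graphlet-instance indicators that share an edge, which is needed to make the concentration claim in (ii) precise; this is routine but not automatic.
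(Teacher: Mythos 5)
Your part (i) is essentially the paper's own justification: the observation is supported purely empirically, by quantifying non-linearity of each ratio-versus-evolution-ratio curve in Table~\ref{tab:graphlet_evolution} and comparing real against randomized graphs (Table~\ref{tab:linearity}). The only difference in the empirical core is the metric: you propose the $R^2$ of a least-squares line (instance-count weighted over graphlets), whereas the paper fits both a linear and a cubic regression to each time series and takes the average absolute difference between the two fitted curves as its non-linearity score, then averages unweighted over all thirteen series per graph. These are interchangeable for the purpose at hand, and both correctly treat the claim as a statistical statement to be verified on all nine datasets rather than something provable. Your part (ii) — the argument that a uniformly random $\alpha$-fraction of the final edge set makes expected graphlet counts low-degree polynomials in $\alpha$ that concentrate, so the ratio curves are smooth and slowly varying, while real arrival orders are bursty — goes beyond anything in the paper, which offers no mechanism at all. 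Your own caveat is the right one: that argument delivers smoothness, not linearity (indeed several of the randomized curves in Table~\ref{tab:graphlet_evolution} are visibly curved, just less so than their real counterparts), so it explains the tendency without proving the comparison; the empirical computation remains the actual support for the observation, exactly as in the paper.
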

        \vspace{-2mm}
        }%
    }
\vspace{0.5mm}

\begin{table}[t]
\caption{\label{tab:linearity} 
The non-linearity of the ratios of graphlet instances over time in real-world graphs and randomized graphs. 
We describe in Section~\ref{section:graph:time} how the non-linearity is measured.
The lower the non-linearity is, the more linear the change of the ratio of the corresponding graphlet instances is.
Note that
the ratios of graphlet instances change more linearly in randomized graphs than in real-world graphs.
}
\resizebox{\columnwidth}{!}{
\begin{tabular}{|c|ccc|ccc|ccc|}
    \hline
    	  Dataset & HepPh & HepTh & Patent & EU & Enron & College & Math & Ask & Stack \\
    	  \hline
    	  real      &  0.0027   & 0.0080    & 0.0093 & 0.0107 & 0.0042 & 0.0095 & 0.0028 & 0.0038 & 0.0047 \\
    	  random    &  0.0003   & 0.0011    & 0.0000 & 0.0081 & 0.0017 & 0.0058 & 0.0007 & 0.0005 & 0.0001 \\
    \hline
\end{tabular}}
\end{table}
\noindent In order to numerically support this observation, we measure the non-linearity \cite{kroll1993theoretical, hsieh2008statistical} of the ratios of graphlet instances over time.
Specifically, we fit a linear regression model and a non-linear polynomial regression model to each time series in Table~\ref{tab:graphlet_evolution}, and then we measure the average absolute difference between the predicted values of the two models as the non-linearity of the time series.\footnote{We use the linearity test implemented in Analyse-it (Ver. 5.65) and select a cubic model as the non-linearity polynomial model, as suggested in the program. For computational efficiency, we measure the absolute difference at 1,000 evolution ratios sampled uniformly at equal intervals.}
Lastly, we average the non-linearity of all time-series from each graph and report the results in Table~\ref{tab:linearity}.
Note that non-linearity is significantly higher in real-world graphs than in corresponding randomized graphs. 
That is, the ratios of graphlet instances change more linearly in randomized graphs than in real-world graphs.

\begin{table*}[ht]
\vspace{-2mm}
\caption{\label{tab:gtg} Using graphlet transition graphs (GTGs) and characteristic profiles (CPs) from GTGs, we can accurately characterize the dynamics of local structures in real-world graphs.
The colors of edges in GTGs indicate their normalized weights.
Note that GTGs and CPs are particularly similar in real-world graphs from the same domains (Observation~\ref{obs:transition:domain}).}
\resizebox{\textwidth}{!}{
\begin{tabular}{c|ccc|c}
    \toprule
    	      & \multicolumn{3}{c|}{Graphlet transition graphs (GTGs)} & Characteristic profiles (CPs) \\
    \hline
    \parbox[t]{2mm}{\multirow{6}{*}{\rotatebox[origin=c]{90}{Citation}}} &  
        \raisebox{-.9\totalheight}{\includegraphics[width=0.2\textwidth]{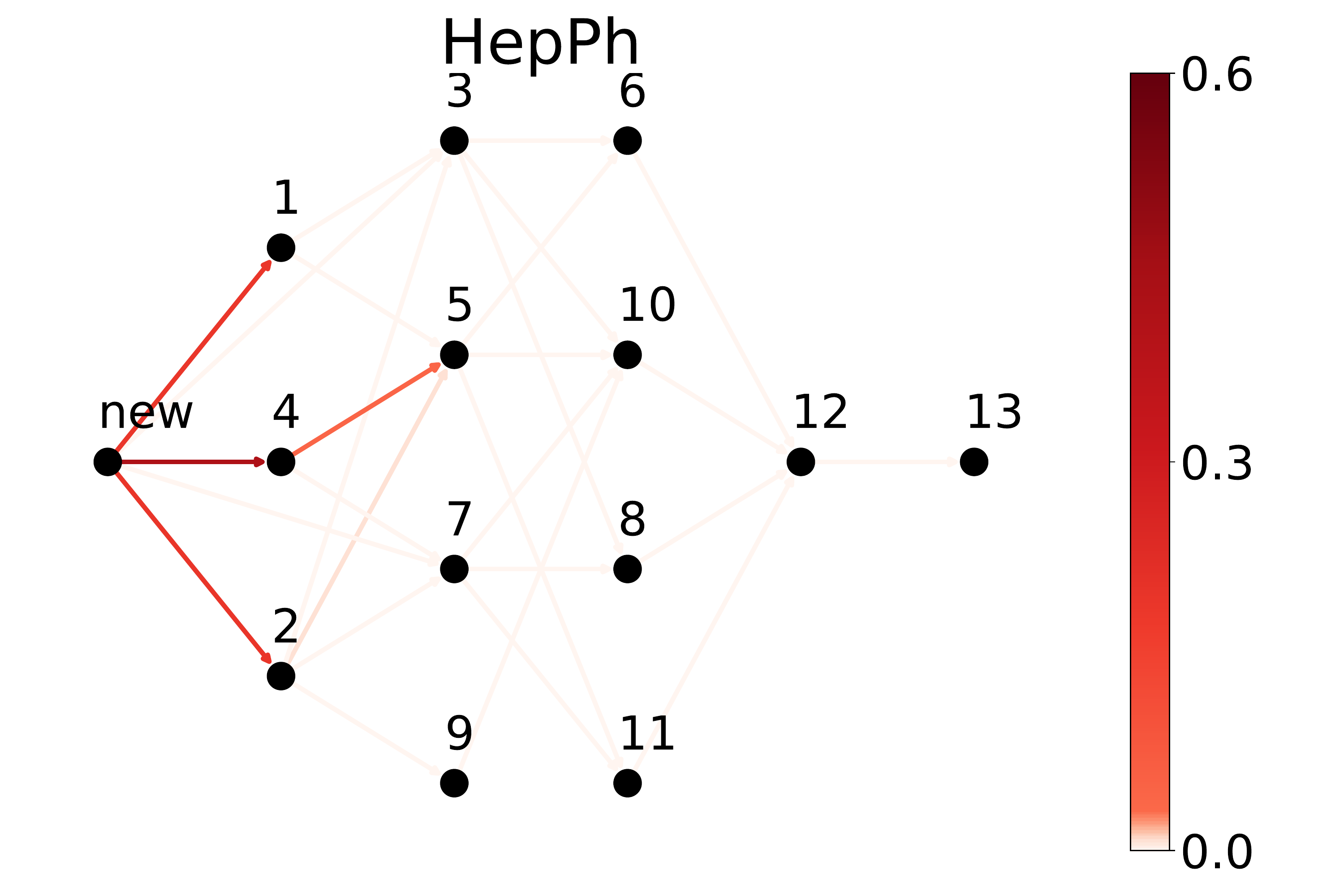}} &
        \raisebox{-.9\totalheight}{\includegraphics[width=0.2\textwidth]{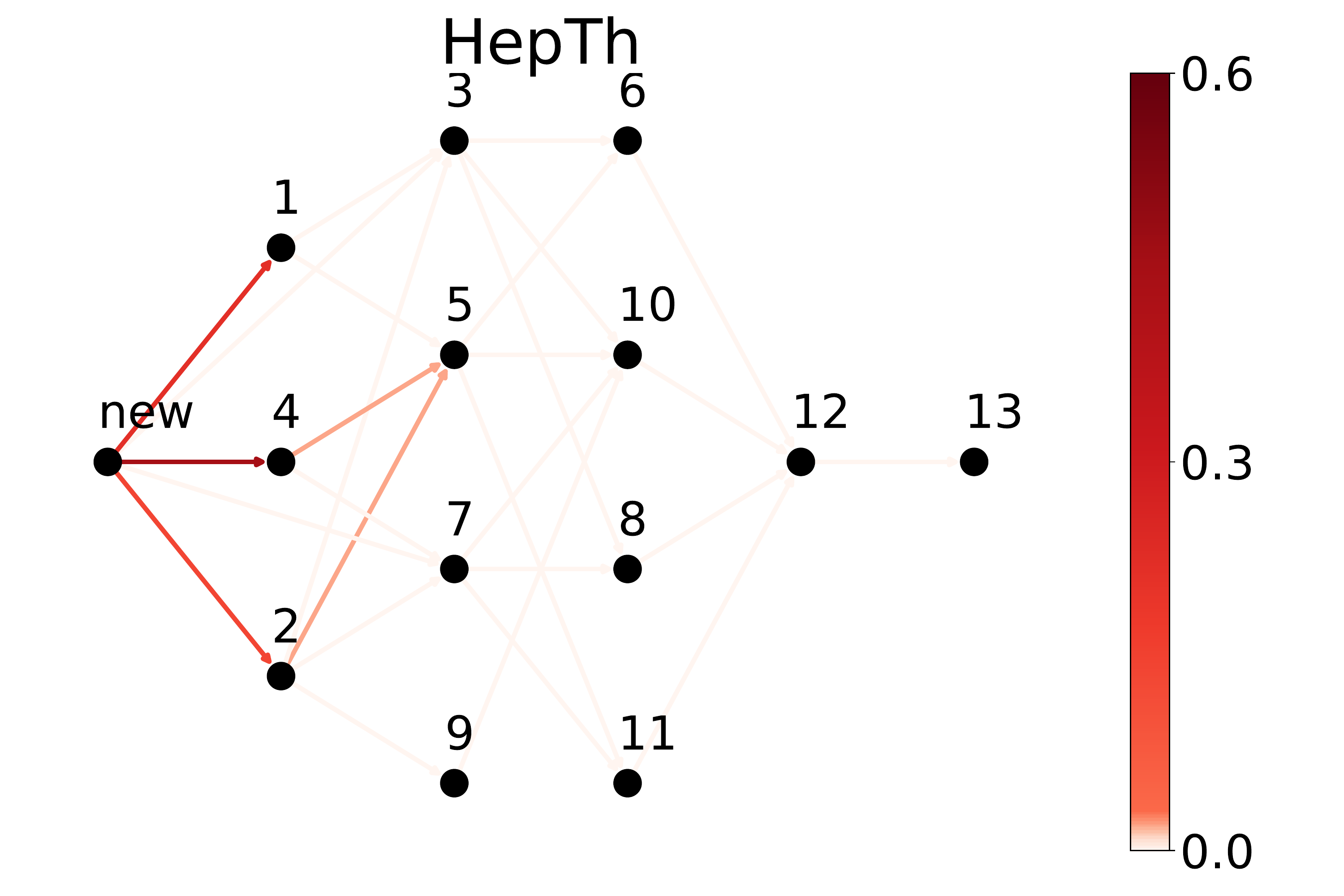}} & 
        \raisebox{-.9\totalheight}{\includegraphics[width=0.2\textwidth]{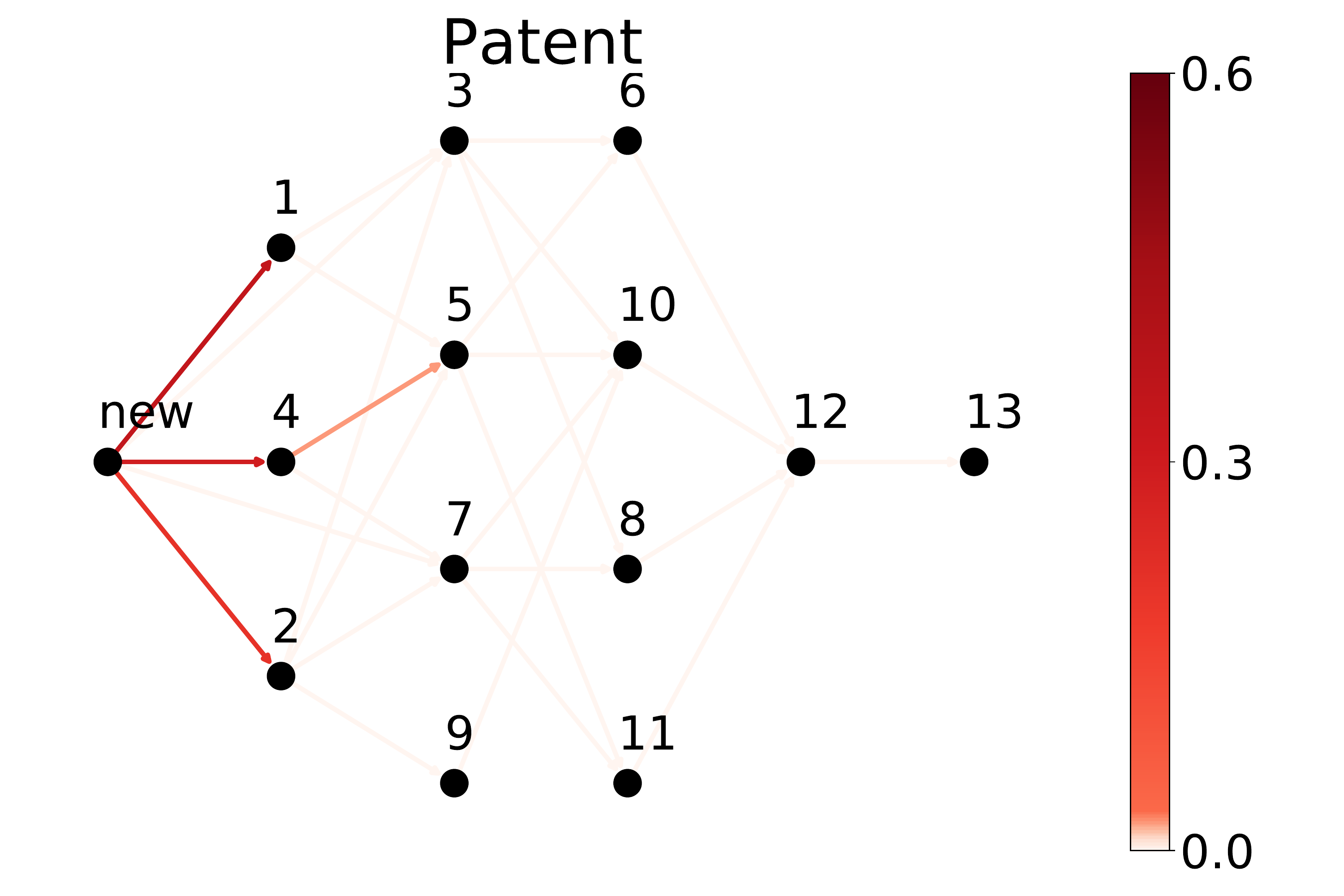}} & 
        \raisebox{-.9\totalheight}{\includegraphics[width=0.3\textwidth]{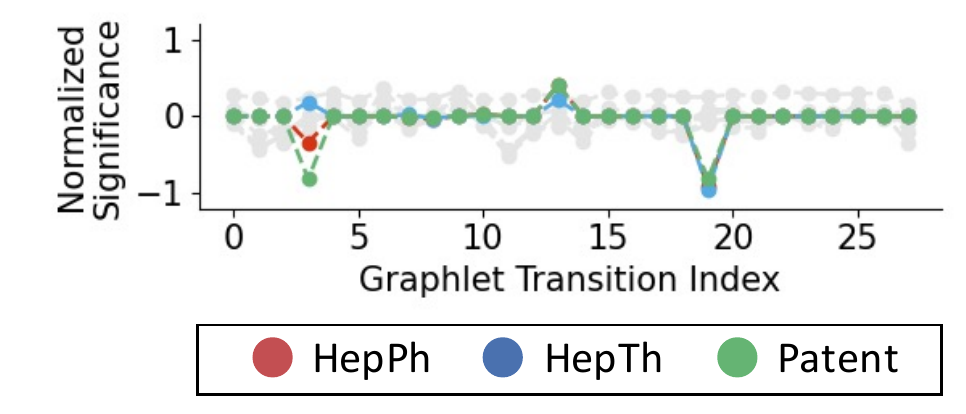}} \\
    \hline
    \parbox[t]{2mm}{\multirow{6}{*}{\rotatebox[origin=c]{90}{Email/Message}}} &  
        \raisebox{-.9\totalheight}{\includegraphics[width=0.2\textwidth]{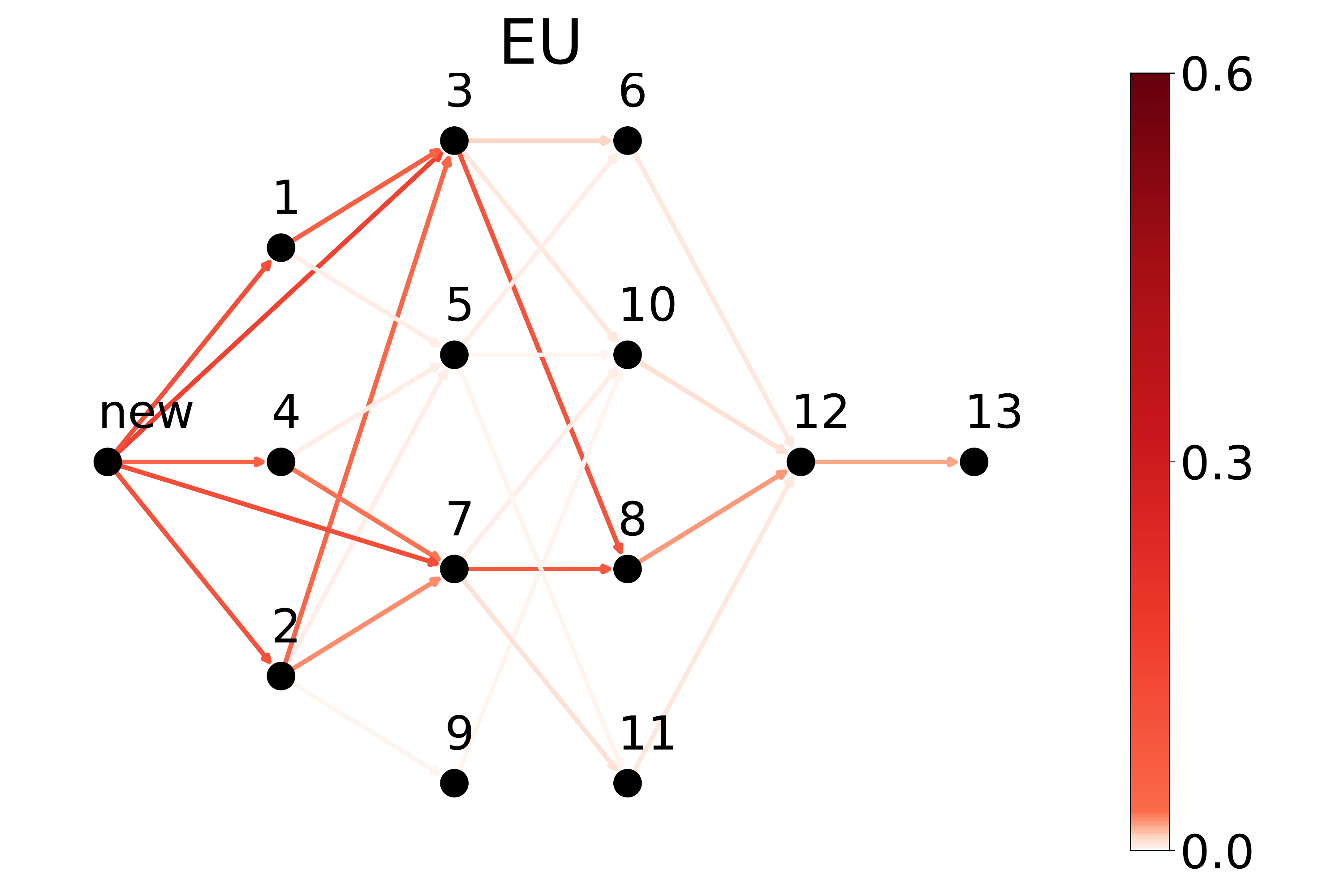}} &
        \raisebox{-.9\totalheight}{\includegraphics[width=0.2\textwidth]{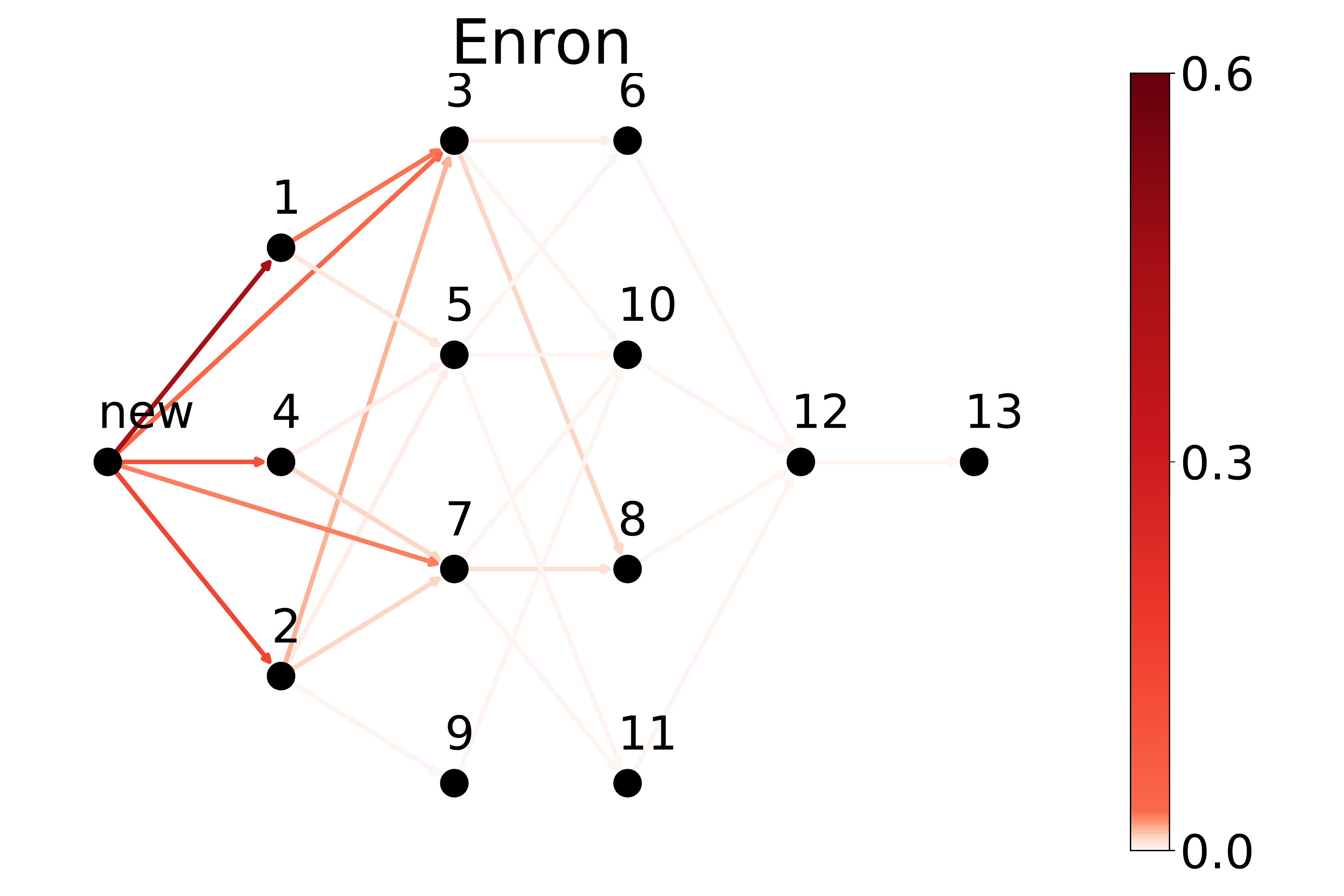}} & 
        \raisebox{-.9\totalheight}{\includegraphics[width=0.2\textwidth]{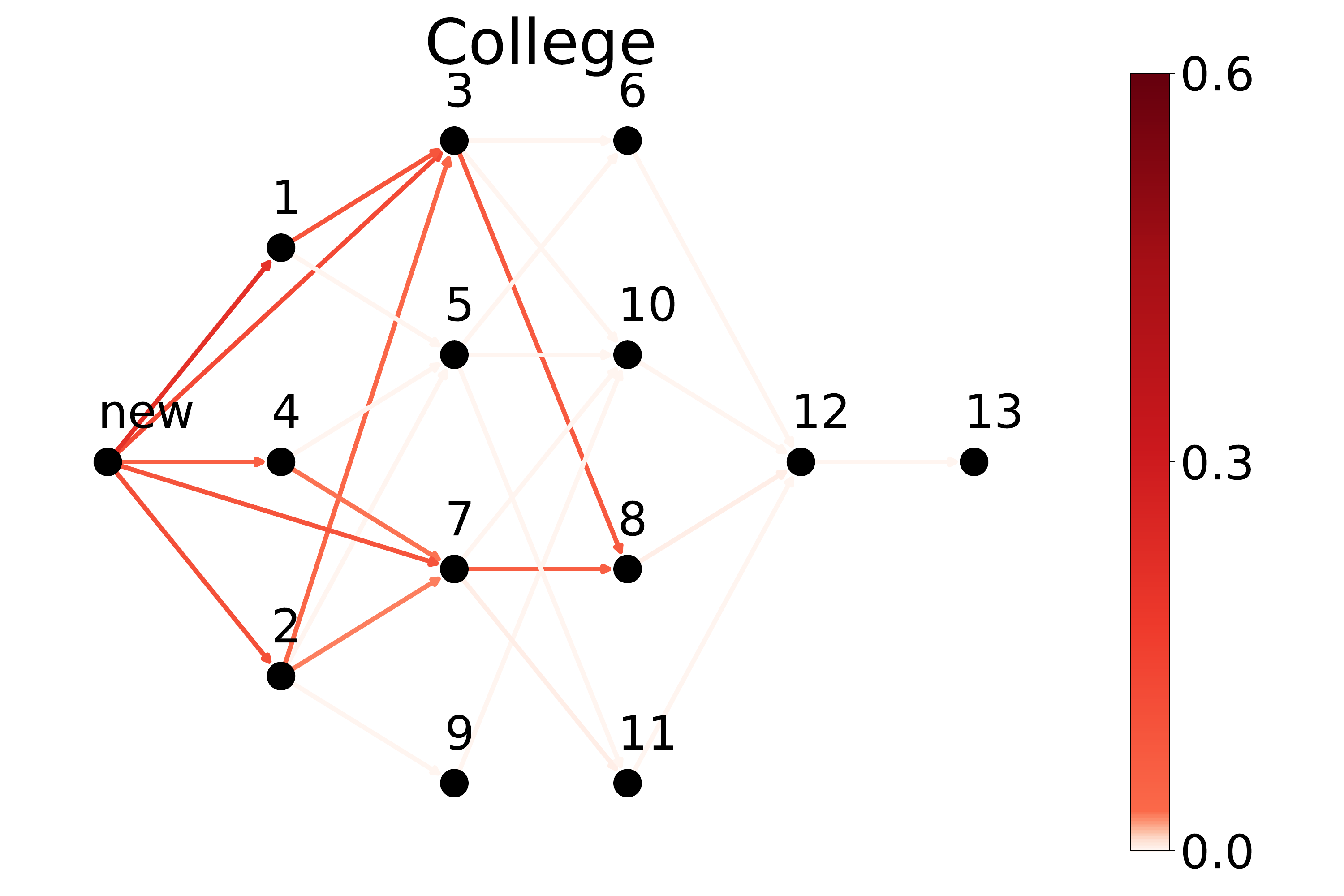}} & 
        \raisebox{-.9\totalheight}{\includegraphics[width=0.3\textwidth]{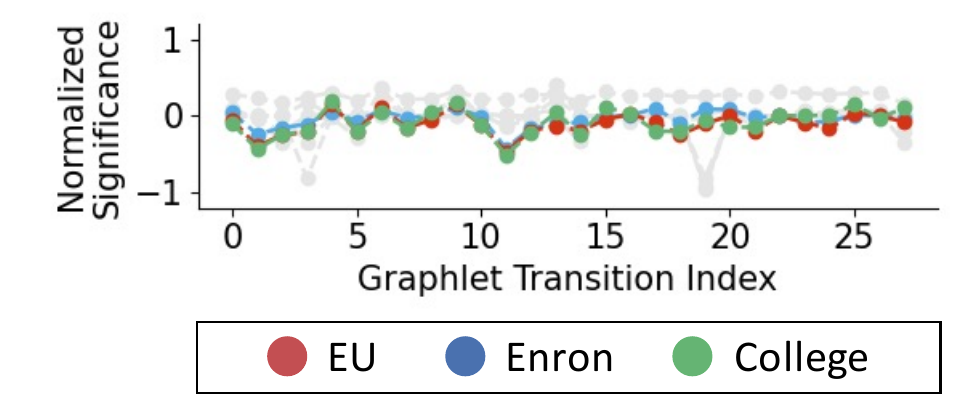}} \\
    \hline
    \parbox[t]{2mm}{\multirow{6}{*}{\rotatebox[origin=c]{90}{Online Q/A}}} &  
        \raisebox{-.9\totalheight}{\includegraphics[width=0.2\textwidth]{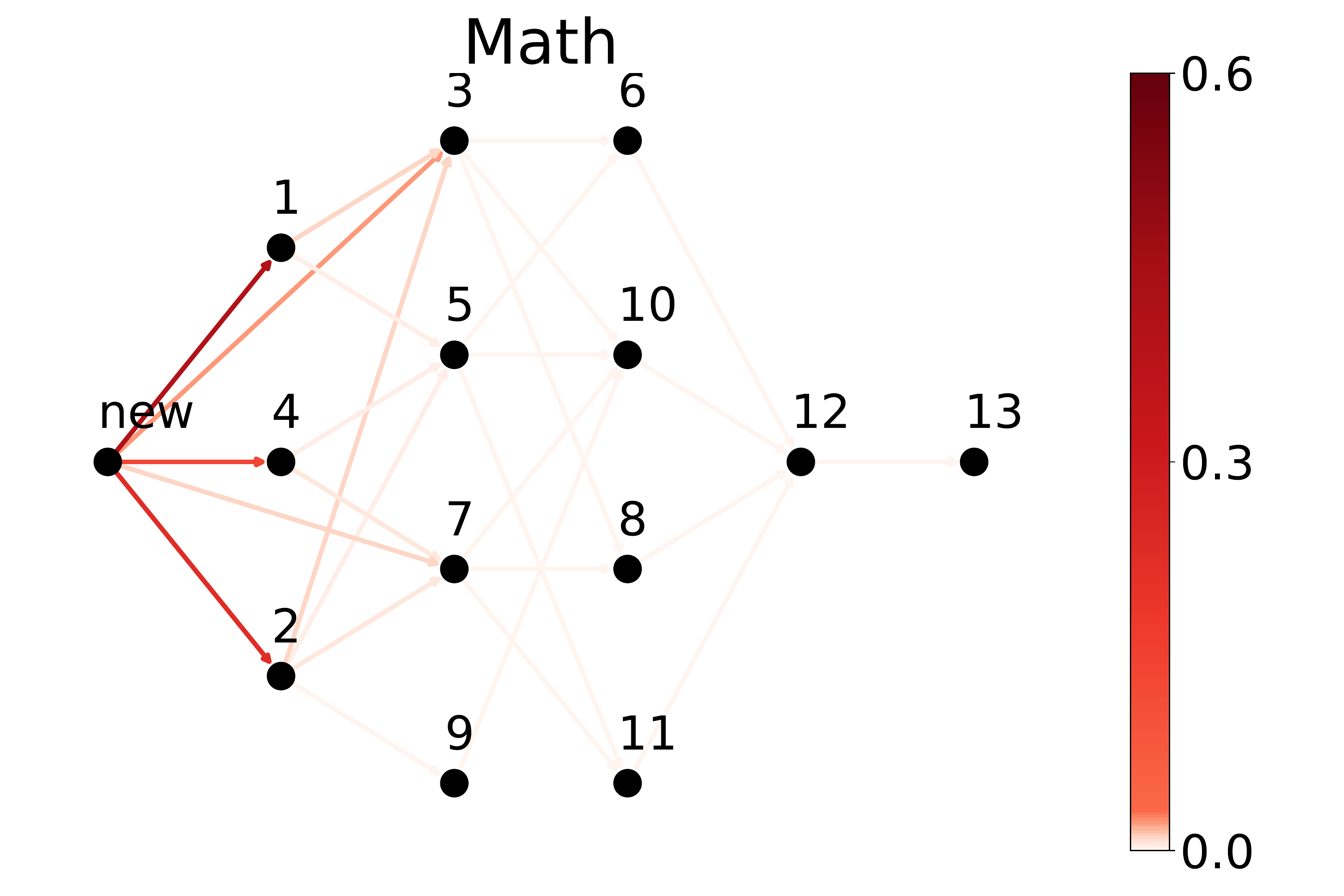}} &
        \raisebox{-.9\totalheight}{\includegraphics[width=0.2\textwidth]{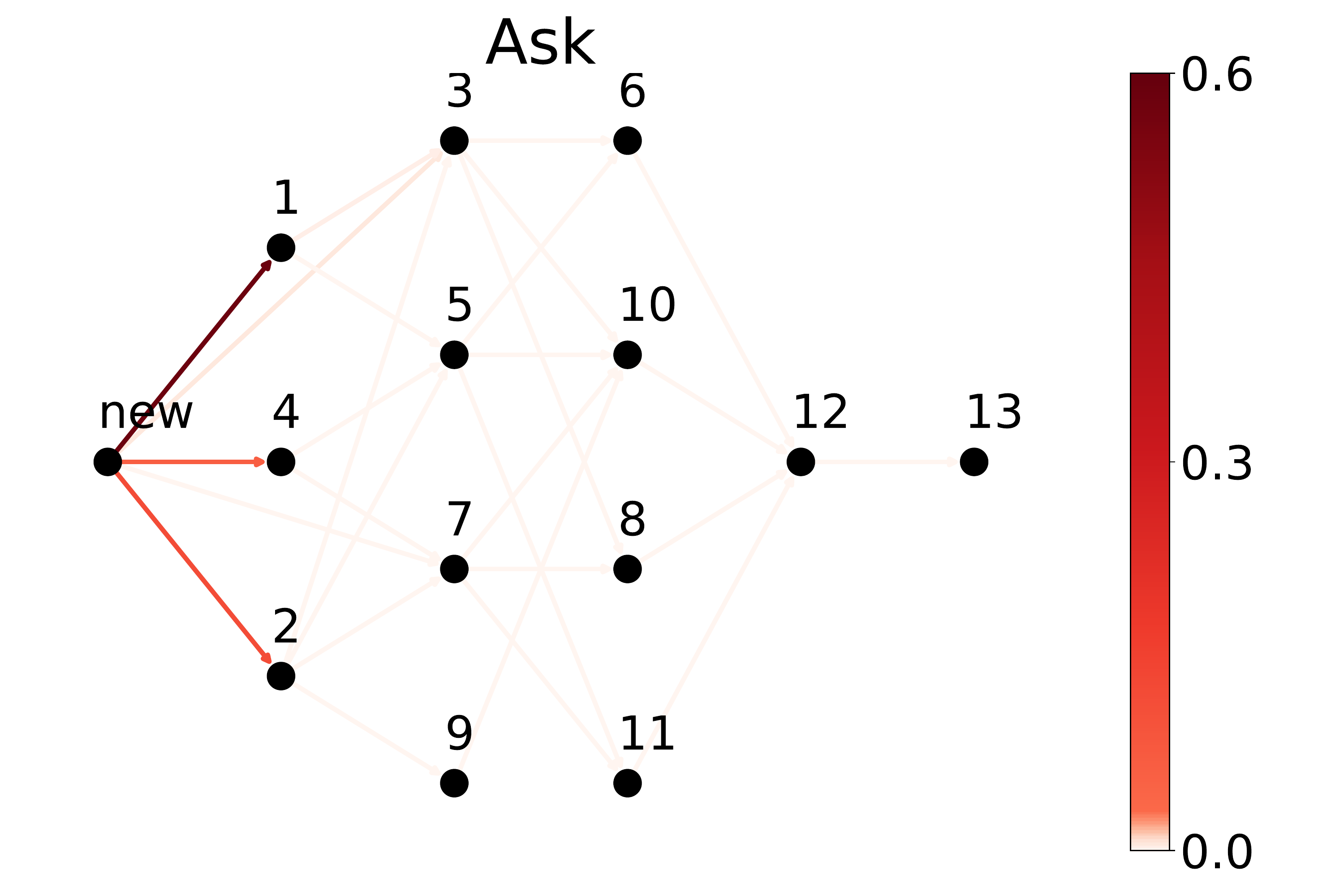}} & 
        \raisebox{-.9\totalheight}{\includegraphics[width=0.2\textwidth]{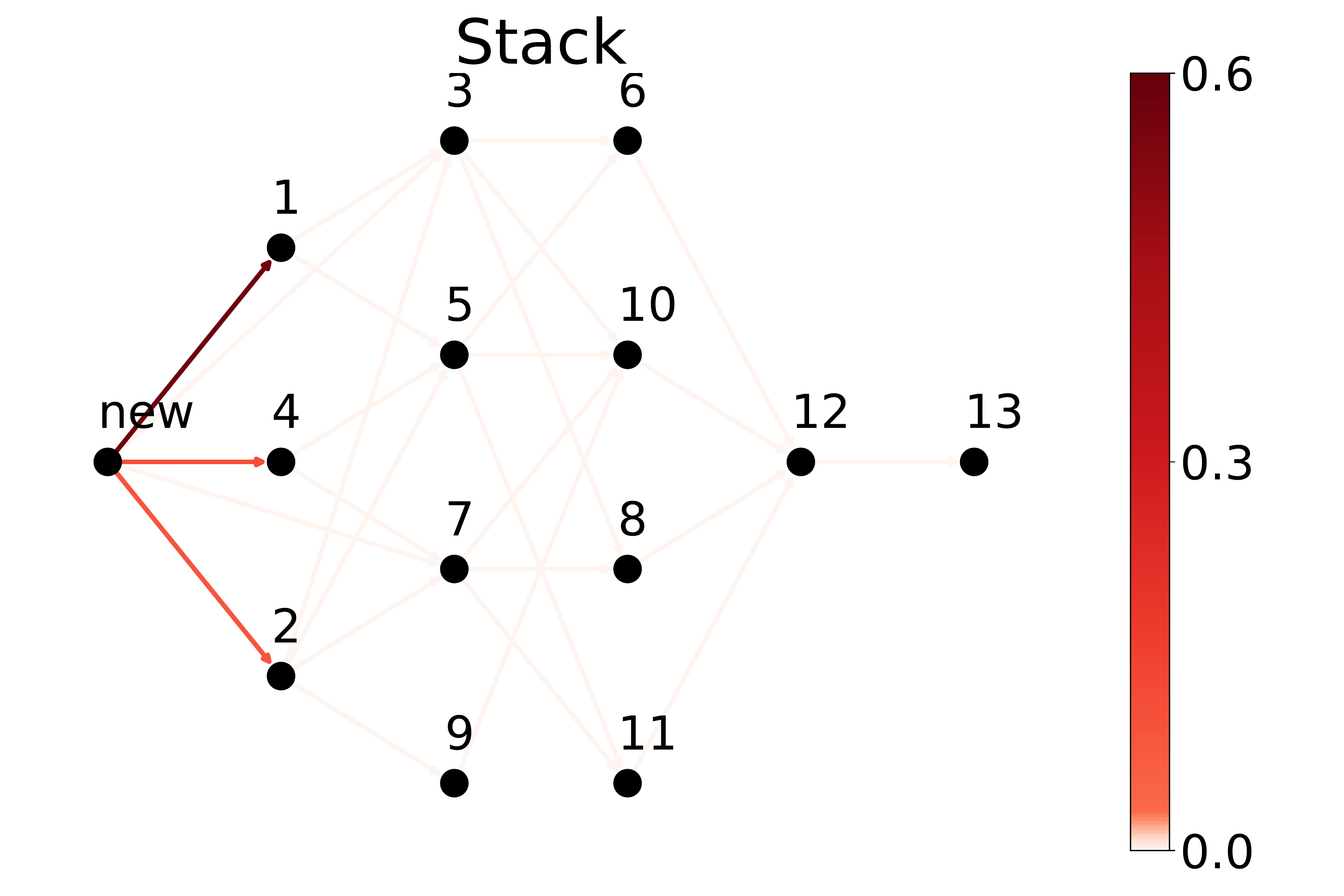}} & 
        \raisebox{-.9\totalheight}{\includegraphics[width=0.3\textwidth]{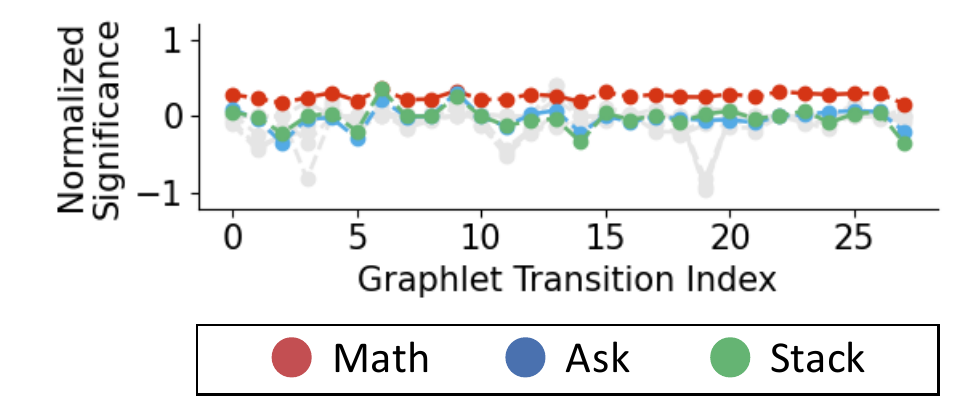}} \\
    \bottomrule
\end{tabular}}
\end{table*}

\subsection{Global Level 2. Graphlet Transitions}
\label{section:graph:transition}

In a temporal graph, an instance of a graphlet may transition to an instance of another graphlet due to new edges added to it.
In this subsection, we examine the counts of such transitions between graphlets to characterize the local dynamics in temporal graphs and also to make comparisons between them.

\smallsection{Graphlet Transition Graph:}
We define \textit{graphlet transition graphs} (GTGs) to encode transitions between graphlets.

\noindent\fbox{%
    \parbox{0.98\columnwidth}{%
    \vspace{-2mm}
    \begin{definition}[Graphlet transition graph] \label{defn:transition}
       A \textit{graphlet transition graph} (GTG) $G=(V,E,W)$ of a temporal graph $\SG$ is a static directed weighted graph where the nodes are graphlets and each edge indicates that the source graphlet is transformed into the destination graphlet by an edge added to $\SG$.
        The weight of edges is the number of occurrences of the corresponding transitions. 
        We use $W=\{w_1, \cdots, w_{|E|}\}$ to denote the edge weights.
    \end{definition}
    \vspace{-2mm}
    }%
 }
\vspace{0.5mm}
\\
\noindent Since we focus on the 13 graphlets in Figure~\ref{fig:graphlet_and_role}(a), a GTG consists of the 28 types of transitions between these graphlets.
In Table~\ref{tab:gtg}, we visualize the GTGs from the real-world graphs.
Algorithm~\ref{alg:count_graphlet} describes the computation of the edge weights of 
 a GTG. In a nutshell, for each edge in arrival order, we count the transitions caused by it.
Its time complexity is formalized in Theorem~\ref{thm:time:transition}.

\begin{theorem} \label{thm:time:transition} 
The time complexity of Algorithm~\ref{alg:count_graphlet} is $\Theta(\Sigma_{v\in\SV}(d(v))^2)$ = $\Theta($the number of instances of all graphlets in the last snapshot$)$.
\end{theorem}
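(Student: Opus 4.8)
The plan is to mirror the proof of Theorem~\ref{thm:time:track}, since Algorithm~\ref{alg:count_graphlet} has essentially the same skeleton as Algorithm~\ref{alg:track_graphlet}: for each incoming edge $e_i = u \rightarrow v$ in arrival order, it forms the union $\SN$ of the neighbors of $u$ and $v$ (excluding $u$ and $v$ themselves), and then for each $w \in \SN$ it inspects the triple $\{u,v,w\}$ to determine what graphlet it was before the arrival of $e_i$ and what graphlet it becomes after, incrementing the weight of the corresponding GTG edge. So I would first argue the per-edge cost, then sum, then invoke the already-proven equivalence.

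**First I would bound the per-triple work.** Determining the source and destination graphlets of the triple $\{u,v,w\}$ requires examining only the $O(1)$ possible directed edges among three fixed nodes, and incrementing a single entry of the (constant-size) weight table for the $28$ transition types is $O(1)$. Hence each $w \in \SN$ contributes $O(1)$ work. **Next**, the dominant cost of processing $e_i = u \rightarrow v$ is computing $\SN$ itself (line analogous to line~\ref{alg:track_graphlet:line:neighbors} of Algorithm~\ref{alg:track_graphlet}), which takes $\Theta(\DTUi + \DTVi)$ time, where $t_i$ is the arrival time of $e_i$; this also dominates $|\SN| = O(\DTUi + \DTVi)$, so the loop over $\SN$ does not change the asymptotics. **Then** I would sum over all edges: $\Theta\!\left(\sum_{e_i = u \rightarrow v \in \SE}(\DTUi + \DTVi)\right)$. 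Each node $v$ appears as an endpoint of an incoming edge exactly $d(v)$ times over the whole stream, and at the $j$-th such occurrence its degree in the current snapshot is at most $d(v)$; more precisely, reindexing, the contributions of the terms $d^{(t_i-1)}(v)$ over the edges incident to $v$ sum to $\Theta((d(v))^2)$ (they are $0,1,\dots,d(v)-1$ up to the distinction between in- and out-edges, which only affects constants). Therefore the total is $\Theta(\sum_{v \in \SV}(d(v))^2)$. Finally, the second equality $\Theta(\sum_{v\in\SV}(d(v))^2) = \Theta(\text{number of instances of all graphlets in the last snapshot})$ is exactly Lemma~\ref{lem:time:optimality} applied to the last snapshot $\SG^{(t_{|\SE|})}$.

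**The main obstacle** — really the only non-routine point — is justifying that $\sum_i (\DTUi + \DTVi) = \Theta(\sum_v (d(v))^2)$ rather than just $O(\sum_v (d(v))^2)$; the lower bound needs that a constant fraction of those snapshot-degrees are within a constant factor of the final degree, which follows from the arithmetic-series argument above (the last half of the edges incident to $v$ each see snapshot-degree $\geq d(v)/2$). This is the same subtlety already handled implicitly in Theorem~\ref{thm:time:track}, so the cleanest exposition is to simply say that the analysis is identical to that of Theorem~\ref{thm:time:track} — the only difference being that Algorithm~\ref{alg:count_graphlet} performs an additional $O(1)$-cost graphlet-identification per triple instead of a single count update — and then cite Lemma~\ref{lem:time:optimality} for the equivalence with the graphlet-instance count.
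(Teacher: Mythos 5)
Your proposal is correct and follows essentially the same route as the paper, which proves Theorem~\ref{thm:time:transition} by observing that the per-edge analysis is identical to that of Theorem~\ref{thm:time:track} and then invoking Lemma~\ref{lem:time:optimality} for the equivalence with the graphlet-instance count. The extra detail you supply on why $\sum_{e_i=u\rightarrow v}(\DTUi+\DTVi)=\Theta(\sum_{v}(d(v))^2)$ (the arithmetic-series lower bound) is a step the paper leaves implicit, but it does not change the argument.
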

\begin{proof}
We can prove the complexity of $\Theta(\Sigma_{v\in\SV}(d(v))^2)$ similarly to Theorem~\ref{thm:time:track}, and by Lemma~\ref{lem:time:optimality}, it is $\Theta($the number of instances of all graphlets in the last snapshot$)$.
\end{proof}

\begin{algorithm}[t]
\small
	\caption{Computing the Edge Weights of Graphlet Transition Graphs \label{alg:count_graphlet}} 
	\SetKwInOut{Input}{Input}
	\SetKwInOut{Output}{Output}
	\SetKwProg{Procedure}{Procedure}{}{}
	\SetKwFunction{update}{UPDATE}
	
	\Input{Temporal graph $\SG = (\SV, \SE, \ST)$}
	\Output{Edge weights of the graphlet transition graph of $\SG$}
	
	
	Initialize all edge weights to zero \\
	Initialize $\SE$ to an empty set \\
	\For{\upshape\textbf{each }edge $e_i=u\rightarrow v$ in arrival order}{
        \For{\upshape\textbf{each} $w_1$ $\in$ neighbors$(u)$ {$\setminus$ $\{v\}$} }{
            \update{$u, v, w_1$}
        }	
        \For{\upshape\textbf{each} {$w_2$ $\in$ neighbors$(v)$ $\setminus\{$neighbors$(u)$ $\cup$ $u\}$}}{
            \update{$u, v, w_2$}
        }
        add $u\rightarrow v$ to $\SE$
	}
	
	\textbf{return} the edge weights
	
    \Procedure{\update{$u, v, w$}}{ 
        \If{$u$, $v$, and $w$ form a graphlet instance}{
            prev $\leftarrow$ graphlet of the instance $(u, v, w)$ without $u\rightarrow v$ \\
            next $\leftarrow$ graphlet of the instance $(u, v, w)$ with $u\rightarrow v$ \\
            $i$ $\leftarrow$  index of the graphlet transition from prev to next \\
            increase the weight of the edge $i$ (i.e., $w_{i}$) by $1$
        }
    }
\end{algorithm}

\smallsection{Characteristic Profile (CP):} 
We characterize the evolution of local structure in a graph $\SG$ using the significance of edge weights in its GTG $G=(V,E,W)$. 
In order to measure the significance, we follow the steps in \cite{milo2004superfamilies} for measuring the significance of each graphlet itself.
To this end, we construct the graphlet transition graph $\tilde{G}$ of a randomized graph $\SGR$.
Then, we measure the significance $SP_i$ of each edge weight $w_i$ in $G$ as follows:
\begin{equation}\label{eq1}
    SP_i := \frac{w_i - \tilde{w}_i}{w_i+\tilde{w}_i + \epsilon},
\end{equation}
where $\tilde{w}_i$ is the corresponding edge weight in $\tilde{G}$, and $\epsilon$ is a constant, which we fix to $4$. For $\tilde{w}_i$, we generate 50 instances of randomized graphs and we use the average edge weights in them. Lastly, we normalize each significance as follows:
\begin{equation}\label{eq2}
    CP_{i} := {SP_{i}}/{\sqrt{\Sigma_{i=1}^{|E|} SP_{i}^2}}.
\end{equation} 
We characterize the evolution of local structures in $\SG$ using the vector of the normalized significances (i.e., $[CP_1,\cdots,CP_{|E|}]$), which we call  \textit{characteristic profile (CP)}.



\begin{figure}[t]
    \vspace{-3mm}
     \centering
     \begin{subfigure}{0.155\textwidth}
         \includegraphics[width=\textwidth]{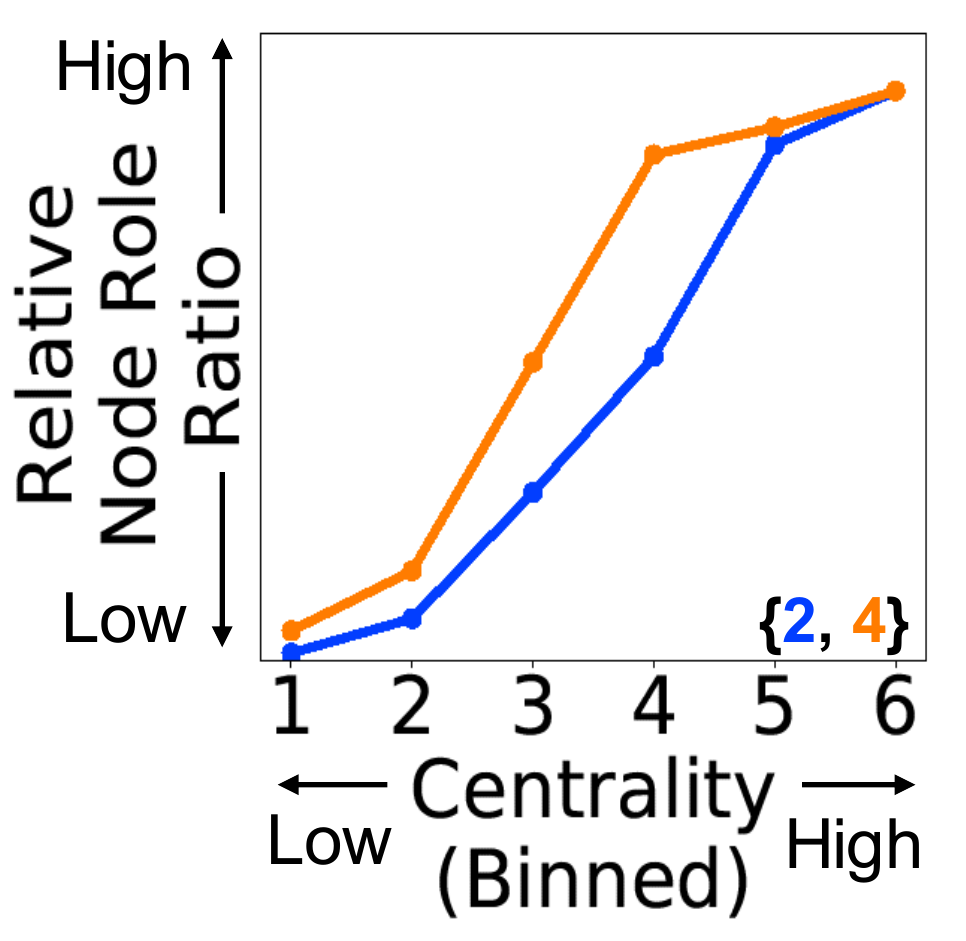}
         \caption{$d_\theta = 2$}
     \end{subfigure}
     \begin{subfigure}{0.155\textwidth}
        \includegraphics[width=\textwidth]{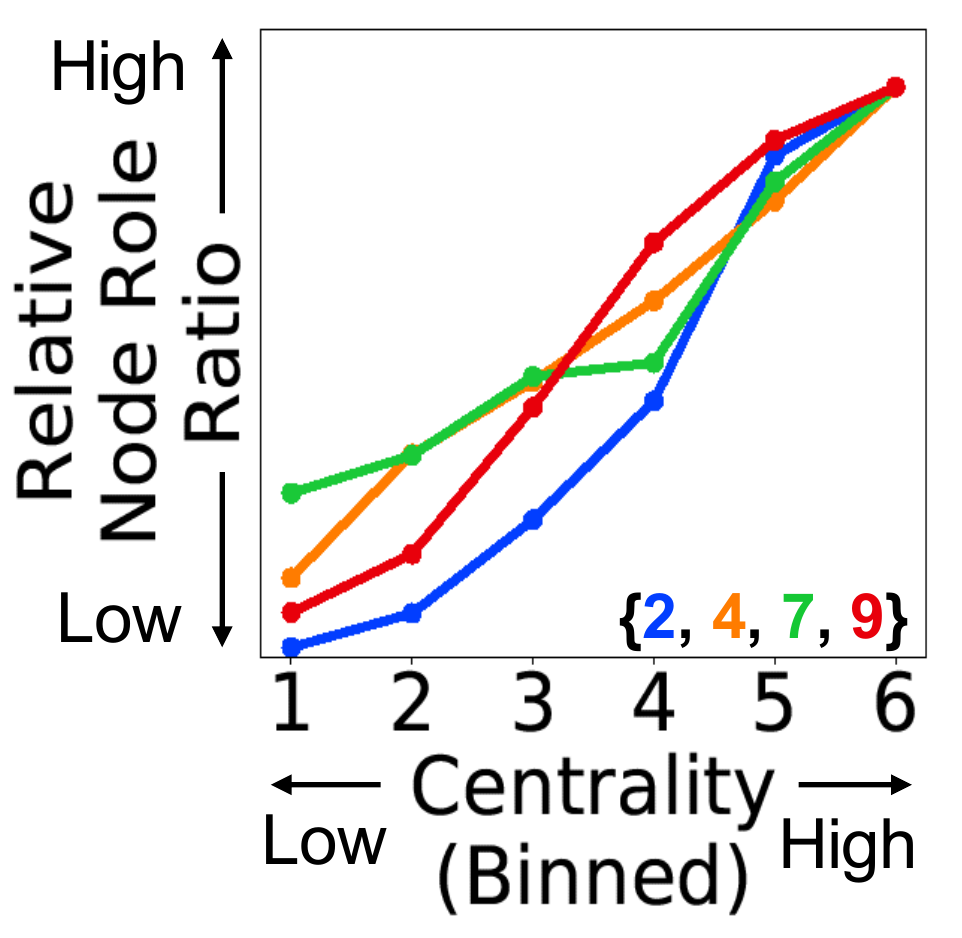}
        \caption{$d_\theta = 4$}
     \end{subfigure}
     \begin{subfigure}{0.155\textwidth}
        \includegraphics[width=\textwidth]{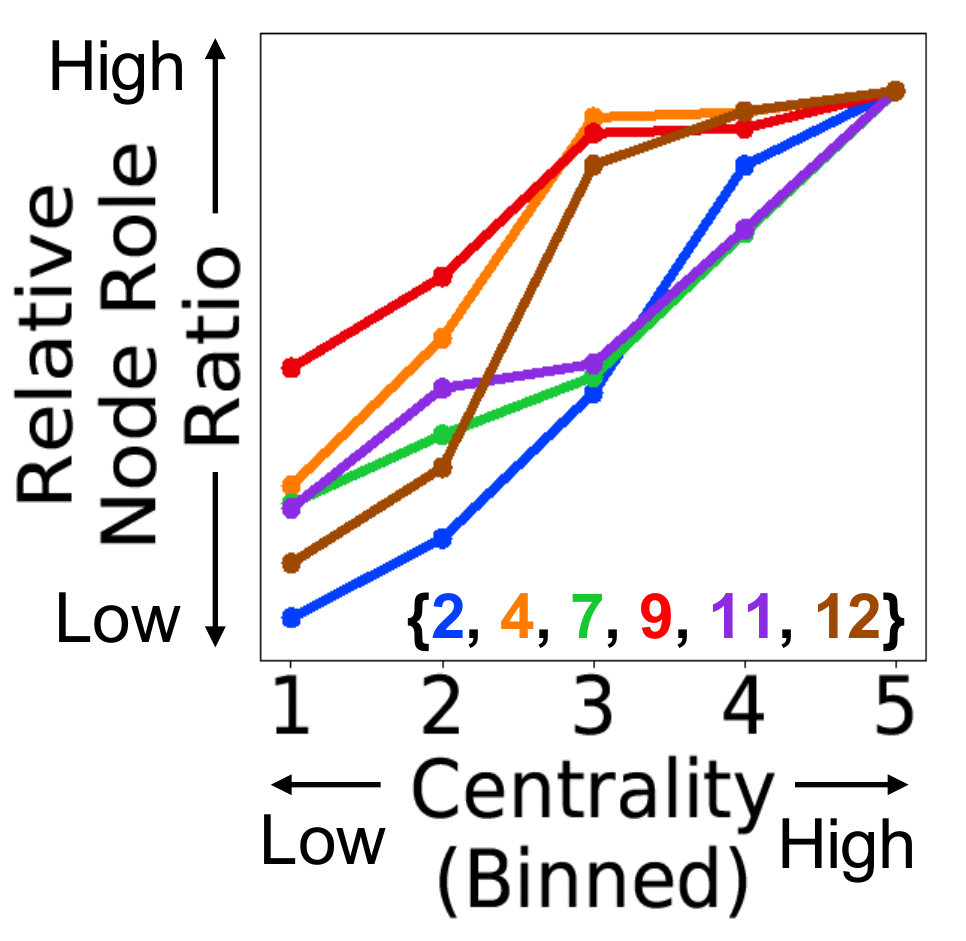}
        \caption{$d_\theta = 8$}
     \end{subfigure}
     \caption{\label{fig:trend_signal} 
     \label{fig:signal_degree}
     Example signals from the local structures of nodes regarding their future importance. The ratios of some node roles (e.g., node roles 2 and 4) at nodes monotonically increase with respect to the future in-degrees of the nodes. The ratios are rescaled so that their maximum values are the same.}
\end{figure}

\smallsection{Comparison between CPs:}
We plot the CPs of the considered real-world graphs in Table~\ref{tab:gtg}, and high levels of similarity are observed within domains.
We numerically measure the similarity between CPs using the Pearson correlation coefficients, and the results are shown in Figure~\ref{fig:domain_correlation}(a).
The correlation coefficients are particularly high between graphs from the same domain, and specifically the domains can be classified with $97.2\%$ accuracy if we use the best threshold of the correlation coefficient ($0.58$).
The results demonstrate that CPs accurately characterize the evolution of local structures.
Our observations are summarized in Observation~\ref{obs:transition:domain}.

	
	
	
	
	

\vspace{0.5mm}
\noindent\fbox{%
        \parbox{0.98\columnwidth}{%
        \vspace{-2mm}
        \begin{observation} \label{obs:transition:domain}
            The evolution patterns of local structures are similar
            in real-world graphs from the same domains.
        \end{observation}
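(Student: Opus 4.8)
The plan is to support this observation empirically, following the significance-profile methodology of~\cite{milo2004superfamilies}, rather than by a formal argument. First I would construct the graphlet transition graph (GTG) $G=(V,E,W)$ of each of the nine real-world graphs in Table~\ref{tab:data} using Algorithm~\ref{alg:count_graphlet}, whose running time $\Theta(\Sigma_{v\in\SV}(d(v))^2)$ is guaranteed by Theorem~\ref{thm:time:transition}. Then, for each graph, I would generate $50$ randomized counterparts $\SGR$ via the time-shuffling procedure of Section~\ref{sec:prelim:concept}, build their GTGs, average the corresponding edge weights to obtain $\tilde{w}_i$, and compute the significance $SP_i$ via~\eqref{eq1} and the normalized characteristic profile $CP_i$ via~\eqref{eq2}. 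This produces one $28$-dimensional vector per graph that is, by construction, invariant to the overall scale of the graph and corrected for the transition counts expected under the randomized null model.

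Next I would quantify the pairwise similarity of evolution patterns by the Pearson correlation coefficient between CP vectors, yielding the $9\times 9$ matrix visualized in Figure~\ref{fig:domain_correlation}(a). To turn this into a crisp claim, I would cast ``same domain vs.\ different domain'' as a binary classification of graph pairs thresholded on the correlation coefficient, sweep the threshold, and report the best attainable accuracy ($97.2\%$ at threshold $0.58$), contrasting it with the $83.3\%$ obtained when CPs are replaced by the graphlet-occurrence profiles of Figure~\ref{fig:domain_correlation}(b). The gap between these two numbers, together with the visibly block-diagonal structure of the correlation matrix, is the quantitative content of the observation.

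The main obstacle is ruling out confounds, so that the measured within-domain similarity reflects genuinely shared \emph{dynamics} rather than incidental properties. Three checks are needed: (i) that the similarity is not merely a proxy for graph size or density---the datasets span several orders of magnitude in $|V|$ and $|E_T|$ within each domain, so high within-domain correlation despite this spread is itself the key evidence; (ii) that the randomization-based normalization is doing real work---one should verify that raw, un-normalized transition-weight vectors do \emph{not} cluster by domain nearly as cleanly, which is consistent with Observation~\ref{obs:graphlet_evolve:random} and Table~\ref{tab:linearity}; and (iii) robustness to the free parameters, namely the number of randomized samples ($50$) and the smoothing constant $\epsilon=4$ in~\eqref{eq1}, which I would address by re-running with alternative values and confirming that the correlation matrix and the classification accuracy are stable. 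With these checks in place, the observation rests on the combination of the high within-domain Pearson correlations, the threshold-classification accuracy, and the demonstrated invariance to dataset scale.
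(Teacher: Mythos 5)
Your proposal matches the paper's own supporting argument: both construct GTGs via Algorithm~\ref{alg:count_graphlet}, derive characteristic profiles from Equations~\eqref{eq1} and~\eqref{eq2} using $50$ randomized graphs and $\epsilon=4$, and quantify within-domain similarity via Pearson correlations between CPs, reporting the $97.2\%$ threshold-classification accuracy against the $83.3\%$ baseline from graphlet occurrences. The additional robustness checks you list (size confounds, parameter sensitivity) go slightly beyond what the paper explicitly reports, but the core evidence is identical.
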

        \vspace{-2mm}
        }%
    }







\smallsection{Comparison with Other Methods:} We evaluate three other graph characterization methods, as we evaluate ours in the right above paragraph.
In Figure~\ref{fig:domain_correlation}(b), we provide the correlation coefficients between the CPs obtained from the count of the instances of each graphlet~\cite{milo2004superfamilies}. 
Note that the email/message graphs (blue) and the online Q/A graphs (green) are not distinguished clearly. Numerically, with the best threshold of correlation coefficient ($0.95$), the classification accuracy is $83.3\%$.

We also compute the similarity between the considered real-world graphs using Graphlet-orbit Transition (GoT) \cite{aparicio2018graphlet} and Orbit Temporal Agreement (OTA) \cite{aparicio2018graphlet}, which are also based on transitions between graphlets (see Section~\ref{section:relwork} for details). 
Our way of characterization has the following major advantages over them:
\begin{itemize}
    \item \textbf{(1) Speed:} Empirically, GoT and OTA are up to $10\times$ slower than our method, as shown in Appendix~\ref{sec:appendix:compare_got_ota}.
The time complexity of them is proportional to the sum of the counts of graphlet instances in all used snapshots, while the time complexity of Algorithm~\ref{alg:count_graphlet} is proportional only the to the count of graphlet instances in the last snapshot (Theorem~\ref{thm:time:transition}).
    \item \textbf{(2) Space Efficiency:} GoT and OTA run out of memory in the two largest graphs (Patent and Stackoverflow), as shown in Appendix~\ref{sec:appendix:compare_got_ota}, while our method does not. They need to store all graphlet instances in each considered snapshot for comparison with those in the next snapshot, while Algorithm~\ref{alg:count_graphlet} maintains only the latest snapshot without having to store graphlet instances.
    \item \textbf{(3) Characterization Accuracy:} The best classification accuracies computed using the considered real-world graphs (except for Patent and Stackoverflow for which GoT and OTA run out of memory) are $81.0\%$ (GoT) and $85.7\%$ (OTA), which is lower than our classification accuracy ($97.2\%$). Detailed results are given in Appendix~\ref{sec:appendix:compare_got_ota}. Note that GoT and OTA approximate the counts of transitions between graphlets based on a small number of snapshots, while Algorithm~\ref{alg:count_graphlet} exactly counts the transitions.
\end{itemize}

In summary, \textbf{our way of characterizing temporal graphs using GTGs distinguishes the domains of temporal graphs most accurately with the accuracy of $97.2\%$.} The accuracies of the other methods are $83.3\%$, $81.0\%$, and $85.7\%$.

\section{Node Level Analysis} \label{section:node}
In this section, we study how local structures around nodes are related to their future importance.
Then, we enhance the predictability of future node centrality using the relations.

\subsection{Patterns}
We characterize the local structures of nodes using node roles and examine their relation to the nodes' future centrality.

\begin{table}[t]
\vspace{-3mm}
\caption{\label{tab:signal_degree}
The absolute value of the Spearman's rank correlation coefficients between node role ratios and future centralities (averaged over all node roles and all datasets for each centrality measure) and each value of the threshold $d_\theta$.
As the number of node neighbors increases (i.e., $d_\theta$ increases), the local-structural signals about future centralities become stronger (i.e., the absolute values increase).
}
\resizebox{\columnwidth}{!}{
    \begin{tabular}{|c|c|c|c|c|c|}
        \hline
        $d_\theta$ & Degree & Betweenness & Closeness & PageRank & Edge Betweenness \\
        \hline
        2        & 0.640  & 0.697       & 0.682     & 0.663    & 0.546            \\
        4        & 0.721  & 0.723       & 0.712     & 0.704    & 0.558            \\
        8        & 0.816  & 0.793       & 0.759     & 0.701    & 0.599            \\
        \hline
    \end{tabular}}
\end{table}

\smallsection{Local Structures of Nodes:}
Given a temporal graph $\SG$, we characterize the local structure of each node $v$ in their early stage by measuring the ratio of each node role at $v$ in the snapshot at time $t$ when the in-degree of $v$ first reaches a threshold $d_\theta$.
That is, each node $v$ is represented as a $30$-dimensional vector whose $i$-th is $\SNTi(v)/(\sum_{j=1}^{30}\SNTj(v))$  (see Section~\ref{sec:prelim:concept} for $\SNTi(v)$).


\smallsection{Future Importance of Nodes:}
Given a temporal graph $\SG$, as future importance of each node, we measure its in-degree, node betweenness centrality~\cite{freeman1977set}, closeness centrality~\cite{bavelas1950communication}, and PageRank~\cite{page1999PageRank} in the last snapshot of $\SG$.
Based on each centrality measure, we divide the nodes in $\SG$ into six groups (Group 1: top 50-100\%, Group 2: top 30-50\%, Group 3: top 10-30\%, Group 4: top 5-10\%, Group 5: top 1-5\%, and Group 6: top 0-1\%).

\begin{figure*}[t]
    \vspace{-3mm}
    \centering
		\includegraphics[width=0.4\textwidth]{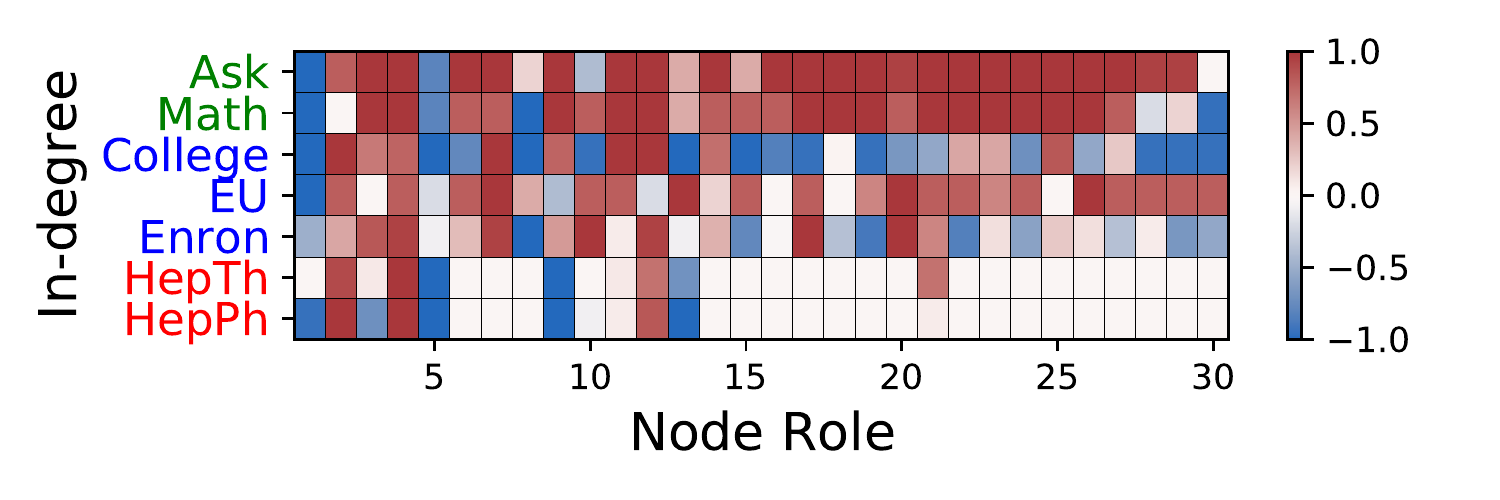}
		\includegraphics[width=0.4\textwidth]{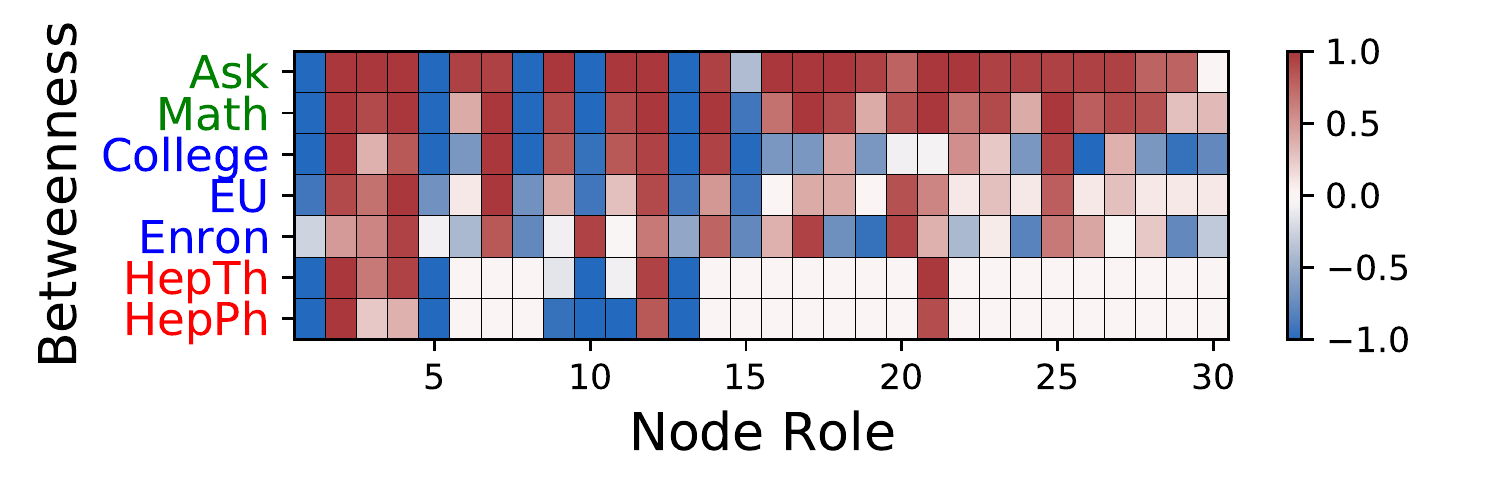} \\
            \vspace{-2mm}
		\includegraphics[width=0.4\textwidth]{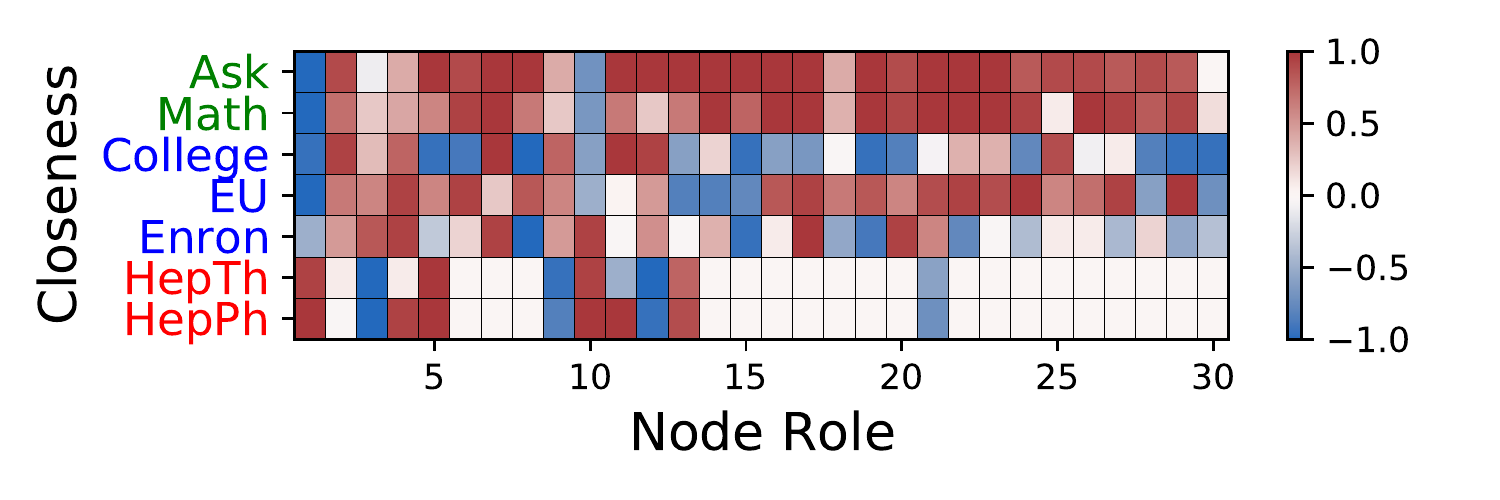}
            \includegraphics[width=0.4\textwidth]{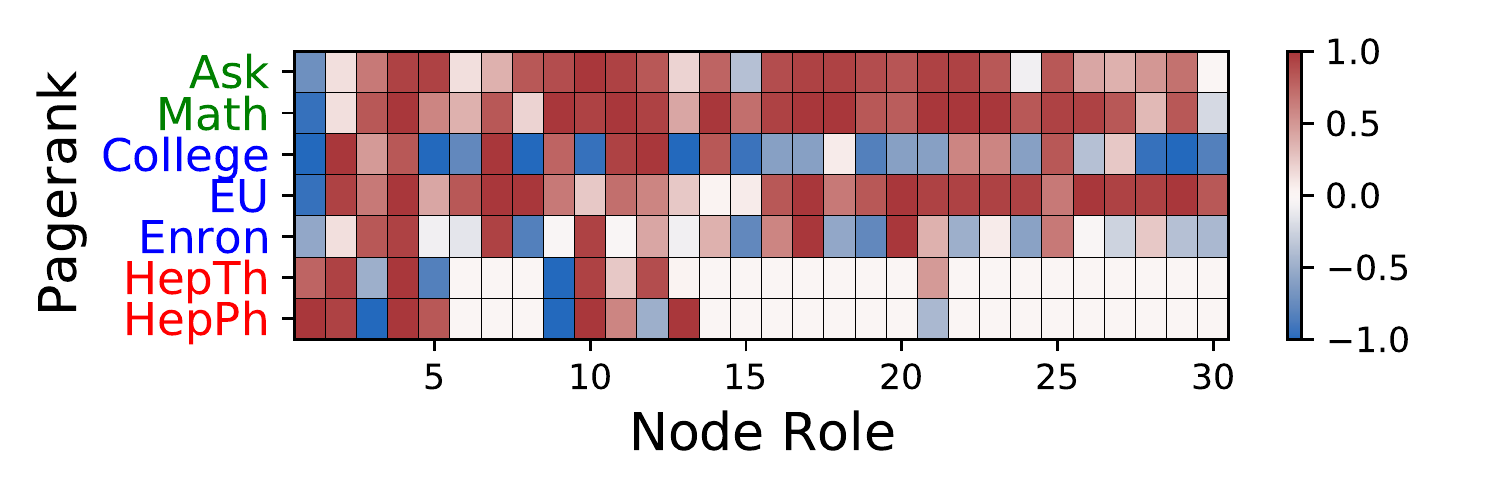} \\
            \vspace{-3mm}
	\caption{\label{fig:node_signals} 
     The Spearman's rank correlation coefficient  between node role ratios (when nodes have in-degree four, i.e., $d_\theta$ = 4)
     and future node centralities.
     The darker a cell is, the larger the absolute value of the corresponding coefficient is. Note that the absolute values of most coefficients are significantly greater than $0$.
     }
\end{figure*}

\smallsection{Finding Signals:}
For each group, we average the ratio vectors of the nodes in the group. 
Figure~\ref{fig:signal_degree} shows some averaged ratios when in-degree is used as the centrality measure. Note that the ratios of node roles 2 and 4 monotonically grow as future centrality increases, regardless of $d_\theta$ values. That is, the ratios of node roles 2 and 4 give a consistent signal regarding the nodes' future in-degree.

In Figure~\ref{fig:node_signals}, we report the Spearman's rank correlation coefficient~\cite{zwillinger1999crc} between each averaged ratio and the future centralities of nodes (specifically, the above group numbers between 1 and 6).
We also report in Table~\ref{tab:signal_degree} the absolute value of the coefficients (averaged over all node roles and all datasets) for each centrality measure and each value of the threshold $d_\theta$. Note that the average values are significantly greater than $0$ and specifically around $0.7$; and they
increase as $d_\theta$ increases, as summarized in Observation~\ref{obs:node:signal}.
%

\noindent\fbox{%
        \parbox{0.98\columnwidth}{
        \vspace{-2mm}
        \begin{observation}\label{obs:node:signal}
        In real-world graphs,
           the local structures of nodes in their early stage provide a signal regarding their future importance. The signals become stronger as nodes have more neighbors.
        \end{observation}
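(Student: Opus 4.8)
Since Observation~\ref{obs:node:signal} is an empirical regularity about real data rather than a combinatorial statement, the plan is to validate it through a measurement pipeline over the nine datasets. First, for each graph and each threshold $d_\theta \in \{2,4,8\}$, I would locate, for every node $v$, the first snapshot $\SGT$ in which $v$'s in-degree reaches $d_\theta$ (discarding nodes that never do), and record the $30$-dimensional role-ratio vector with $i$-th entry $\SNTi(v)/\sum_{j=1}^{30}\SNTj(v)$; the counts $\SNTi(v)$ are maintained incrementally exactly as in Algorithm~\ref{alg:track_graphlet}. Independently, I would compute each node's future importance --- in-degree, betweenness~\cite{freeman1977set}, closeness~\cite{bavelas1950communication}, and PageRank~\cite{page1999PageRank} in the last snapshot --- and assign nodes to the six importance groups defined above.

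To exhibit a \emph{signal}, for each (graph, centrality, $d_\theta$) triple I would average the role-ratio vectors within each group, obtaining six representative vectors, and then, for each of the $30$ roles, compute the Spearman rank correlation~\cite{zwillinger1999crc} between the group index $1,\dots,6$ and that role's averaged ratio. A signal is certified if the absolute correlations are, for most roles, well separated from $0$ --- visible in the monotone curves of Figure~\ref{fig:signal_degree}, the dark cells of Figure~\ref{fig:node_signals}, and the $\approx 0.7$ entries of Table~\ref{tab:signal_degree}. For the strengthening claim, I would average the absolute Spearman coefficient over all $30$ roles and all nine graphs for each $d_\theta$ and check that this aggregate grows with $d_\theta$, as the three rows of Table~\ref{tab:signal_degree} report.

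The main obstacle is robustness rather than any single computation. Because the early-stage snapshot is chosen per node, the role counts are drawn from graphs of very different sizes, so one must rule out that the correlation is an artifact of this heterogeneity --- e.g., by verifying it persists within individual datasets, as in Figure~\ref{fig:node_signals}. Likewise, correlating \emph{group-averaged} ratios rather than per-node ratios is a smoothing step that could inflate the coefficients, so the qualitative conclusion should be re-checked at the node level; and the monotonicity in $d_\theta$ must be shown to survive averaging over heterogeneous roles and datasets rather than being driven by a handful of them. The rest --- standard centrality and rank-correlation routines, and reuse of Algorithm~\ref{alg:track_graphlet} to keep role counts current as edges arrive --- is routine.
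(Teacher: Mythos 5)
Your proposal matches the paper's own validation of Observation~\ref{obs:node:signal} essentially step for step: early-stage role-ratio vectors at the first snapshot where in-degree reaches $d_\theta$, six centrality-based groups, group-averaged ratios, per-role Spearman correlations against the group index (Figures~\ref{fig:signal_degree} and~\ref{fig:node_signals}), and the monotone growth of the role- and dataset-averaged absolute coefficients with $d_\theta$ (Table~\ref{tab:signal_degree}). Your added robustness caveats (per-dataset checks, the smoothing effect of group averaging) go slightly beyond what the paper reports but do not change the approach.
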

        \vspace{-2mm}
        }
     }
\vspace{0.5mm}

\subsection{Prediction} \label{subsection:node_prediction}

Based on the observations above, we predict the future centrality of nodes using the counts of their roles at them in their early stage. 


\smallsection{Problem Formulation:} 
We formulate the prediction problem as a classification problem, as described in Problem~\ref{problem:node}.

\vspace{0.5mm}

\noindent\fbox{%
        \parbox{0.98\columnwidth}{%
        \vspace{-2mm}
        \begin{problem}[Node Centrality Prediction] \label{problem:node}
        \noindent\begin{itemize}
            \item \textbf{Given:} the snapshot $\SGTv$ of the input graph when the in-degree of a node $v$ first reaches $d_\theta$,
            \item \textbf{Predict:} whether the centrality of the node $v$ belongs to the top $20\%$ in the last snapshot of $\SG$.
        \end{itemize}
        \end{problem}
        \vspace{-2mm}
        }%
    }
\vspace{0.5mm}

\noindent As the centrality measure, we use in-degree, betweenness centrality, closeness centrality, and PageRank.
As $d_\theta$, we use $2$, $4$, or $8$.

\smallsection{Input Features:} 
For each node $v$, we consider the snapshot $\SGT$ of the input graph $\SG$ when the in-degree of $v$ first reaches $d_\theta$. That is, $t=\tv$ and $\SGT=\SGTv$.
Then, we extract the following sets of input features for $v$: 
\bit
    \item \textbf{Local-NR:} The count of each node role at $v$ in $\SGT$. That is, $[\SNTone(v),$ $\SNTtwo(v), \cdots, \SNTthirty(v)]$ (see Section~\ref{sec:prelim:concept} for $\SNTi(v)$).
    \item \textbf{Local-NPP \cite{yang2014predicting}:} In $\SGT$, we compute (1) the count of triangles at $v$, (2) the count of wedges centered at $v$, (3) the count of wedges ended at $v$.
    \item \textbf{Global-Basic:} Counts of nodes and edges in the snapshot.
    \item \textbf{Global-NR:} We compute the $30$-dimensional vector whose $i$-th entry $\SNTi(v)/(\sum_{j=1}^{30}\SNTj(v))$
    is the ratio of each node role at $v$ in $\SGT$.
    Then, we standardize (i.e., compute the $z$-score of) the role ratio vector using the mean and standard deviation from the role ratio vectors (in $\SGT$) of all nodes with degree $d_\theta$ in $\SGT$. The features in Local-NR are also included.
    \item \textbf{Global-NPP~\cite{yang2014predicting}:} In $\SGT$, we compute (1) the number of edges not incident to $v$ and (2) the number of non-adjacent node pairs where one is a neighbor of $v$ and the other is neither a neighbor of $v$ nor $v$ itself.
    The features in Local-NPP are also included.
    \item \textbf{ALL:}  All of \textbf{Global-NR}, \textbf{Global-NPP}, and \textbf{Global-Basic}.
\eit

Note that we categorize the above sets into global and local depending on whether global information in $\SGT$ (i.e., the number of all nodes in $\SGT$) is used or only local information at $v$ is used.

\smallsection{Prediction Method:}  
As the classifier, we use the \textit{random forest} model from the Scikit-learn library. The model has 30 decision trees with a maximum depth of 10.

\smallsection{Evaluation Method:} 
We use $80\%$ of the nodes for training and the remaining $20\%$ for testing. 
We evaluate the predictive performance in terms of F1-score, accuracy, and \textit{Area Under the ROC curve} (AUROC). A higher value indicates better prediction performance.


\smallsection{Result:}
Table~\ref{tab:pred-node-overall} shows the predictive performance from each set of input features when $d_\theta=2$, and Table~\ref{tab:pred-node-overall-degree} shows how the performance depends on the in-degree threshold $d_\theta$.
In the tables, we report the mean of each prediction performance over $10$ runs in the 7 datasets in Section~\ref{section:datasets} except for the two largest ones (i.e., Patent and Stackoverflow).
From the results, we draw the following observations.

\vspace{0.5mm}

\noindent\fbox{%
        \parbox{0.98\columnwidth}{%
        \vspace{-2mm}
        \begin{observation} \label{obs:node:pred_nr_npp}
            Among local features, 
            the counts of node roles at each node (\textbf{Local-NR}) are more informative than (\textbf{Local-NPP}) for future importance prediction.
        \end{observation}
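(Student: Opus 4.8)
The plan is to establish Observation~\ref{obs:node:pred_nr_npp} empirically, since it is a statement about predictive performance on real-world data rather than an analytic claim; the ``proof'' therefore consists of presenting and interpreting the classification results under the controlled protocol already fixed above (a random forest with $30$ trees of depth $10$, an $80\%/20\%$ train/test split, and F1-score, accuracy, and AUROC as metrics). First I would tabulate, in Table~\ref{tab:pred-node-overall}, the three metrics obtained from \textbf{Local-NR} and from \textbf{Local-NPP} for each of the four centrality measures (in-degree, betweenness, closeness, PageRank), averaged over the seven datasets and over $10$ runs. The crucial quantitative step is to check that \textbf{Local-NR} matches or exceeds \textbf{Local-NPP} on essentially every metric/centrality pair, and to summarize the typical margin; a secondary step is to confirm, using Table~\ref{tab:pred-node-overall-degree}, that this dominance is stable across the in-degree thresholds $d_\theta\in\{2,4,8\}$, so that it is not an artifact of one early-stage cutoff.

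Next I would argue that the observed gap is consistent with a structural refinement relationship between the two feature sets. The \textbf{Local-NPP} features of a node $v$ in $\SGTv$ are three scalars (the number of triangles at $v$, the number of wedges centered at $v$, and the number of wedges ended at $v$), whereas \textbf{Local-NR} records the $30$-dimensional vector $[\SNTone(v),\SNTtwo(v),\ldots,\SNTthirty(v)]$ of node-role counts. Since the wedge/triangle counts are coarse aggregates that are, up to directionality bookkeeping, sums of particular role counts, \textbf{Local-NR} is a strictly finer descriptor of the local structure around $v$; a random forest, which can select and combine individual coordinates, is therefore able to exploit distinctions (e.g., the orientation of the two adjacent edges in a wedge, or which of several three-node patterns a triangle sits in) that \textbf{Local-NPP} collapses. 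This explains both why \textbf{Local-NR} should never be substantially worse and why it is typically better.

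The main obstacle is not a calculation but guarding against the two standard objections to an empirical comparison. The first is fairness: the two feature sets must be fed to the \emph{same} classifier, with the \emph{same} folds and hyperparameters, on the \emph{same} snapshots $\SGTv$, which is exactly how the protocol above is set up, so this reduces to stating it explicitly. The second is overfitting: a larger feature set can win on training fit while losing on generalization, so the claim must rest on held-out test performance; reporting AUROC and accuracy on the $20\%$ test split, averaged over $10$ runs and seven datasets of widely varying sizes and domains, is what rules this out. If, after averaging, \textbf{Local-NR} still leads on the great majority of the $4\times 3$ metric/centrality entries at each $d_\theta$, the observation is justified; any isolated reversals (plausible for, say, closeness, which depends on global distances only weakly related to three-node structure) would be reported as exceptions rather than contradictions.
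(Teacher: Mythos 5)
Your proposal matches the paper's approach: the observation is an empirical claim, and the paper establishes it exactly as you describe, by comparing F1-score, accuracy, and AUROC of \textbf{Local-NR} versus \textbf{Local-NPP} under the same random-forest protocol, averaged over the seven datasets and ten runs (Table~\ref{tab:pred-node-overall}, with per-dataset breakdowns in Tables~\ref{tab:node_f1}--\ref{tab:node_auroc}), and noting that \textbf{Local-NR} wins in all metric/centrality settings. Your added remarks on the refinement relationship between role counts and wedge/triangle counts, and on checking robustness across $d_\theta$, go slightly beyond what the paper reports (its $d_\theta$ ablation in Table~\ref{tab:pred-node-overall-degree} covers only \textbf{ALL}), but they do not change the argument.
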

        \begin{observation}\label{obs:node:pred_all}
            The considered sets of features are complementary to each other. Using them all (\textbf{ALL}) leads to the best predictive performance in most cases.
        \end{observation}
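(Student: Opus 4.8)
The plan is to establish Observation~\ref{obs:node:pred_all}, and as a byproduct Observation~\ref{obs:node:pred_nr_npp}, empirically through a controlled prediction experiment, since the claims concern the predictive power of feature sets rather than any closed-form quantity. First I would instantiate the classification task of Problem~\ref{problem:node} on each of the seven datasets of Section~\ref{section:datasets} (all nine except the two largest, Patent and Stackoverflow), for each centrality measure (in-degree, betweenness, closeness, PageRank) and each in-degree threshold $d_\theta \in \{2, 4, 8\}$. For every node $v$ that ever attains in-degree $d_\theta$, I would form the snapshot $\SGTv$, run Algorithm~\ref{alg:track_graphlet} up to that snapshot to obtain the node-role counts $\SNTone(v), \dots, \SNTthirty(v)$, and from these together with the simple node/edge counts and the triangle/wedge counts assemble the six feature sets \textbf{Local-NR}, \textbf{Local-NPP}, \textbf{Global-Basic}, \textbf{Global-NR}, \textbf{Global-NPP}, and \textbf{ALL} exactly as defined above. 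The binary label of $v$ is whether its centrality falls in the top $20\%$ of the last snapshot.

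Next, for each configuration (dataset, centrality, $d_\theta$, feature set) I would train the fixed random-forest classifier (30 trees, maximum depth 10) on a uniformly random $80\%$ of the nodes and evaluate F1-score, accuracy, and AUROC on the held-out $20\%$, repeating the random split $10$ times and averaging to dampen sampling noise. Observation~\ref{obs:node:pred_nr_npp} then reduces to checking that the \textbf{Local-NR} scores exceed the \textbf{Local-NPP} scores in the large majority of configurations, and Observation~\ref{obs:node:pred_all} reduces to checking that \textbf{ALL} attains the best score (or a score statistically indistinguishable from the best) in most configurations, which is the operational meaning of ``the feature sets are complementary''. To make the complementarity claim convincing rather than merely consistent, I would also report each component set in isolation and verify that no single set matches \textbf{ALL}; that is, stacking the global and NPP-style features on top of the node-role features should consistently preserve and usually improve accuracy.

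The main obstacle is not any individual computation but the honest aggregation behind the phrase ``in most cases'': with many cells of the form (dataset $\times$ centrality $\times$ $d_\theta$), the gaps in some cells will lie within run-to-run variance, so I expect the claim to hold only in aggregate. I would therefore (i) summarize with win/tie/loss tallies of \textbf{ALL} against each proper subset across all cells, (ii) report standard deviations over the $10$ runs so that ties are identifiable, and (iii) state explicitly the cells where a proper subset ties or marginally beats \textbf{ALL}. A secondary, more mundane difficulty is guaranteeing a fair comparison: every feature set must be fed to the identical classifier with identical hyperparameters and identical train/test folds, and the node-role counts and the \textbf{Local-NPP} triangle/wedge counts must be computed on the same snapshot $\SGTv$, so that any performance gap is attributable to the features themselves and not to the evaluation pipeline.
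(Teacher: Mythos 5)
Your proposal is correct and follows essentially the same route as the paper: the observation is supported empirically by running the node-centrality prediction task of Problem~\ref{problem:node} on the seven datasets (excluding Patent and Stackoverflow) with the fixed random-forest classifier, an $80/20$ split averaged over $10$ runs, and comparing F1-score, accuracy, and AUROC of \textbf{ALL} against each constituent feature set, exactly as reported in Table~\ref{tab:pred-node-overall} and the appendix tables. Your additional suggestions (win/tie/loss tallies and explicit identification of cells where a proper subset ties or beats \textbf{ALL}) are reasonable refinements but do not change the substance; the paper already reports per-dataset standard deviations and marks best and second-best entries, and its comparison of feature sets is presented only for $d_\theta=2$ rather than the full cross product you describe.
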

        \begin{observation}\label{obs:node:degree}
            As nodes have more neighbors, their future importance can be predicted more accurately.
        \end{observation}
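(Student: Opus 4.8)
Because this is an empirical claim, the plan is to establish it through the prediction experiments set up above and to read the trend directly from Table~\ref{tab:pred-node-overall-degree}. First I would fix the three thresholds $d_\theta \in \{2,4,8\}$ and, for each, instantiate Problem~\ref{problem:node}: for every node $v$ that eventually attains in-degree $d_\theta$, take the snapshot $\SGTv$, compute the role-based feature vector (together with the global features, since by Observation~\ref{obs:node:pred_all} they are complementary), train the random-forest classifier on $80\%$ of the nodes, and evaluate F1-score, accuracy, and AUROC on the held-out $20\%$, averaging over $10$ runs and over the seven smaller datasets, separately for each of the four centrality measures.

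The second step is to exhibit monotonicity: for a fixed centrality measure and a fixed feature set, each of the three metrics should be (weakly) increasing in $d_\theta$. The mechanism is already available. By Observation~\ref{obs:node:signal} and Table~\ref{tab:signal_degree}, the Spearman correlation between node-role ratios and future centrality itself grows with $d_\theta$, so the features become strictly more discriminative; and a node with more neighbors sits in more graphlet instances --- its per-node instance count is $\Theta((\DTV)^2)$, as in the proof of Lemma~\ref{lem:time:optimality} --- so the $30$-dimensional role-count vectors are denser and carry more information. I would also verify that adding \textbf{Global-NR}, \textbf{Global-NPP}, and \textbf{Global-Basic} does not erase the effect, i.e., that it is not an artifact of one feature family.

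The hard part will not be any calculation but a confounder: increasing $d_\theta$ shrinks and changes the population of nodes on which we predict, retaining only the nodes that grow enough to reach in-degree $d_\theta$. The improvement could then be partly a survivorship effect --- the surviving nodes might be ``easier'' for reasons unrelated to their local structure. The honest argument must address this, either by restricting, at all thresholds, to the common subpopulation of nodes that reach the largest threshold, or by noting that the label in Problem~\ref{problem:node} (``top $20\%$ of the last snapshot'') is defined against \emph{all} nodes, so within the retained subpopulation the class balance still shifts and the gain is genuine. I would report under the latter reading, which matches the problem statement, and emphasize that the upward trend holds simultaneously for in-degree, betweenness, closeness, and PageRank --- the evidence that it reflects a real strengthening of the local-structural signal rather than a mere change of population.
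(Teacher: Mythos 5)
Your proposal matches the paper's own support for this observation: the paper establishes it empirically by reporting F1-score, accuracy, and AUROC for the \textbf{ALL} feature set at $d_\theta \in \{2,4,8\}$ (averaged over the seven smaller datasets, with per-dataset detail in Appendix~\ref{sec:appendix:node:details}) and reading off that the metrics improve with $d_\theta$ in most cases, exactly as in Table~\ref{tab:pred-node-overall-degree}. Your additional points --- the mechanism via the strengthening Spearman correlations in Table~\ref{tab:signal_degree} and the survivorship confounder from the shrinking node population --- go beyond what the paper discusses, and the latter is a legitimate caveat the paper does not address, but the core evidential route is the same.
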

        \vspace{-2mm}
        }%
    }


\begin{table}
	\centering
	\caption{\label{tab:pred-node-overall} 
    F1-score, accuracy, and AUROC on the task of predicting future node importance when $d_\theta=2$ averaged over the 7 considered real-world graphs.
	Among local features, using \textbf{Local-NR} yields better performance than using \textbf{Local-NPP} in all settings. 
	Using  \textbf{ALL} leads to the best performance in most cases, indicating that the considered sets of features are complementary to each other. Detailed results on each dataset can be found in Appendix~\ref{sec:appendix:node:details}.
	}
	\resizebox{\columnwidth}{!}{
		\begin{tabular}{|c|c|c|c|c|c|c|}
		\hline
			Target & \multicolumn{3}{c|}{Degree} & \multicolumn{3}{c|} {Betweenness}  \\ 
			\hline
			Measure & F1-score & Accuracy & AUROC & F1-score & Accuracy & AUROC \\
			\hline
			Local-NR     & 0.39 & 0.69 & 0.68 & 0.59 & 0.83 & 0.82 \\
			Local-NPP    & 0.38 & 0.68 & 0.64 & 0.58 & 0.81 & 0.79 \\
			\hline
			Global-NR    & \bf{0.57} & \bf{0.74} & \bf{0.78} & 0.64 & 0.84 & 0.85 \\
			Global-NPP   & \bf{0.57} & 0.73      & 0.77      & 0.64 & 0.84 & 0.85 \\
			Global-Basic & 0.50      & 0.72      & 0.73      & 0.24 & 0.73 & 0.67 \\
			\hline
			ALL          & \bf{0.57}      & \bf{0.74} & \bf{0.78} & \bf{0.65} & \bf{0.85} & \bf{0.86}\\
			\hline
			\hline
			Target & \multicolumn{3}{c|}{Closeness} & \multicolumn{3}{c|} {PageRank}  \\ 
			\hline
			Measure & F1-score & Accuracy & AUROC & F1-score & Accuracy & AUROC \\
			\hline
			Local-NR     & 0.51 & 0.76 & 0.78 & 0.42 & 0.73 & 0.73  \\
			Local-NPP    & 0.43 & 0.70 & 0.69 & 0.37 & 0.69 & 0.67 \\
			\hline
			Global-NR    & 0.68 & 0.82 & 0.87 & 0.54 & \bf{0.75} & \bf{0.79} \\
			Global-NPP   & 0.66 & 0.80 & 0.85 & 0.54 & 0.74 & 0.78 \\
			Global-Basic & 0.59 & 0.75 & 0.79 & 0.47 & 0.71 & 0.74 \\
			\hline
			ALL          & \bf{0.69} & \bf{0.83} & \bf{0.88} & \bf{0.56} & \bf{0.75} & \bf{0.79} \\
			\hline
		\end{tabular}
	}
\end{table}

\smallsection{Feature Importance:} Additionally, we measure the importance of each feature in the set \textbf{ALL} using \textit{Gini-importance}~\cite{loh2011classification}, and we report the top five important features in Table~\ref{tab:feature_importance} in Appendix~\ref{sec:appendix:feature_importance}.

\vspace{0.5mm}

\noindent\fbox{%
        \parbox{0.98\columnwidth}{%
        \vspace{-2mm}
        \begin{observation}
            Strong predictors vary depending on centrality measures. For example, for betweenness centrality, the counts of node roles as bridges (i.e., \textbf{Local NR-4} and \textbf{Global NR-4}) are strong.
        \end{observation}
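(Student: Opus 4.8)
The plan is to substantiate this observation empirically, since it is a statement about which input features a trained classifier relies on rather than a mathematical identity. First I would, for each centrality measure (in-degree, betweenness, closeness, PageRank) and each in-degree threshold $d_\theta \in \{2,4,8\}$, train the random forest of Section~\ref{subsection:node_prediction} on the full feature set \textbf{ALL}, and then read off the Gini-importance (mean decrease in impurity) of every feature from the fitted forest. Averaging these importances over the $10$ random-seed runs and over the $7$ datasets used in the prediction experiments yields a single importance score per feature for each centrality measure; I would then sort the features within each centrality measure by this score and report the top five in Table~\ref{tab:feature_importance}.

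The observation then follows by inspection of that table. The qualitative half (``strong predictors vary depending on centrality measures'') holds as long as the identity of the highest-ranked features changes from one centrality column to another, so that no single feature dominates universally. For the concrete betweenness claim, I would verify that the bridge-role counts --- node role $4$ computed at the node (\textbf{Local NR-4}) and its standardized version (\textbf{Global NR-4}) --- appear among the top features in the betweenness column but rank lower in the other columns. This is intuitive: a node sitting at the center of an open wedge (the graphlet/orbit underlying role $4$) lies on the unique shortest path between its two otherwise non-adjacent neighbors, which is precisely what betweenness rewards. It can also be cross-checked qualitatively against the Spearman-correlation heatmaps in Figure~\ref{fig:node_signals}, where the role-$4$ ratio is among the entries most strongly correlated with future betweenness.

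The main obstacle is that Gini-importance is only a heuristic attribution and is known to be unstable, particularly when features are correlated --- and here \textbf{Local-NR} is a strict sub-vector of \textbf{Global-NR} by construction, while several \textbf{NPP} and \textbf{Basic} features are essentially functions of the degree and the snapshot size. To make the reported ranking trustworthy I would (i) average over the $10$ runs and all $7$ graphs to damp the variance of the impurity estimates, (ii) cross-check with an alternative attribution, e.g.\ permutation importance on a held-out split, and confirm that the same handful of features (in particular the role-$4$ features for betweenness) remains on top, and (iii) report the spread of importances across runs so that the claimed differences between centrality measures exceed the noise level. If the permutation ranking agrees with the Gini ranking on the headline features, the observation is well supported.
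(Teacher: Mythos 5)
Your proposal matches the paper's approach: the observation is supported empirically by computing the Gini importance of each feature in the \textbf{ALL} set from the trained random forest and reporting the top five predictors per centrality measure in Table~\ref{tab:feature_importance}, where \textbf{Local NR-4} and \textbf{Global NR-4} indeed rank second and third for betweenness while other centralities are dominated by different features. Your additional robustness checks (permutation importance, reporting variance across runs) go beyond what the paper does but do not change the route.
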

        
        \vspace{-2mm}
        }%
    }
\begin{table}[t]
 	\vspace{-2mm}
	\centering
	\caption{\label{tab:pred-node-overall-degree} 
    Average F1-score, accuracy, and AUROC on the task of predicting future node importance depending on $d_\theta$ (i.e., in-degree of nodes when their input features are extracted). 
	The overall performance improves with respect to $d_\theta$ in most cases. That is,
	as nodes have more neighbors, their future importance can be predicted more accurately.
 Detailed results on each dataset can be found in Appendix~\ref{sec:appendix:node:details}.
	}
	\resizebox{\columnwidth}{!}{
		\begin{tabular}{|c|c|c|c|c|c|c|}
		\hline
			Target & \multicolumn{3}{c|}{Degree} & \multicolumn{3}{c|} {Betweenness} \\ 
			\hline
			Measure & F1-score & Accuracy & AUROC & F1-score & Accuracy & AUROC \\
			\hline
			ALL $(d_\theta=2)$ & 0.59 & 0.74 & 0.78 & 0.65 & 0.85 & 0.86 \\
			ALL $(d_\theta=4)$ & 0.69 & 0.78 & 0.79 & 0.73 & 0.83 & 0.87 \\
			ALL $(d_\theta=8)$ & 0.80 & 0.81 & 0.86 & 0.82 & 0.85 & 0.90 \\
			\hline
			\hline
			Target & \multicolumn{3}{c|}{Closeness} & \multicolumn{3}{c|} {PageRank} \\ 
			\hline
			Measure & F1-score & Accuracy & AUROC & F1-score & Accuracy & AUROC \\
			\hline
			ALL $(d_\theta=2)$ & 0.69 & 0.83 & 0.88 & 0.55 & 0.75 & 0.79  \\
			ALL $(d_\theta=4)$ & 0.78 & 0.83 & 0.89 & 0.73 & 0.77 & 0.80 \\
			ALL $(d_\theta=8)$ & 0.86 & 0.88 & 0.92 & 0.85 & 0.85 & 0.83 \\
			\hline
		\end{tabular}
	}
\end{table}

\begin{figure}[t]
    \vspace{-3mm}
     \centering
     \includegraphics[width=0.45\textwidth]{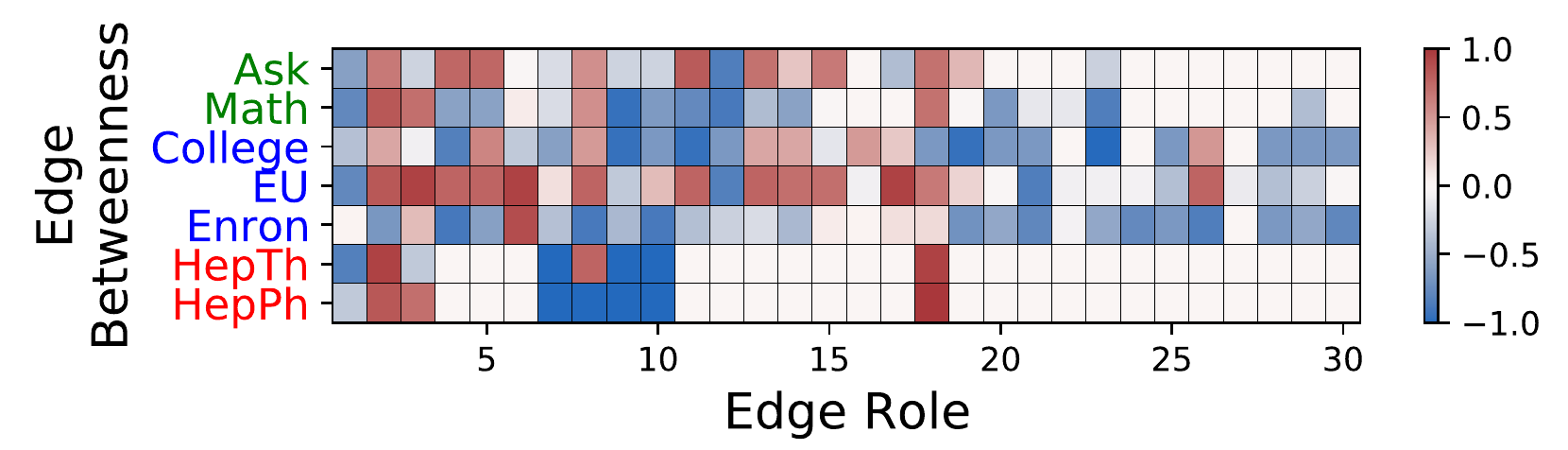} \\
     \vspace{-2mm}
     \caption{\label{fig:edge_signal}
     The Spearman's rank correlation coefficient between edge role ratios 
     (when endpoints have in-degree $4$ in total, i.e., $d_\theta$ = 4)
     and future edge centralities.
     The darker a cell is, the larger the absolute value of the corresponding coefficient is. Note that the absolute values of many coefficients are significantly greater than $0$, while they tend to be smaller than those in Figure~\ref{fig:node_signals}.
     }
     
\end{figure}

\section{Edge Level Analysis}\label{section:edge}

In this section, we investigate the signal of local structures of each edge regarding their future centrality, and based on the signal, we predict the future importance of edges.  

We generally follow the procedures in Section~\ref{section:node}, except for the following differences: (a) we examine the ratios of edge roles at each edge $u\rightarrow v$ when the sum of the in-degrees of $u$ and $v$ becomes $d_\theta$, (b) we use edge betweenness centrality~\cite{freeman1977set} as the importance measure, (c) we formulate the problem of predicting future edge importance as described in Problem~\ref{problem:edge}, (d)
we extract feature sets \textbf{Local-ER} and \textbf{Global-ER} using the (relative) counts of edge roles at edges as we extract \textbf{Local-NR} and \textbf{Global-NR}, and (e) we union \textbf{Global-ER} and \textbf{Global-Basic} for $\textbf{ALL}$.

\vspace{0.5mm}

\noindent\fbox{%
        \parbox{0.98\columnwidth}{%
        \vspace{-2mm}
        \begin{problem}[Edge Centrality Prediction] \label{problem:edge}
        \noindent\begin{itemize}
            \item \textbf{Given:} the snapshot $\SGTe$ of the input graph when the sum of the in-degrees of the endpoints of each edge first reaches $d_\theta$,
            \item \textbf{Predict:} whether the centrality of each edge belongs to the top $20\%$ in the last snapshot of $\SG$.
        \end{itemize}
        \end{problem}
        \vspace{-2mm}
        }%
    }

\vspace{0.5mm}

From Figure~\ref{fig:edge_signal}, Table~\ref{tab:signal_degree}, and Table~\ref{tab:pred-edge-overall}, we draw the following observations.


\noindent\fbox{%
        \parbox{0.98\columnwidth}{%
        \vspace{-2mm}
        \begin{observation}\label{obs:edge_signal}
        In real-world graphs, the signals from the local structures of edges in their early stage regarding their future importance are weaker, compared to the signals that from the local structures of nodes (see Figure~\ref{fig:edge_signal}).
        \end{observation}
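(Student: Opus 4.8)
The plan is to support Observation~\ref{obs:edge_signal} empirically by rerunning, at the edge level, the same measurement pipeline used for the node-level signals in Section~\ref{section:node}, so that the two signal strengths are placed on a common footing and can be compared directly. First I would fix, for each edge $u\rightarrow v$ in each of the nine real-world graphs, its ``early-stage'' snapshot as $\SGTe$ --- the first snapshot in which the sum of the in-degrees of $u$ and $v$ reaches $d_\theta$ --- and represent the edge by its $30$-dimensional vector of edge-role ratios in that snapshot, the $j$-th entry being the count of edge role $j$ at $u\rightarrow v$ divided by the sum of the counts over all $30$ edge roles. In parallel I would compute the future importance of every edge as its edge betweenness centrality~\cite{freeman1977set} in the last snapshot, and bucket the edges into the same six centrality groups used at the node level. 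Averaging the role-ratio vectors within each group and then computing Spearman's rank correlation between each averaged ratio and the group index yields, for each edge role, a signed signal-strength value; these are precisely the quantities visualized in Figure~\ref{fig:edge_signal} and summarized --- averaged over all edge roles and all datasets --- in the ``Edge Betweenness'' column of Table~\ref{tab:signal_degree}. The edge-role counts themselves require only a minor adaptation of Algorithm~\ref{alg:count_graphlet}, additionally recording which edge role an edge takes in each graphlet instance it joins.

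Next I would carry out the comparison against the node-level counterpart, which is reported in Figure~\ref{fig:node_signals} and in the first four columns of Table~\ref{tab:signal_degree}. The observation then follows by inspection: for every $d_\theta\in\{2,4,8\}$ the averaged absolute correlation for edge betweenness ($0.546$, $0.558$, $0.599$) is strictly below the corresponding value for every node-centrality measure (all at least about $0.55$ and typically near $0.7$--$0.8$), and the heatmap in Figure~\ref{fig:edge_signal} is uniformly lighter than those in Figure~\ref{fig:node_signals}. Since Table~\ref{tab:signal_degree} already collects the node and edge quantities side by side, no further experiment is needed; the statement is a direct reading of that table together with the two figures.

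The hard part will be arguing that the comparison is genuinely apples-to-apples rather than an artifact of the differing setups at the two levels. Three discrepancies need care: (i) the early-stage threshold is placed on a single node's in-degree at the node level but on the sum of two endpoints' in-degrees at the edge level, so the early-stage snapshots differ; (ii) the centrality measures differ, since there is no single canonical edge analogue of closeness or PageRank, which is why the comparison is anchored on betweenness, whose node and edge versions are defined analogously; and (iii) the role alphabets differ --- the $30$ node orbits versus the $30$ edge orbits of Figure~\ref{fig:graphlet_and_role}. I would address (i) and (iii) by reporting the gap consistently across all three values of $d_\theta$ and after averaging over all roles and all nine datasets, which washes out role- and snapshot-specific idiosyncrasies; and I would address (ii) by stressing that even restricting attention to betweenness alone (node betweenness versus edge betweenness) the edge signal is the weaker one, so the conclusion does not hinge on which node centralities enter the average.
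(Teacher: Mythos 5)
Your proposal matches the paper's own justification of Observation~\ref{obs:edge_signal}: the paper likewise reuses the node-level pipeline with the modifications you list (threshold on the sum of endpoint in-degrees, edge betweenness as the target, the $30$ edge roles, six centrality groups, Spearman correlation), and reads the conclusion directly off Figure~\ref{fig:edge_signal} and the ``Edge Betweenness'' column of Table~\ref{tab:signal_degree} versus the node columns. Your added caveats about the comparison being apples-to-apples go beyond what the paper explicitly argues, but the core evidence and reasoning are the same.
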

        
        \begin{observation}\label{obs:edge_stronger}
        However, the signals become stronger as the edges are better connected, leading to better prediction performance (see Tables~\ref{tab:signal_degree} and~\ref{tab:pred-edge-overall}).
        \end{observation}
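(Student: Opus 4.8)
The plan is to establish Observation~\ref{obs:edge_stronger} empirically, in two complementary directions that parallel the node-level analysis of Section~\ref{section:node}: first, that the \emph{signal} --- the rank correlation between an edge's early-stage role ratios and its future centrality --- strengthens as the connectivity threshold $d_\theta$ grows; and second, that this translates into better \emph{prediction performance} for Problem~\ref{problem:edge}. For the signal part, I would, for each edge $u\rightarrow v$ and each $d_\theta \in \{2,4,8\}$, take the snapshot $\SGTe$ at the first time the sum of the in-degrees of $u$ and $v$ reaches $d_\theta$, form the $30$-dimensional edge-role ratio vector there, bucket edges into the six groups induced by future edge betweenness centrality, compute the Spearman's rank correlation coefficient between each role ratio and the group index, and average the absolute values over all $30$ edge roles and over the seven datasets (all but Patent and Stackoverflow). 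The assertion ``signals become stronger'' is then simply the monotone increase of these averages with $d_\theta$ (the last column of Table~\ref{tab:signal_degree}: $0.546 < 0.558 < 0.599$).

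For the prediction part, I would instantiate Problem~\ref{problem:edge} with edge betweenness centrality as the target, extract the feature sets \textbf{Local-ER}, \textbf{Global-ER}, \textbf{Global-Basic}, and \textbf{ALL} ($=$ \textbf{Global-ER} $\cup$ \textbf{Global-Basic}) from $\SGTe$, train the same random-forest classifier and $80/20$ split as in Section~\ref{subsection:node_prediction}, and report F1-score, accuracy, and AUROC averaged over $10$ runs and the seven datasets, separately for each $d_\theta$. Observation~\ref{obs:edge_stronger} is then read off Table~\ref{tab:pred-edge-overall} as a (largely) monotone improvement of all three metrics in $d_\theta$, echoing the node-level counterpart (Observation~\ref{obs:node:degree}, Table~\ref{tab:pred-node-overall-degree}).

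The hard part will be ruling out a confound: raising $d_\theta$ not only makes an edge better connected but also pushes $\SGTe$ later in the graph's history, so $\SGTe$ simply contains more of the graph overall. I would argue the gain is genuinely local by keeping \textbf{Local-ER}, which uses only role counts at the edge itself and no global node/edge totals, and checking that its performance also rises with $d_\theta$; if necessary, a control that advances the snapshot time while holding $d_\theta$ fixed would show the pure time effect does not account for the improvement. A secondary difficulty is sample size: at $d_\theta = 8$ far fewer edges qualify --- in the smallest graphs (EU, College) the qualifying set may be small --- so I would confirm the per-dataset trends are consistent rather than driven by one or two graphs, and I would exclude Patent and Stackoverflow for tractability exactly as in the node-level experiments.
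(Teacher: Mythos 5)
Your proposal matches the paper's own support for this observation: the ``signal'' half is exactly the last column of Table~\ref{tab:signal_degree} (average absolute Spearman coefficients $0.546 < 0.558 < 0.599$ over $d_\theta \in \{2,4,8\}$), and the ``prediction'' half is exactly the $d_\theta$-varying rows of Table~\ref{tab:pred-edge-overall}, obtained with the same feature sets, classifier, split, and seven datasets. The confound and sample-size controls you add go beyond what the paper actually does, but the core empirical argument is the same.
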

        
        \begin{observation}\label{obs:edge_pred}
           The features from edge roles (\textbf{Local-ER} and \textbf{Global-ER}) are more informative than simple global statistics (\textbf{Global-Basic}) for future importance prediction (see Table~\ref{tab:pred-edge-overall}).
        \end{observation}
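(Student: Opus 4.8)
The plan is to establish Observation~\ref{obs:edge_pred} empirically, in exactly the same spirit as the node-level analysis of Section~\ref{subsection:node_prediction}: we treat it as a predictive-modeling claim and substantiate it by comparing the out-of-sample accuracy achievable from different feature sets on the edge centrality prediction task of Problem~\ref{problem:edge}. Concretely, for each edge $u\rightarrow v$ in each of the seven smaller datasets of Table~\ref{tab:data}, we first identify the snapshot $\SGTe$ in which the combined in-degree of $u$ and $v$ first reaches $d_\theta$, and from this snapshot we extract (i) \textbf{Local-ER}, the raw counts of the $30$ edge roles at $u\rightarrow v$; (ii) \textbf{Global-ER}, the $z$-scored edge-role ratio vector standardized against all edges whose endpoints have combined in-degree $d_\theta$ in $\SGTe$ (with \textbf{Local-ER} appended); and (iii) \textbf{Global-Basic}, the counts of nodes and edges in $\SGTe$.

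Next I would feed each feature set to the same random-forest classifier used for nodes ($30$ trees, maximum depth $10$), with an $80/20$ train/test split, and report F1-score, accuracy, and AUROC for the binary task of predicting whether the edge's betweenness centrality lies in the top $20\%$ of the last snapshot, averaged over $10$ runs. Observation~\ref{obs:edge_pred} then follows by reading off Table~\ref{tab:pred-edge-overall}: the claim is exactly that the rows for \textbf{Local-ER} and \textbf{Global-ER} dominate the row for \textbf{Global-Basic} on all three metrics, and for robustness we would repeat the comparison across $d_\theta \in \{2,4,8\}$ so the conclusion is not an artifact of a single connectivity threshold (and, as a side benefit, this also supports Observation~\ref{obs:edge_stronger}).

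The main obstacle I anticipate is not a technical one but a matter of making the comparison convincing despite the weaker edge-level signal flagged in Observation~\ref{obs:edge_signal}: because the Spearman correlations in Figure~\ref{fig:edge_signal} are smaller than their node counterparts in Figure~\ref{fig:node_signals}, the margin by which \textbf{Local-ER} and \textbf{Global-ER} beat \textbf{Global-Basic} may be modest, so the experiment must control for class imbalance (only $20\%$ of edges are positive) by reporting threshold-free AUROC alongside F1, and must verify that the gap is consistent across datasets rather than driven by one or two graphs; the per-dataset breakdown (deferred to the appendix) is what closes this gap. A secondary, purely computational concern is efficiently enumerating the $30$ edge orbits within all three-node graphlets as the graph grows, which is handled by the orbit-counting variant of Algorithm~\ref{alg:count_graphlet}, whose cost is again $\Theta(\Sigma_{v\in\SV}(d(v))^2)$ by the argument of Theorem~\ref{thm:time:track}.
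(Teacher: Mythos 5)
Your proposal matches the paper's approach: Observation~\ref{obs:edge_pred} is established empirically by running exactly the edge-level analogue of the node pipeline (Problem~\ref{problem:edge}, the same random forest, the same 80/20 split and metrics, averaged over the seven smaller datasets) and reading the comparison off Table~\ref{tab:pred-edge-overall}. One small caveat: the table does not show \textbf{Local-ER} and \textbf{Global-ER} dominating \textbf{Global-Basic} on \emph{all} three metrics (Local-ER's accuracy is $0.78$ versus $0.79$ for Global-Basic at $d_\theta=2$), which is why the paper phrases the superiority as holding ``in most settings'' rather than uniformly.
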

        \vspace{-2mm}
        }%
     }

\begin{table}[t]
	\vspace{-3mm}
	\centering
	\caption{\label{tab:pred-edge-overall} 
    F1-score, accuracy, and AUROC on the task of predicting future edge importance averaged over the 7 considered real-world graphs.
	Using edge role-based features (\textbf{Local-ER} and \textbf{Global-ER}) yields better performance than using \textbf{Global-Basic} in most settings. 
	The overall performance improves with respect to $d_\theta$.
	That is, as edges are better connected, their future importance is predicted more accurately.
     Detailed results on each dataset can be found in Appendix~\ref{sec:appendix:edge:details}
	}
	\scalebox{0.90}{
		\begin{tabular}{|c|c|c|c|}
		\hline
			Target & \multicolumn{3}{c|}{Edge betweenness}  \\ \hline
            Measure & F1-Score & Accuracy   &  AUROC \\ \hline
            Local-ER $(d_\theta = 2)$       & 0.45 & 0.78 & 0.76 \\
            Global-ER $(d_\theta = 2)$      & 0.47 & 0.81 & 0.78 \\
            Global-Basic $(d_\theta = 2)$   & 0.42 & 0.79 & 0.73 \\
            ALL $(d_\theta = 2)$            & 0.50 & 0.80 & 0.75 \\
			\hline
			ALL $(d_\theta = 2)$            & 0.50 & 0.80 & 0.75 \\
			ALL $(d_\theta = 4)$            & 0.53 & 0.82 & 0.84 \\
			ALL $(d_\theta = 8)$            & 0.52 & 0.85 & 0.85 \\
			\hline
		\end{tabular}
	}
\end{table}


\section{Related Work}\label{section:relwork}

Previous studies on temporal graph analysis are largely categorized into (a) designing algorithms for streaming graphs~\cite{lee2020temporal, eswaran2018spotlight, liben2007link, mcgregor2014graph}, (b) discovering temporal patterns in graphs~\cite{leskovec2005graphs, akoglu2008rtm, beyer2010mechanistic, akoglu2010structure, bahulkar2016analysis}, and (c) generating graphs with realistic dynamics~\cite{barabasi1999emergence, leskovec2010kronecker, akoglu2008rtm}. This work belongs to the second category.

Studies in this category have revealed (a) universal temporal patterns, such as densification~\cite{leskovec2005graphs}, shrinking diameter~\cite{leskovec2005graphs}, and power-laws between principle eigenvalues and edge counts~\cite{akoglu2008rtm}; and (b) domain-specific patterns in hyperlink networks~\cite{broder2011graph}, metabolic networks (e.g., biochemical reactions and  protein interactions)~\cite{beyer2010mechanistic}, communication networks (e.g., phone calls and
texts)~\cite{hidalgo2008dynamics,akoglu2010structure}, and friendship networks~\cite{bahulkar2016analysis}.

In particular, for the analysis of local structures, the concept of graphlets \cite{prvzulj2007biological} (i.e., the sets of isomorphic small subgraphs with a predefined number of nodes) has been extended to temporal graphs. The extensions, which are called \textit{temporal network motifs}, have multiple variants.
Kovanen et al.~\cite{kovanen2011temporal}  defined them as sets of temporal subgraphs with a fixed number of nodes that are (a) topologically equivalent, (b) temporally equivalent (specifically, relative orders of constituent edges are identical), (c)  consecutive (specifically, constituent edges are consecutive for every node), and (d) temporally local (specifically, arrival times of consecutive edges are close enough).
Hulovaty et al.~\cite{hulovatyy2015exploring} ignores (c); and Paranjape et al.~\cite{paranjape2017motifs} 
ignores (c) and relaxes (d) by restricting only the time difference between the first edge and the last edge.
Note that all these notions focus on temporally local subgraphs, and thus they are suitable only for analyzing short-term dynamics. 

For long-term dynamics in local structures,
David et al.~\cite{aparicio2018graphlet} proposed Graph-orbit Transition (GoT) and Orbit Temporal Agreement (OTA), which characterize the dynamic of a temporal graph by approximately counting the number of transitions between node roles.
However, due to high computational overhead, only a small fraction of snapshots can be compared for estimating the counts of transitions, and as a result, their  characterization powers are significantly weaker than our characterization method using GTGs (see Section~\ref{section:graph:transition}). Recall that our method counts ``every'' transition between graphlets, and it is still significantly faster than GOT and OTA (see Section~\ref{sec:appendix:compare_got_ota} in Appendix).


For predicting the future in-degree of nodes, Yang et al.~\cite{yang2014predicting} proposed to use five features obtained from graphlets with three nodes (see Section~\ref{subsection:node_prediction} for descriptions).
As shown empirically, our proposed features tend to provide better prediction performance than these five features, and more importantly, they are complementary to each other.
Faisal and Milenkovi{\'c}~\cite{faisal2014dynamic} aimed to detect aging-related nodes, whose topological properties (e.g,. graphlet counts) change highly over time, in the gene expression process. 

On the algorithmic aspect, a great number of algorithms have been developed for the problem of counting the instances of each graphlet, which is also known as the subgraph counting problem. 
As suggested in a survey on subgraph counting \cite{ribeiro2021survey},  subgraph-counting algorithms are largely categorized into exact counting \cite{milo2002network, schiller2015stream, ortmann2017efficient, ahmed2017graphlet} and approximated counting  \cite{wernicke2005faster, aslay2018mining}.
Those in the first category are further categorized into enumeration-based approaches \cite{milo2002network, schiller2015stream}, matrix-based approaches \cite{ortmann2017efficient}, and decomposition-based approaches \cite{ahmed2017graphlet}. 
Algorithm~\ref{alg:track_graphlet} belongs to the first subcategory, and it achieves the optimal time complexity achievable by those in this subcategory, as discussed in the beginning of Section~\ref{section:graph:time}.
It is adapted from StreaM \cite{schiller2015stream}, which maintains the counts of the instances of $4$-node undirected graphlets in a fully dynamic graph stream (i.e., a stream of edge insertions and deletions). 

\section{Conclusion}\label{section:conclusion}

In this work, we examined the long-term evolution of local structures captured by graphlets at the graph, node, and edge levels.
We summarize our contribution as follows:
\bit
\item \textbf{Patterns:} We examined various patterns regarding the dynamics of local structures in temporal graphs. For example, the distributions of graphlets over time in real-world graphs differ significantly from those in random graphs, and the transitions between graphlets are surprisingly similar in graphs from the same domains.
Moreover, local structures at nodes and edges in their early stages provide strong signals regarding their future importance.


\item \textbf{Tools:} We introduced graphlet transition graphs, and we demonstrated that it is an effective tool for measuring the similarity between temporal graphs of different sizes.

\item \textbf{Predictability:} We enhanced the accuracy of predicting the future importance of nodes and edges by introducing new features based on node roles and edge roles. The features are also complementary to global graph statistics.
\eit

\noindent\textbf{Reproducibility:}  The code and the datasets are available at 
\url{https://github.com/deukryeol-yoon/graphlets-over-time}.

\begin{table*}[t]
    \vspace{-2mm}
    \caption{\label{tab:feature_importance} Results of feature importance analysis. We report the five strongest predictors and their Gini importance.  }
	\centering
	\resizebox{\textwidth}{!}{
		\begin{tabular}{ | c | >{\centering\arraybackslash}p{2.2cm} p{0.01cm} >{\centering\arraybackslash}p{0.5cm} | >{\centering\arraybackslash}p{2.2cm} p{0.01cm} >{\centering\arraybackslash}p{0.5cm} | >{\centering\arraybackslash}p{2.2cm} p{0.01cm} >{\centering\arraybackslash}p{0.5cm} | >{\centering\arraybackslash}p{2.2cm} p{0.01cm} >{\centering\arraybackslash}p{0.5cm} | >{\centering\arraybackslash}p{2.2cm} p{0.01cm} >{\centering\arraybackslash}p{0.5cm} |}
		\hline
		    Centrality & \multicolumn{3}{c|}{Rank 1} & \multicolumn{3}{c|}{Rank 2} & \multicolumn{3}{c|}{Rank 3} & \multicolumn{3}{c|}{Rank 4} & \multicolumn{3}{c|}{Rank 5} \\
		\hline
		Degree & \# of edges && 0.07 & Global-NPP 2 && 0.06 & \# of nodes && 0.06 & Global-NR 3 && 0.04 & Global-NR 10 && 0.03 \\
		Betweenness & Local-NPP 2 && 0.10 & Local-NR 4 && 0.09 & Global-NR 4 && 0.08 & Local-NR 9 && 0.06 & Global-NR 3 && 0.05 \\
		Closeness & Global-NR 5 && 0.09 & Local-NR 5 && 0.07 & \# of edges && 0.07 & Global-NPP 2 && 0.06 & \# of nodes && 0.06 \\
		PageRank & Local-NR 1 && 0.07 & \# of edges && 0.06 & Global-NPP 2 && 0.06 & \# of nodes && 0.05 & Global-NR 1 && 0.05 \\
		Edge betweenness & Global-ER 7 && 0.11 & Global-ER 2 && 0.09 & Global-ER 3 && 0.09 & \# of nodes && 0.07 & Local-ER 7 && 0.07 \\
		\hline
		\end{tabular}
	}
\end{table*}
\bibliographystyle{ACM-Reference-Format}
\bibliography{reference}

\appendix
\renewcommand{\thesection}{\Alph{section}.\arabic{section}}
\setcounter{section}{0}
\begin{figure}[t]
    \begin{subfigure}{0.24\textwidth}
        \includegraphics[width=\textwidth]{FIG/transition_graph/motif_transition_graph.pdf}
        \caption{Ours}
    \end{subfigure} \\
    \begin{subfigure}{0.22\textwidth}
        \includegraphics[width=\textwidth]{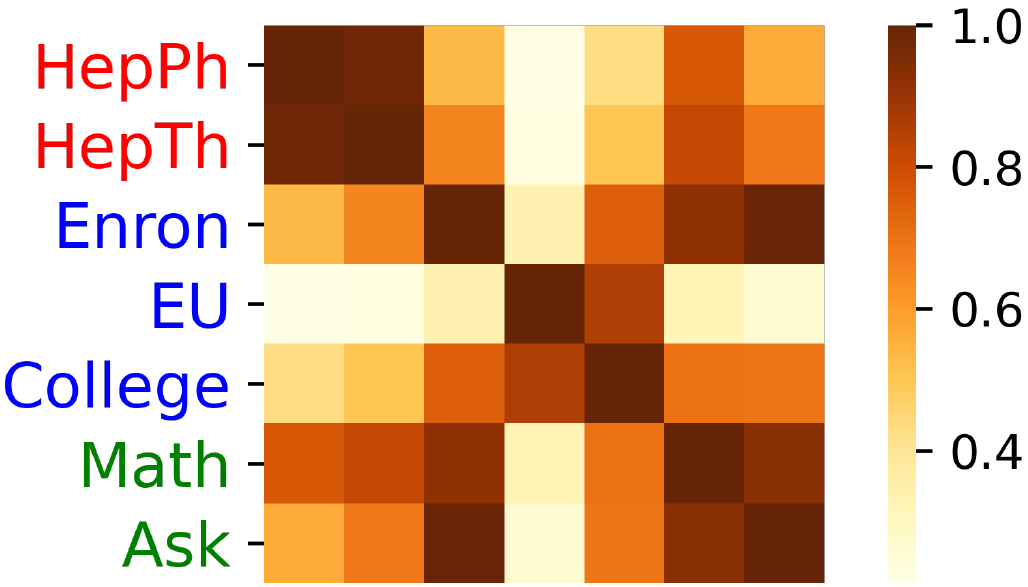}
        \caption{GoT}
    \end{subfigure}
    \begin{subfigure}{0.22\textwidth}
        \includegraphics[width=\textwidth]{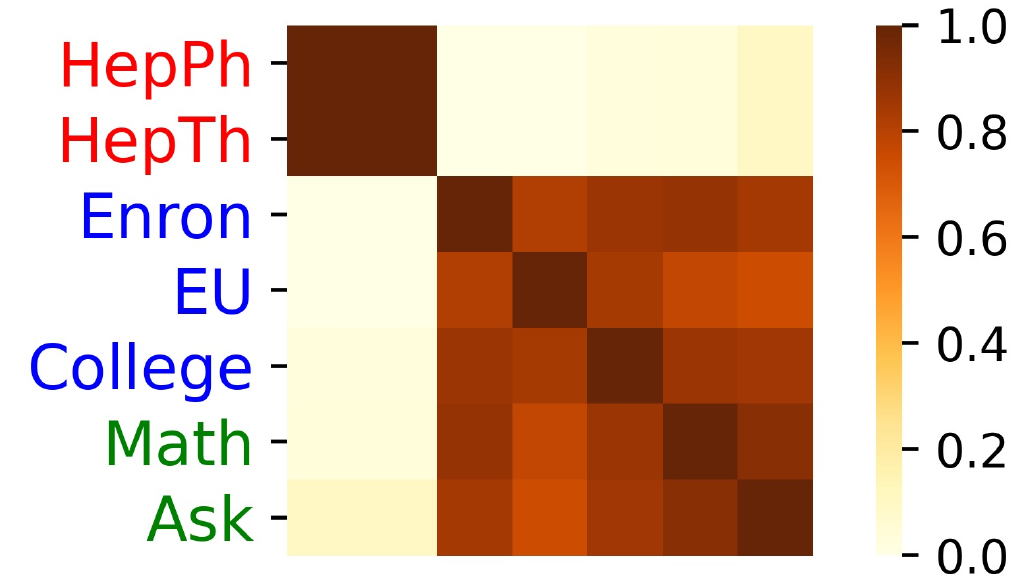}
        \caption{OTA}
    \end{subfigure}
    \caption{\label{fig:got_sim}Similarity matrices from ours, GoT, and OTA.
    The domains of graphs (distinguished by colors) are classified more accurately by ours than by GoT or OTA.}
\end{figure}

\appendix
\section{Comparison with GoT and OTA}
\label{sec:appendix:compare_got_ota}
We provide additional details regarding the comparison between our characterization method based on graphlet transition graphs (GTGs) and regarding the comparison with Graphlet-orbit Transition (GoT) and Orbit Temporal Agreement (OTA).

\smallsection{Detailed Setting:} Our experiments were conducted on a desktop with a 3.8 GHz AMD Ryzen 3900x CPU and 128GB memory. We implemented our characterization method based on GTGs in Java, and we used the official implementations for GoT and OTA provided by the authors, which were implemented in C++. In each dataset, we used $12$ snapshots with the same intervals for GoT and OTA. 

\smallsection{Output Similar Matrix:} Figure~\ref{fig:got_sim} shows the output similar matrices from our characterization method based on GTGs, GoT, and OTA.
GoT and OTA run out of memory in the two largest datasets (Patent and Stackoverflow). 
Both GoT and OTA fail to distinguish email/message graphs (blue) and online Q/A graphs (green) clearly. Numerically, with the best thresholds of similarity, the classification accuracies are $81.0\%$ (GoT) and $85.7\%$ (OTA), while the accuracy is $97.2\%$ in ours.



\begin{figure}[t]
    \includegraphics[width=0.9\columnwidth]{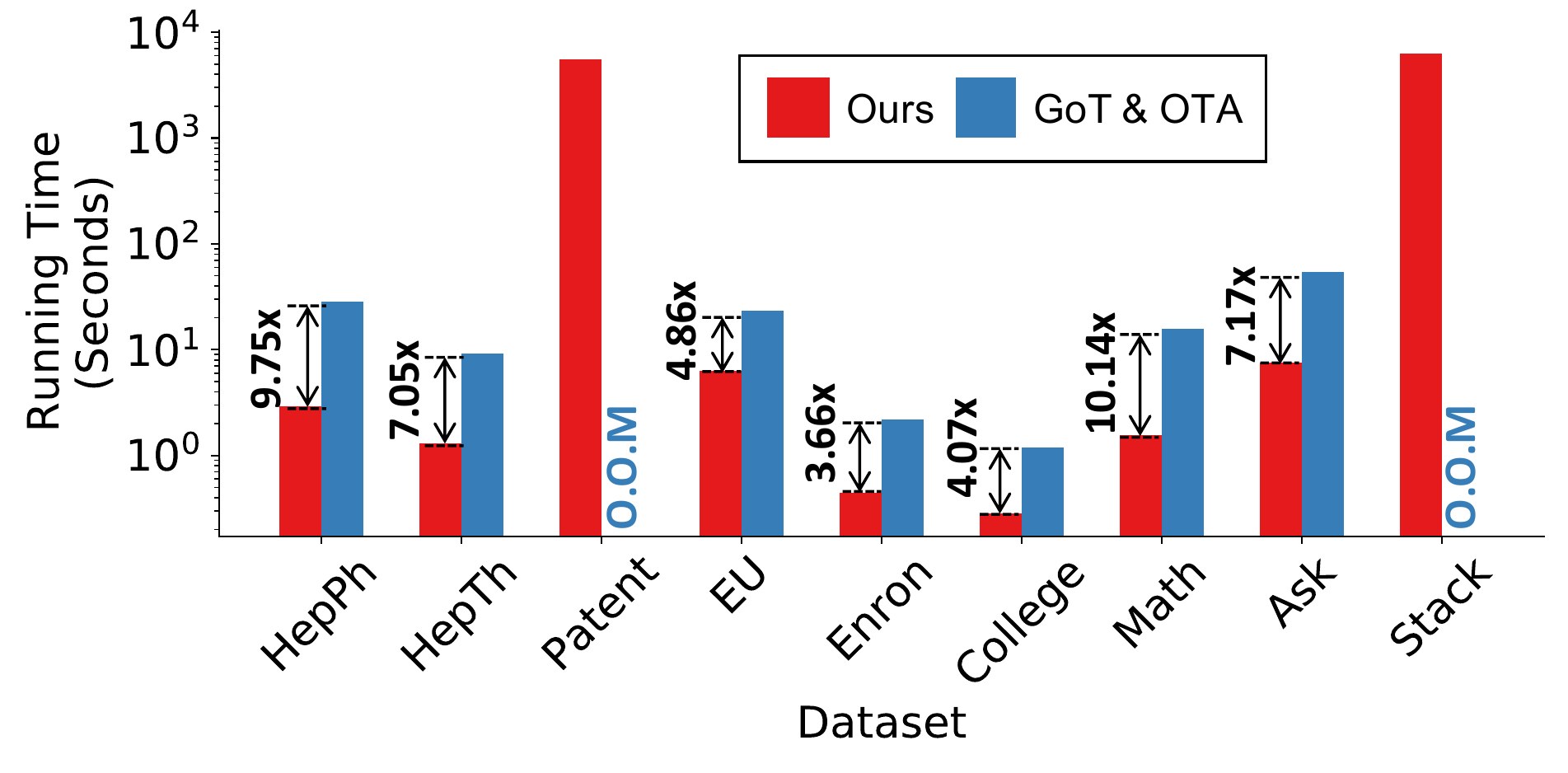}
    \vspace{-2mm}
    \caption{\label{fig:got_time} 
    Running times of ours, GoT and OTA. Ours is consistently and significantly faster than both competitors, which run out of memory in the two largest datasets.}
\end{figure}

\smallsection{Speed Comparison:} As seen in Figure~\ref{fig:got_time}, ours is faster than GoT and OTA in all the graphs. Specifically, ours is $6.68\times$ faster than the others on average. 

\section{Feature Important Analysis}
\label{sec:appendix:feature_importance}
We measure the importance of each feature in the set \textbf{ALL} (see Section~\ref{subsection:node_prediction} of the main paper) using the \textit{Gini importance}~\cite{loh2011classification}, and we report the top five important features in Table~\ref{tab:feature_importance}.

\begin{table*}[t]
	\centering
	\caption{\label{tab:node_f1}
	F1-score on the task of predicting future node importance when $d_\theta=2$.
	}
	\resizebox{\textwidth}{!}{
		\begin{tabular}{|c|c|c|c|c|c|c|c|c|c|}
		\hline
			\multirow{2}{*}{Centrality}  & \multirow{2}{*}{Feature} & \multicolumn{2}{c|}{Citation Networks} & \multicolumn{3}{c|}{Email/Message Networks} & \multicolumn{2}{c|}{Online Q/A Networks} & \multirow{2}{*}{\bf{Average}} \\\cline{3-9}
			&  & HepPh & Hepth & Email-EU & Email-Enron & Message-College & Mathoverflow & Askubuntu & \\ 
			\hline
			\multirow{6}{*}{Degree}
			& Local-NR     & 0.11$\pm$0.008       & 0.20$\pm$0.014      & 0.36$\pm$0.092      & 0.68$\pm$0.007      & 0.27$\pm$0.027      & 0.38$\pm$0.017      & \f{0.73}$\pm$0.002      & 0.39 \\
			& Local-NPP    & 0.12$\pm$0.013       & 0.19$\pm$0.016      & 0.40$\pm$0.102      & 0.60$\pm$0.011      & 0.27$\pm$0.040      & 0.36$\pm$0.016      & \s{0.72}$\pm$0.003      & 0.38 \\ \cline{2-10}
			& Global-NR    & \s{0.52}$\pm$0.013   & 0.56$\pm$0.013      & \s{0.51}$\pm$0.077  & \f{0.79}$\pm$0.004  & 0.37$\pm$0.025      & \f{0.52}$\pm$0.021  & 0.70$\pm$0.004          & \f{0.57} \\
			& Global-NPP   & 0.51$\pm$0.010       & \s{0.57}$\pm$0.018  & \s{0.51}$\pm$0.063  & 0.76$\pm$0.005      & \f{0.40}$\pm$0.030  & \f{0.52}$\pm$0.022  & 0.70$\pm$0.006          & \f{0.57} \\
			& Global-basic & 0.36$\pm$0.014       & 0.38$\pm$0.024      & 0.44$\pm$0.073      & \s{0.77}$\pm$0.008  & 0.34$\pm$0.050      & \s{0.51}$\pm$0.022  & 0.71$\pm$0.003          & \s{0.50} \\\cline{2-10}
			& ALL          & \f{0.53}$\pm$0.010   & \f{0.58}$\pm$0.013  & \f{0.52}$\pm$0.043  & \f{0.79}$\pm$0.008  & \s{0.38}$\pm$0.041  & \f{0.52}$\pm$0.026  & 0.70$\pm$0.005          & \f{0.57} \\ 
			\hline
			\multirow{6}{*}{Betweenness}
			& Local-NR     & 0.59$\pm$0.011       & 0.88$\pm$0.009      & 0.34$\pm$0.063      & 0.49$\pm$0.011      & 0.34$\pm$0.033      & \s{0.74}$\pm$0.011  & \s{0.73}$\pm$0.007  & 0.59 \\
			& Local-NPP    & 0.58$\pm$0.010       & 0.87$\pm$0.006      & 0.35$\pm$0.076      & 0.45$\pm$0.010      & 0.36$\pm$0.073      & \s{0.74}$\pm$0.011  & \s{0.73}$\pm$0.007  & 0.58 \\ \cline{2-10}
			& Global-NR    & \f{0.64}$\pm$0.007   & \f{0.90}$\pm$0.005  & 0.48$\pm$0.089      & \f{0.62}$\pm$0.014  & \s{0.38}$\pm$0.047  & \f{0.75}$\pm$0.008  & \f{0.74}$\pm$0.006  & \s{0.64} \\
			& Global-NPP   & \s{0.62}$\pm$0.010   & \s{0.89}$\pm$0.006  & \s{0.49}$\pm$0.037  & \s{0.58}$\pm$0.019  & \f{0.40}$\pm$0.034  & \f{0.75}$\pm$0.010  & \f{0.74}$\pm$0.006  & \s{0.64} \\
			& Global-basic & 0.01$\pm$0.003       & 0.27$\pm$0.028      & 0.40$\pm$0.079      & 0.36$\pm$0.017      & 0.25$\pm$0.038      & 0.32$\pm$0.013      & 0.10$\pm$0.013      & 0.24 \\ \cline{2-10}
			& ALL          & \f{0.64}$\pm$0.007   & \f{0.90}$\pm$0.007  & \f{0.53}$\pm$0.052  & \f{0.62}$\pm$0.016  & \s{0.38}$\pm$0.045  & \f{0.75}$\pm$0.010  & \f{0.74}$\pm$0.007  & \f{0.65}\\ 
			\hline
			\multirow{6}{*}{Closeness}
			& Local-NR     & 0.49$\pm$0.010       & 0.53$\pm$0.010      & 0.28$\pm$0.077      & 0.69$\pm$0.008      & 0.24$\pm$0.034      & 0.58$\pm$0.014      & \s{0.75}$\pm$0.005  & 0.51 \\
			& Local-NPP    & 0.37$\pm$0.015       & 0.51$\pm$0.014      & 0.31$\pm$0.067      & 0.46$\pm$0.010      & 0.25$\pm$0.038      & 0.43$\pm$0.017      & 0.66$\pm$0.006      & 0.43 \\\cline{2-10}
			& Global-NR    & \s{0.84}$\pm$0.006   & \s{0.75}$\pm$0.007  & 0.47$\pm$0.046      & \f{0.83}$\pm$0.008  & \s{0.38}$\pm$0.055  & \f{0.69}$\pm$0.024  & \f{0.81}$\pm$0.002  & \s{0.68} \\
			& Global-NPP   & 0.83$\pm$0.008       & 0.74$\pm$0.007      & \s{0.52}$\pm$0.047  & \s{0.76}$\pm$0.008  & \f{0.39}$\pm$0.022  & \s{0.64}$\pm$0.019  & 0.71$\pm$0.005      & 0.66 \\
			& Global-basic & 0.82$\pm$0.004       & 0.70$\pm$0.010      & 0.44$\pm$0.074      & 0.72$\pm$0.010      & 0.33$\pm$0.010      & 0.51$\pm$0.015      & 0.60$\pm$0.004      & 0.59 \\\cline{2-10}
			& ALL          & \f{0.85}$\pm$0.008   & \f{0.76}$\pm$0.008  & \f{0.53}$\pm$0.043  & \f{0.83}$\pm$0.007  & 0.36$\pm$0.051      & \f{0.69}$\pm$0.022  & \f{0.81}$\pm$0.003  & \f{0.69} \\ 
			\hline
			\multirow{6}{*}{PageRank}
			& Local-NR     & 0.44$\pm$0.013       & 0.15$\pm$0.018      & 0.42$\pm$0.069      & 0.64$\pm$0.008      & 0.25$\pm$0.038      & 0.46$\pm$0.016      & 0.58$\pm$0.009      & 0.42 \\
			& Local-NPP    & 0.41$\pm$0.012       & 0.18$\pm$0.017      & 0.43$\pm$0.086      & 0.39$\pm$0.009      & 0.25$\pm$0.040      & 0.41$\pm$0.013      & 0.53$\pm$0.008      & 0.37 \\\cline{2-10}
			& Global-NR    & \f{0.64}$\pm$0.014   & 0.41$\pm$0.015      & \s{0.49}$\pm$0.078  & \f{0.74}$\pm$0.006  & 0.35$\pm$0.056      & \f{0.53}$\pm$0.019  & \f{0.63}$\pm$0.009  & \s{0.54} \\
			& Global-NPP   & \f{0.64}$\pm$0.012   & \s{0.43}$\pm$0.015  & \f{0.55}$\pm$0.046  & \s{0.65}$\pm$0.011  & \f{0.38}$\pm$0.047  & \s{0.52}$\pm$0.017  & \s{0.61}$\pm$0.007  & \s{0.54} \\
			& Global-basic & \s{0.63}$\pm$0.010   & 0.31$\pm$0.023      & 0.41$\pm$0.054      & \s{0.65}$\pm$0.007  & 0.28$\pm$0.035      & 0.49$\pm$0.010      & 0.54$\pm$0.008      & 0.47 \\\cline{2-10}
			& ALL          & \f{0.64}$\pm$0.009   & \f{0.44}$\pm$0.013  & \f{0.55}$\pm$0.035  & \f{0.74}$\pm$0.006  & \s{0.37}$\pm$0.030  & \f{0.53}$\pm$0.020  & \f{0.63}$\pm$0.008  & \f{0.56} \\ 
			\hline
   
		\hline
		\end{tabular}
	}
\end{table*}

\begin{table*}[t]
	\centering
	\caption{\label{tab:node_accuracy}
	Accuracy on the task of predicting future node importance when $d_\theta=2$.
	}
	\resizebox{\textwidth}{!}{
		\begin{tabular}{|c|c|c|c|c|c|c|c|c|c|}
		\hline
			\multirow{2}{*}{Centrality}  & \multirow{2}{*}{Feature} & \multicolumn{2}{c|}{Citation Networks} & \multicolumn{3}{c|}{Email/Message Networks} & \multicolumn{2}{c|}{Online Q/A Networks} & \multirow{2}{*}{\bf{Average}} \\\cline{3-9}
			&  & HepPh & Hepth & Email-EU & Email-Enron & Message-College & Mathoverflow & Askubuntu & \\ 
			\hline
			\multirow{6}{*}{Degree}
			& Local-NR     & 0.71$\pm$0.008       & 0.72$\pm$0.006      & \s{0.80}$\pm$0.024  & 0.62$\pm$0.006      & \f{0.77}$\pm$0.019  & 0.66$\pm$0.008      & 0.58$\pm$0.003      & 0.69 \\
			& Local-NPP    & 0.71$\pm$0.010       & 0.72$\pm$0.007      & 0.79$\pm$0.031      & 0.55$\pm$0.006      & 0.75$\pm$0.020      & 0.65$\pm$0.016      & 0.58$\pm$0.002      & 0.68 \\ \cline{2-10}
			& Global-NR    & \f{0.76}$\pm$0.007   & \s{0.77}$\pm$0.007  & \f{0.82}$\pm$0.024  & \f{0.77}$\pm$0.004  & \s{0.76}$\pm$0.017  & \f{0.68}$\pm$0.014  & \f{0.61}$\pm$0.004  & \f{0.74} \\
			& Global-NPP   & \s{0.75}$\pm$0.003   & \s{0.77}$\pm$0.007  & \s{0.80}$\pm$0.023  & \s{0.75}$\pm$0.004  & \f{0.77}$\pm$0.019  & \f{0.68}$\pm$0.012  & \f{0.61}$\pm$0.004  & \s{0.73} \\
			& Global-basic & 0.73$\pm$0.006       & 0.74$\pm$0.008      & 0.77$\pm$0.028      & \s{0.75}$\pm$0.006  & 0.74$\pm$0.019      & \s{0.67}$\pm$0.011  & \s{0.60}$\pm$0.003  & 0.72 \\\cline{2-10}
			& ALL          & \f{0.76}$\pm$0.008   & \f{0.78}$\pm$0.006  & \f{0.82}$\pm$0.019  & \f{0.77}$\pm$0.005  & \s{0.76}$\pm$0.016  & \f{0.68}$\pm$0.014  & \f{0.61}$\pm$0.004  & \f{0.74} \\ 
			\hline
			\multirow{6}{*}{Betweenness}
			& Local-NR     & 0.79$\pm$0.005       & \s{0.93}$\pm$0.005  & 0.78$\pm$0.017      & 0.82$\pm$0.006      & \f{0.76}$\pm$0.019  & \f{0.86}$\pm$0.005  & \f{0.90}$\pm$0.003  & 0.83 \\
			& Local-NPP    & 0.79$\pm$0.004       & 0.92$\pm$0.004      & 0.75$\pm$0.028      & 0.81$\pm$0.005      & 0.65$\pm$0.032      & \f{0.86}$\pm$0.005  & \f{0.90}$\pm$0.003  & 0.81 \\ \cline{2-10}
			& Global-NR    & \f{0.81}$\pm$0.004   & \f{0.94}$\pm$0.003  & \s{0.81}$\pm$0.017  & \f{0.84}$\pm$0.007  & \s{0.75}$\pm$0.016  & \f{0.86}$\pm$0.004  & \f{0.90}$\pm$0.002  & \s{0.84} \\
			& Global-NPP   & \s{0.80}$\pm$0.005   & \f{0.94}$\pm$0.003  & 0.79$\pm$0.021      & \s{0.83}$\pm$0.007  & \f{0.76}$\pm$0.020  & \f{0.86}$\pm$0.004  & \f{0.90}$\pm$0.003  & \s{0.84} \\
			& Global-basic & 0.71$\pm$0.005       & 0.73$\pm$0.010      & 0.75$\pm$0.033      & 0.77$\pm$0.007      & 0.70$\pm$0.019      & \s{0.67}$\pm$0.008  & \s{0.77}$\pm$0.004  & 0.73 \\ \cline{2-10}
			& ALL          & \f{0.81}$\pm$0.003   & \f{0.94}$\pm$0.004  & \f{0.82}$\pm$0.018  & \f{0.84}$\pm$0.008  & \s{0.75}$\pm$0.019  & \f{0.86}$\pm$0.004  & \f{0.90}$\pm$0.003  & \f{0.85}\\ 
			\hline
			\multirow{6}{*}{Closeness}
			& Local-NR     & 0.76$\pm$0.003       & 0.76$\pm$0.005      & 0.78$\pm$0.027      & 0.75$\pm$0.005      & \f{0.76}$\pm$0.015  & 0.72$\pm$0.006      & \s{0.78}$\pm$0.003  & 0.76 \\
			& Local-NPP    & 0.73$\pm$0.003       & 0.75$\pm$0.006      & 0.75$\pm$0.032      & 0.63$\pm$0.007      & 0.74$\pm$0.016      & 0.65$\pm$0.010      & 0.64$\pm$0.004      & 0.70 \\\cline{2-10}
			& Global-NR    & \f{0.91}$\pm$0.003   & \f{0.86}$\pm$0.004  & 0.80$\pm$0.018      & \f{0.85}$\pm$0.006  & \s{0.75}$\pm$0.019  & \f{0.77}$\pm$0.011  & \f{0.82}$\pm$0.002  & \s{0.82} \\
			& Global-NPP   & \s{0.90}$\pm$0.005   & \s{0.85}$\pm$0.002  & \s{0.81}$\pm$0.013  & \s{0.80}$\pm$0.007  & \f{0.76}$\pm$0.011  & \s{0.73}$\pm$0.010  & 0.73$\pm$0.003      & 0.80 \\
			& Global-basic & \s{0.90}$\pm$0.003   & 0.82$\pm$0.004      & 0.76$\pm$0.033      & 0.77$\pm$0.008      & 0.73$\pm$0.024      & 0.66$\pm$0.009      & 0.64$\pm$0.003      & 0.75 \\\cline{2-10}
			& ALL          & \f{0.91}$\pm$0.004   & \f{0.86}$\pm$0.004  & \f{0.82}$\pm$0.020  & \f{0.85}$\pm$0.005  & 0.75$\pm$0.017      & \f{0.77}$\pm$0.011  & \f{0.82}$\pm$0.002  & \f{0.83} \\ 
			\hline
			\multirow{6}{*}{PageRank}
			& Local-NR     & \s{0.76}$\pm$0.007   & 0.72$\pm$0.009      & 0.80$\pm$0.020      & \s{0.74}$\pm$0.006  & \s{0.75}$\pm$0.017  & 0.66$\pm$0.011      & 0.65$\pm$0.006      & 0.73 \\
			& Local-NPP    & 0.74$\pm$0.007       & 0.72$\pm$0.007      & 0.77$\pm$0.023      & 0.65$\pm$0.018      & 0.73$\pm$0.023      & 0.63$\pm$0.005      & 0.62$\pm$0.006      & 0.69 \\\cline{2-10}
			& Global-NR    & \f{0.81}$\pm$0.005   & \f{0.75}$\pm$0.007  & \s{0.81}$\pm$0.022  & \f{0.80}$\pm$0.005  & \s{0.75}$\pm$0.015  & \f{0.68}$\pm$0.010  & \f{0.67}$\pm$0.006  & \f{0.75} \\
			& Global-NPP   & \f{0.81}$\pm$0.005   & \s{0.74}$\pm$0.007  & \f{0.82}$\pm$0.023  & \s{0.74}$\pm$0.011  & \f{0.76}$\pm$0.019  & \s{0.67}$\pm$0.008  & \s{0.66}$\pm$0.006  & \s{0.74} \\
			& Global-basic & \f{0.81}$\pm$0.004   & 0.73$\pm$0.008      & 0.75$\pm$0.021      & \s{0.74}$\pm$0.006  & 0.70$\pm$0.025      & 0.65$\pm$0.007      & 0.59$\pm$0.006      & 0.71 \\\cline{2-10}
			& ALL          & \f{0.81}$\pm$0.005   & \f{0.75}$\pm$0.004  & \s{0.81}$\pm$0.017  & \f{0.80}$\pm$0.006  & \s{0.75}$\pm$0.014  & \f{0.68}$\pm$0.010  & \f{0.67}$\pm$0.005  & \f{0.75} \\ 
			\hline
   
		\hline
		\end{tabular}
	}
\end{table*}

\section{Detailed Results of Future Node Importance Prediction}
\label{sec:appendix:node:details}

In Tables~\ref{tab:node_f1}-\ref{tab:node_auroc_degree},
we provide the average predictive performances and standard deviations over $10$ runs on the task of predicting future node centrality in each real-world graph in terms of several evaluation metrics. 
The detailed experimental settings can be found in Section~\ref{subsection:node_prediction} of the main paper. 

\section{Detailed Results of Future Edge Importance Prediction}
\label{sec:appendix:edge:details}
In Tables~\ref{tab:edge_f1_score}-\ref{tab:edge_auroc},
we provide the average predictive performances and standard deviations over $10$ runs on the task of predicting future node centrality in each real-world graph in terms of several evaluation metrics. 
The detailed experimental settings can be found in Section~\ref{section:edge} of the main paper.

\begin{table*}[t]
\vspace{-2mm}
	\centering
	\caption{\label{tab:node_auroc}
	AUROC on the task of predicting future node importance when $d_\theta=2$.
	}
	\resizebox{\textwidth}{!}{
		\begin{tabular}{|c|c|c|c|c|c|c|c|c|c|}
		\hline
			\multirow{2}{*}{Centrality}  & \multirow{2}{*}{Feature} & \multicolumn{2}{c|}{Citation Networks} & \multicolumn{3}{c|}{Email/Message Networks} & \multicolumn{2}{c|}{Online Q/A Networks} & \multirow{2}{*}{\bf{Average}} \\\cline{3-9}
			&  & HepPh & Hepth & Email-EU & Email-Enron & Message-College & Mathoverflow & Askubuntu & \\ 
			\hline
			\multirow{6}{*}{Degree}
			& Local-NR     & 0.69$\pm$0.006       & 0.70$\pm$0.006      & 0.80$\pm$0.037      & 0.67$\pm$0.007      & 0.69$\pm$0.020      & 0.65$\pm$0.012      & 0.58$\pm$0.006      & 0.68 \\
			& Local-NPP    & 0.64$\pm$0.005       & 0.67$\pm$0.007      & 0.75$\pm$0.032      & 0.56$\pm$0.007      & 0.64$\pm$0.025      & 0.63$\pm$0.012      & 0.56$\pm$0.004      & 0.64 \\ \cline{2-10}
			& Global-NR    & \f{0.81}$\pm$0.004   & \s{0.83}$\pm$0.005  & \f{0.85}$\pm$0.037  & \f{0.86}$\pm$0.005  & \s{0.73}$\pm$0.026  & \f{0.71}$\pm$0.013  & \f{0.65}$\pm$0.006  & \f{0.78} \\
			& Global-NPP   & \s{0.80}$\pm$0.002   & \s{0.83}$\pm$0.005  & \s{0.83}$\pm$0.032  & 0.83$\pm$0.005      & \f{0.74}$\pm$0.026  & \f{0.71}$\pm$0.012  & \f{0.65}$\pm$0.005  & \s{0.77} \\
			& Global-basic & 0.74$\pm$0.004       & 0.74$\pm$0.007      & 0.82$\pm$0.035      & \s{0.84}$\pm$0.006  & 0.68$\pm$0.037      & \s{0.68}$\pm$0.009  & \s{0.62}$\pm$0.005  & 0.73 \\\cline{2-10}
			& ALL          & \f{0.81}$\pm$0.004   & \f{0.84}$\pm$0.006  & \f{0.85}$\pm$0.036  & \f{0.86}$\pm$0.005  & \s{0.73}$\pm$0.025  & \f{0.71}$\pm$0.014  & \f{0.65}$\pm$0.006  & \f{0.78} \\ 
			\hline
			\multirow{6}{*}{Betweenness}
			& Local-NR     & 0.84$\pm$0.006       & \s{0.98}$\pm$0.002  & 0.79$\pm$0.033      & 0.80$\pm$0.007      & 0.70$\pm$0.029      & \s{0.83}$\pm$0.010  & \s{0.82}$\pm$0.005  & 0.82 \\
			& Local-NPP    & 0.83$\pm$0.006       & \s{0.98}$\pm$0.003  & 0.73$\pm$0.031      & 0.68$\pm$0.008      & 0.65$\pm$0.041      & 0.82$\pm$0.010      & 0.81$\pm$0.006      & 0.79 \\ \cline{2-10}
			& Global-NR    & \f{0.87}$\pm$0.005   & \f{0.99}$\pm$0.002  & \s{0.81}$\pm$0.030  & \f{0.87}$\pm$0.007  & \s{0.71}$\pm$0.032  & \f{0.86}$\pm$0.008  & \f{0.86}$\pm$0.005  & \s{0.85} \\
			& Global-NPP   & \s{0.85}$\pm$0.007   & \s{0.98}$\pm$0.002  & \s{0.81}$\pm$0.032  & \s{0.84}$\pm$0.008  & \f{0.72}$\pm$0.033  & \f{0.86}$\pm$0.008  & \f{0.86}$\pm$0.005  & \s{0.85} \\
			& Global-basic & 0.62$\pm$0.008       & 0.74$\pm$0.013      & 0.76$\pm$0.042      & 0.74$\pm$0.005      & 0.63$\pm$0.038      & 0.63$\pm$0.014      & 0.60$\pm$0.007      & 0.67 \\ \cline{2-10}
			& ALL          & \f{0.87}$\pm$0.006   & \f{0.99}$\pm$0.001  & \f{0.83}$\pm$0.024  & \f{0.87}$\pm$0.007  & \s{0.71}$\pm$0.033  & \f{0.86}$\pm$0.009  & \f{0.86}$\pm$0.005  & \f{0.86}\\ 
			\hline
			\multirow{6}{*}{Closeness}
			& Local-NR     & 0.79$\pm$0.005       & 0.79$\pm$0.006      & 0.76$\pm$0.047      & 0.84$\pm$0.006      & 0.68$\pm$0.026      & 0.76$\pm$0.009      & 0.84$\pm$0.002      & 0.78 \\
			& Local-NPP    & 0.74$\pm$0.005       & 0.78$\pm$0.008      & 0.72$\pm$0.042      & 0.65$\pm$0.009      & 0.61$\pm$0.022      & 0.63$\pm$0.007      & 0.70$\pm$0.005      & 0.69 \\\cline{2-10}
			& Global-NR    & \f{0.97}$\pm$0.002   & \s{0.93}$\pm$0.005  & \s{0.82}$\pm$0.035  & \f{0.93}$\pm$0.003  & \s{0.73}$\pm$0.024  & \f{0.83}$\pm$0.011  & \s{0.89}$\pm$0.002  & \s{0.87} \\
			& Global-NPP   & \s{0.96}$\pm$0.003   & 0.92$\pm$0.004      & \f{0.83}$\pm$0.028  & \s{0.88}$\pm$0.004  & \s{0.73}$\pm$0.026  & \s{0.79}$\pm$0.011  & 0.81$\pm$0.003      & 0.85 \\
			& Global-basic & 0.95$\pm$0.003       & 0.89$\pm$0.005      & 0.79$\pm$0.044      & 0.85$\pm$0.007      & 0.69$\pm$0.035      & 0.68$\pm$0.012      & 0.70$\pm$0.003      & 0.79 \\\cline{2-10}
			& ALL          & \f{0.97}$\pm$0.002   & \f{0.94}$\pm$0.005  & \s{0.82}$\pm$0.027  & \f{0.93}$\pm$0.004  & \f{0.75}$\pm$0.028  & \f{0.83}$\pm$0.010  & \f{0.90}$\pm$0.001  & \f{0.88} \\ 
			\hline
			\multirow{6}{*}{PageRank}
			& Local-NR     & 0.77$\pm$0.006       & 0.65$\pm$0.010      & 0.80$\pm$0.031      & 0.81$\pm$0.004      & \s{0.67}$\pm$0.011  & 0.69$\pm$0.015      & 0.70$\pm$0.006      & 0.73 \\
			& Local-NPP    & 0.75$\pm$0.004       & 0.63$\pm$0.008      & 0.77$\pm$0.031      & 0.63$\pm$0.010      & 0.62$\pm$0.024      & 0.64$\pm$0.009      & 0.67$\pm$0.006      & 0.67 \\\cline{2-10}
			& Global-NR    & \f{0.87}$\pm$0.005   & \s{0.78}$\pm$0.006  & \s{0.84}$\pm$0.035  & \f{0.88}$\pm$0.003  & \f{0.72}$\pm$0.029  & \f{0.71}$\pm$0.012  & \f{0.73}$\pm$0.005  & \f{0.79} \\
			& Global-NPP   & \s{0.86}$\pm$0.005   & 0.77$\pm$0.007      & \f{0.85}$\pm$0.028  & \s{0.82}$\pm$0.008  & \f{0.72}$\pm$0.018  & \s{0.70}$\pm$0.011  & \s{0.71}$\pm$0.006  & \s{0.78} \\
			& Global-basic & \s{0.86}$\pm$0.005   & 0.73$\pm$0.010      & 0.81$\pm$0.033      & 0.81$\pm$0.008      & 0.66$\pm$0.026      & 0.66$\pm$0.011      & 0.62$\pm$0.004      & 0.74 \\\cline{2-10}
			& ALL          & \f{0.87}$\pm$0.005   & \f{0.79}$\pm$0.007  & \f{0.85}$\pm$0.035  & \f{0.88}$\pm$0.004  & \f{0.72}$\pm$0.031  & \f{0.71}$\pm$0.014  & \f{0.73}$\pm$0.005  & \f{0.79} \\ 
			\hline
   
		\hline
		\end{tabular}
	}
\end{table*}

\begin{table*}
    \vspace{-1mm}
	\centering
	\caption{\label{tab:node_f1_degree}
	F1-score on the task of predicting future node importance depending on $d_\theta$ (i.e., in-degree of nodes when their input features are extracted).}
        \resizebox{\textwidth}{!}{
		\begin{tabular}{|c|c|c|c|c|c|c|c|c|c|}
		\hline
			\multirow{2}{*}{Centrality}  & \multirow{2}{*}{Feature} & \multicolumn{2}{c|}{Citation Networks} & \multicolumn{3}{c|}{Email/Message Networks} & \multicolumn{2}{c|}{Online Q/A Networks} & \multirow{2}{*}{\bf{Average}} \\\cline{3-9}
			& & HepPh & Hepth & Email-EU & Email-Enron & Message-College & Mathoverflow & Askubuntu &  \\ 
			\hline
			\multirow{3}{*}{Degree}
			& ALL $(d_\theta=2)$ & 0.53$\pm$0.010     & 0.58$\pm$0.013       & \s{0.52}$\pm$0.043    & \f{0.19$\pm$0.005}& 0.38$\pm$0.041           & \s{0.52}$\pm$0.026      &  \f{0.70$\pm$0.005}  & 0.59 \\
			& ALL $(d_\theta=4)$ & \s{0.67}$\pm$0.007 & \s{0.78}$\pm$0.008   & \s{0.62}$\pm$0.062    & 1.00*             & \s{0.49}$\pm$0.045       & \f{0.89}$\pm$0.006      &  1.00*               & \s{0.69}\\
			& ALL $(d_\theta=8)$ & \f{0.82}$\pm$0.006 & \f{0.93}$\pm$0.006   & \f{0.74}$\pm$0.036    & 1.00*             & \f{0.71}$\pm$0.022       & 1.00*                   &  1.00*               & \f{0.80}\\
            \hline
            
            \multirow{3}{*}{Betweenness}
            & ALL $(d_\theta=2)$ & 0.64$\pm$0.007     & 0.90$\pm$0.007       & \s{0.53}$\pm$0.052    & 0.62$\pm$0.016     & 0.38$\pm$0.045       & 0.75$\pm$0.010      & 0.74$\pm$0.007          & 0.65 \\
			& ALL $(d_\theta=4)$ & \s{0.72}$\pm$0.012 & \s{0.94}$\pm$0.007   & 0.52$\pm$0.066        & \s{0.75}$\pm$0.006 & \s{0.50}$\pm$0.042   & \s{0.84}$\pm$0.009  & \s{0.84}$\pm$0.008      & \s{0.73} \\
			& ALL $(d_\theta=8)$ & \f{0.77}$\pm$0.008 & \f{0.97}$\pm$0.005   & \f{0.65}$\pm$0.045    & \f{0.86}$\pm$0.009 & \f{0.69}$\pm$0.057   & \f{0.91}$\pm$0.009  & \f{0.89}$\pm$0.007      & \f{0.82} \\
            \hline
			
            \multirow{3}{*}{Closeness}
            & ALL $(d_\theta=2)$ & 0.85$\pm$0.008       & 0.76$\pm$0.008        & 0.53$\pm$0.043        & 0.83$\pm$0.007     & 0.36$\pm$0.051      & 0.69$\pm$0.022      & 0.81$\pm$0.003      & 0.69 \\
			& ALL $(d_\theta=4)$ & \s{0.87}$\pm$0.008   & \s{0.85}$\pm$0.010    & \s{0.55}$\pm$0.072    &\s{0.91}$\pm$0.006  & \s{0.54}$\pm$0.032  & \s{0.85}$\pm$0.013  & \s{0.91}$\pm$0.004  & \s{0.78} \\
			& ALL $(d_\theta=8)$ & \f{0.88}$\pm$0.007   & \f{0.90}$\pm$0.009    & \f{0.65}$\pm$0.061    & \f{0.97}$\pm$0.004 & \f{0.74}$\pm$0.045  & \f{0.95}$\pm$0.007  & \f{0.98}$\pm$0.003  & \f{0.86} \\
            \hline
		
            \multirow{3}{*}{PageRank}
            & ALL $(d_\theta=2)$ & 0.64$\pm$0.009      & 0.44$\pm$0.013      & 0.52$\pm$0.035        & 0.74$\pm$0.006     & 0.37$\pm$0.030       & 0.53$\pm$0.020      & 0.63$\pm$0.006  & 0.55       \\
			& ALL $(d_\theta=4)$ & \s{0.74}$\pm$0.008  & \s{0.71}$\pm$0.010  & \s{0.62}$\pm$0.037    & \s{0.87}$\pm$0.006 & \s{0.48}$\pm$0.040   & \s{0.79}$\pm$0.008  & \s{0.89}$\pm$0.003      & \s{0.73}   \\
			& ALL $(d_\theta=8)$ & \f{0.83}$\pm$0.007  & \f{0.85}$\pm$0.012  & \f{0.68}$\pm$0.049    & \f{0.95}$\pm$0.003 & \f{0.72}$\pm$0.033   & \f{0.95}$\pm$0.006  & \f{0.98}$\pm$0.003  & \f{0.85}   \\
            \hline
            \multicolumn{10}{l}{* All nodes satisfying the condition on $d_\theta$ have the same class, belonging to top $20\%$ in terms of the considered centrality measure.} \\
  
		\end{tabular}
	}
\end{table*}

\begin{table*}[t]
    \vspace{-1mm}
	\centering
	\caption{\label{tab:node_accuracy_degree}
	Accuracy on the task of predicting future node importance depending on $d_\theta$.}
	\resizebox{\textwidth}{!}{
		\begin{tabular}{|c|c|c|c|c|c|c|c|c|c|}
		\hline
			\multirow{2}{*}{Centrality}  & \multirow{2}{*}{Feature} & \multicolumn{2}{c|}{Citation Networks} & \multicolumn{3}{c|}{Email/Message Networks} & \multicolumn{2}{c|}{Online Q/A Networks} & \multirow{2}{*}{\bf{Average}} \\\cline{3-9}
			& & HepPh & Hepth & Email-EU & Email-Enron & Message-College & Mathoverflow & Askubuntu &  \\ 
			\hline
			\multirow{3}{*}{Degree}
			& ALL $(d_\theta=2)$ & \s{0.76}$\pm$0.008 & 0.78$\pm$0.006       & 0.82$\pm$0.019        & \f{0.77$\pm$0.005}& \s{0.76}$\pm$0.016       & \s{0.68}$\pm$0.014  &  \f{0.61$\pm$0.004}  & 0.74 \\
			& ALL $(d_\theta=4)$ & \s{0.76}$\pm$0.006 & \s{0.80}$\pm$0.009   & \s{0.83}$\pm$0.028    & 1.00*             & 0.70$\pm$0.046           & \f{0.81}$\pm$0.009  &  1.00*               & \s{0.78}\\
			& ALL $(d_\theta=8)$ & \f{0.79}$\pm$0.006 & \f{0.88}$\pm$0.010   & \f{0.86}$\pm$0.021    & 1.00*             & \f{0.72}$\pm$0.025       & 1.00*               &  1.00*               & \f{0.81}\\
            \hline
            
            \multirow{3}{*}{Betweenness}
            & ALL $(d_\theta=2)$ & 0.81$\pm$0.003 & 0.94$\pm$0.004       & \f{0.82}$\pm$0.018       & \s{0.84}$\pm$0.008   & \f{0.75}$\pm$0.019   & \s{0.86}$\pm$0.004  & \f{0.90}$\pm$0.003  & \f{0.85} \\
			& ALL $(d_\theta=4)$ & 0.81$\pm$0.008 & \s{0.96}$\pm$0.004   & \s{0.80}$\pm$0.023       & 0.82$\pm$0.004       & \s{0.70}$\pm$0.023   & \s{0.86}$\pm$0.009  & \s{0.89}$\pm$0.006  & \s{0.83} \\
			& ALL $(d_\theta=8)$ & 0.81$\pm$0.004 & \f{0.98}$\pm$0.004   & \f{0.82}$\pm$0.019       & \f{0.85}$\pm$0.009   & \s{0.70}$\pm$0.049   & \f{0.88}$\pm$0.010  & 0.88$\pm$0.006      & \f{0.85} \\
            \hline
			
            \multirow{3}{*}{Closeness}
            & ALL $(d_\theta=2)$ & 0.91$\pm$0.004   & 0.86$\pm$0.004        & \f{0.82}$\pm$0.020        & 0.85$\pm$0.005 & \f{0.75}$\pm$0.017      & 0.77$\pm$0.011      & 0.82$\pm$0.002      & \s{0.83} \\
			& ALL $(d_\theta=4)$ & 0.91$\pm$0.005   & \s{0.88}$\pm$0.007    & \s{0.80}$\pm$0.022    & \s{0.89}$\pm$0.007 & 0.70$\pm$0.021  & \s{0.80}$\pm$0.015  & \s{0.86}$\pm$0.006  & \s{0.83} \\
			& ALL $(d_\theta=8)$ & 0.91$\pm$0.006   & \f{0.89}$\pm$0.009    & \f{0.82}$\pm$0.020    & \f{0.94}$\pm$0.006 & \s{0.73}$\pm$0.046  & \f{0.91}$\pm$0.013  & \f{0.95}$\pm$0.006  & \f{0.88} \\
            \hline
		
            \multirow{3}{*}{PageRank}
            & ALL $(d_\theta=2)$ & \s{0.81}$\pm$0.005  & \s{0.75}$\pm$0.004  & 0.81$\pm$0.017        & 0.80$\pm$0.006     & \f{0.75}$\pm$0.014 & \s{0.68}$\pm$0.010  & 0.67$\pm$0.006     & 0.75       \\
			& ALL $(d_\theta=4)$ & \s{0.81}$\pm$0.006  & \s{0.75}$\pm$0.007  & \f{0.83}$\pm$0.017    & \s{0.83}$\pm$0.007 & 0.68$\pm$0.025   & \s{0.68}$\pm$0.011  & \s{0.81}$\pm$0.003 & \s{0.77}   \\
			& ALL $(d_\theta=8)$ & \f{0.83}$\pm$0.005  & \f{0.81}$\pm$0.012  & \s{0.82}$\pm$0.023    & \f{0.92}$\pm$0.004 & \s{0.72}$\pm$0.025   & \f{0.91}$\pm$0.011  & \f{0.96}$\pm$0.003 & \f{0.85}   \\
            \hline
            \multicolumn{10}{l}{* All nodes satisfying the condition on $d_\theta$ have the same class, belonging to top $20\%$ in terms of the considered centrality measure.} \\
  
		\end{tabular}
	}
\end{table*}

\begin{table*}[t]
	\centering
	\caption{\label{tab:node_auroc_degree}
	AUROC on the task of predicting future node importance depending on $d_\theta$.}
	\resizebox{\textwidth}{!}{
		\begin{tabular}{|c|c|c|c|c|c|c|c|c|c|}
		\hline
			\multirow{2}{*}{Centrality}  & \multirow{2}{*}{Feature} & \multicolumn{2}{c|}{Citation Networks} & \multicolumn{3}{c|}{Email/Message Networks} & \multicolumn{2}{c|}{Online Q/A Networks} & \multirow{2}{*}{\bf{Average}} \\\cline{3-9}
			& & HepPh & Hepth & Email-EU & Email-Enron & Message-College & Mathoverflow & Askubuntu &  \\ 
			\hline
			\multirow{3}{*}{Degree}
			& ALL $(d_\theta=2)$ & 0.81$\pm$0.004     & 0.84$\pm$0.005       & \s{0.85}$\pm$0.035    & \f{0.86$\pm$0.005}& \s{0.73}$\pm$0.025       & \f{0.71}$\pm$0.014      &  \f{0.65$\pm$0.005}  & 0.78 \\
			& ALL $(d_\theta=4)$ & \s{0.83}$\pm$0.005 & \s{0.87}$\pm$0.006   & \s{0.85}$\pm$0.036    & 1.00*             & 0.72$\pm$0.027           & \s{0.68}$\pm$0.018      &  1.00*               & \s{0.79}\\
			& ALL $(d_\theta=8)$ & \f{0.87}$\pm$0.007 & \f{0.90}$\pm$0.013   & \f{0.88}$\pm$0.027    & 1.00*             & \f{0.78}$\pm$0.031       & 1.00*                   &  1.00*               & \f{0.86}\\
            \hline
            
            \multirow{3}{*}{Betweenness}
            & ALL $(d_\theta=2)$ & 0.87$\pm$0.005     & \s{0.99}$\pm$0.001   & \s{0.83}$\pm$0.024    & 0.87$\pm$0.007     & 0.71$\pm$0.033       & 0.86$\pm$0.009      & 0.86$\pm$0.005          & 0.86 \\
			& ALL $(d_\theta=4)$ & \s{0.89}$\pm$0.006 & \s{0.99}$\pm$0.001   & 0.81$\pm$0.040        & \s{0.89}$\pm$0.004 & \s{0.73}$\pm$0.026   & \s{0.90}$\pm$0.007  & \s{0.91}$\pm$0.004      & \s{0.87} \\
			& ALL $(d_\theta=8)$ & \f{0.90}$\pm$0.003 & \f{1.00}$\pm$0.001   & \f{0.84}$\pm$0.026    & \f{0.93}$\pm$0.006 & \f{0.77}$\pm$0.044   & \f{0.94}$\pm$0.009  & \f{0.94}$\pm$0.006      & \f{0.90} \\
            \hline
			
            \multirow{3}{*}{Closeness}
            & ALL $(d_\theta=2)$ & 0.97$\pm$0.002    & 0.94$\pm$0.005        & \s{0.84}$\pm$0.033    & 0.93$\pm$0.004     & 0.73$\pm$0.028      & 0.83$\pm$0.010      & 0.90$\pm$0.002      & 0.88 \\
			& ALL $(d_\theta=4)$ & 0.97$\pm$0.002    & \s{0.95}$\pm$0.004    & 0.82$\pm$0.027        & \s{0.95}$\pm$0.004 & \s{0.75}$\pm$0.030  & \s{0.88}$\pm$0.012  & \s{0.93}$\pm$0.004  & \s{0.89} \\
			& ALL $(d_\theta=8)$ & 0.97$\pm$0.003    & \f{0.96}$\pm$0.006    & \f{0.88}$\pm$0.024    & \f{0.98}$\pm$0.004 & \f{0.79}$\pm$0.043  & \f{0.92}$\pm$0.016  & \f{0.95}$\pm$0.013  & \f{0.92} \\
            \hline
		
            \multirow{3}{*}{PageRank}
            & ALL $(d_\theta=2)$ & 0.87$\pm$0.005      & 0.79$\pm$0.008      & \s{0.85}$\pm$0.035    & 0.88$\pm$0.004     & \s{0.72}$\pm$0.031   & \s{0.71}$\pm$0.014   & \f{0.73}$\pm$0.005  & 0.79       \\
			& ALL $(d_\theta=4)$ & \s{0.89}$\pm$0.006  & \s{0.83}$\pm$0.008  & \f{0.87}$\pm$0.018    & \s{0.90}$\pm$0.006 & 0.71$\pm$0.028       & 0.69$\pm$0.009       & 0.70$\pm$0.007      & \s{0.80}   \\
			& ALL $(d_\theta=8)$ & \f{0.91}$\pm$0.005  & \f{0.87}$\pm$0.013  & \f{0.87}$\pm$0.034    & \f{0.95}$\pm$0.009 & \f{0.79}$\pm$0.018   & \f{0.73}$\pm$0.040   & \s{0.71}$\pm$0.049  & \f{0.83}   \\
            \hline
            \multicolumn{10}{l}{* All nodes satisfying the condition on $d_\theta$ have the same class, belonging to top $20\%$ in terms of the considered centrality measure.} \\
  
		\end{tabular}
	}
\end{table*}

\begin{table*}[t]
	\centering
	\caption{\label{tab:edge_f1_score}	F1-score on the task of predicting future edge importance.
	}
	\resizebox{\textwidth}{!}{
		\begin{tabular}{|c|c|c|c|c|c|c|c|c|c|}
		\hline
			\multirow{2}{*}{Centrality}  & \multirow{2}{*}{Feature} & \multicolumn{2}{c|}{Citation Networks} & \multicolumn{3}{c|}{Email/Message Networks} & \multicolumn{2}{c|}{Online Q/A Networks} & \multirow{2}{*}{\bf{Average}}\\\cline{3-9}
			& & HepPh & Hepth &  Email-EU & Email-Enron & Message-College & Mathoverflow & Askubuntu & \\ 
	    	\hline
			\multirow{8}{*}{Edge Betweenness}
			& Local-ER $(d_\theta=2)$    & 0.68 $\pm$ 0.004       & 0.59 $\pm$ 0.011      & 0.14 $\pm$ 0.094      & 0.74 $\pm$ 0.013      & \f{0.41} $\pm$ 0.042  & \s{0.21} $\pm$ 0.038  & \f{0.40} $\pm$ 0.013  & 0.45 \\
			& Global-ER $(d_\theta=2)$   & \s{0.69} $\pm$ 0.015   & \s{0.63} $\pm$ 0.041  & 0.18 $\pm$ 0.122      & \s{0.79} $\pm$ 0.051  & 0.38 $\pm$ 0.060      & \f{0.23} $\pm$ 0.058  & \s{0.39} $\pm$ 0.022  & \f{0.47} \\
			& Global-Basic $(d_\theta=2)$& \s{0.69} $\pm$ 0.013   & 0.51 $\pm$ 0.168      & \s{0.22} $\pm$ 0.132  & 0.75 $\pm$ 0.072      & 0.37 $\pm$ 0.064      & 0.15 $\pm$ 0.116      & 0.26 $\pm$ 0.180  & 0.42 \\
			& ALL $(d_\theta=2)$         & \f{0.71} $\pm$ 0.005   & \f{0.68} $\pm$ 0.009  & \f{0.25} $\pm$ 0.186  & \f{0.84} $\pm$ 0.005  & \s{0.40} $\pm$ 0.062  & \f{0.23} $\pm$ 0.060  & 0.36 $\pm$ 0.018  & \s{0.50} \\
			\cline{2-10}
			& ALL $(d_\theta=2)$     & \f{0.71} $\pm$ 0.005          & 0.68 $\pm$ 0.009      & \s{0.25} $\pm$ 0.186  & \f{0.84} $\pm$ 0.005     & \s{0.40} $\pm$ 0.062      & 0.23 $\pm$ 0.060      & 0.36 $\pm$ 0.018  & 0.50 \\
			& ALL $(d_\theta=4)$     & \f{0.71} $\pm$ 0.007      & \s{0.72} $\pm$ 0.009  & \f{0.33} $\pm$ 0.104      & \s{0.77} $\pm$ 0.006         & \f{0.43} $\pm$ 0.086  & \s{0.29} $\pm$ 0.071  & \s{0.46} $\pm$ 0.014  & \f{0.53} \\
			& ALL $(d_\theta=8)$     & \s{0.69} $\pm$ 0.004      & \f{0.75} $\pm$ 0.009  & 0.17 $\pm$ 0.079  & 0.72 $\pm$ 0.011     & 0.39 $\pm$ 0.052  & \f{0.31} $\pm$ 0.055  & \f{0.53} $\pm$ 0.023  & \s{0.52} \\
            
		\hline
		\end{tabular}
	}
\end{table*}

\begin{table*}
	\centering
	\caption{\label{tab:edge_accuracy}
	Accuracy on the task of predicting future edge importance.
	}
	
	\resizebox{\textwidth}{!}{
		\begin{tabular}{|c|c|c|c|c|c|c|c|c|c|}
		\hline
			\multirow{2}{*}{Centrality}  & \multirow{2}{*}{Feature} & \multicolumn{2}{c|}{Citation Networks} & \multicolumn{3}{c|}{Email/Message Networks} & \multicolumn{2}{c|}{Online Q/A Networks} & \multirow{2}{*}{\bf{Average}}\\\cline{3-9}
			& & HepPh & Hepth &  Email-EU & Email-Enron & Message-College & Mathoverflow & Askubuntu & \\ 
	    	\hline
			\multirow{8}{*}{Edge Betweenness}
			& Local-ER $(d_\theta=2)$    & 0.66 $\pm$ 0.003       & 0.72 $\pm$ 0.005      & \s{0.87} $\pm$ 0.041  & 0.75 $\pm$ 0.012      & \f{0.70} $\pm$ 0.030  & 0.91 $\pm$ 0.008  & 0.91 $\pm$ 0.003  & 0.78 \\
			& Global-ER $(d_\theta=2)$   & \f{0.68} $\pm$ 0.020   & \f{0.75} $\pm$ 0.023  & \f{0.88} $\pm$ 0.041  & \f{0.81} $\pm$ 0.058  & \f{0.70} $\pm$ 0.035  & 0.91 $\pm$ 0.009  & 0.91 $\pm$ 0.003  & \f{0.81} \\
			& Global-Basic $(d_\theta=2)$& 0.66 $\pm$ 0.036       & 0.72 $\pm$ 0.046      & \s{0.87} $\pm$ 0.040  & \s{0.79} $\pm$ 0.052  & 0.67 $\pm$ 0.053      & 0.91 $\pm$ 0.009  & 0.91 $\pm$ 0.008  & 0.79 \\
			& ALL $(d_\theta=2)$         & \s{0.67} $\pm$ 0.037   & \s{0.73} $\pm$ 0.047  & \s{0.87} $\pm$ 0.042  & \f{0.81} $\pm$ 0.055  & \s{0.68} $\pm$ 0.052  & 0.91 $\pm$ 0.009  & 0.91 $\pm$ 0.007  & \s{0.80} \\
			\cline{2-10}
			& ALL $(d_\theta=2)$     & 0.67 $\pm$ 0.037          & 0.73 $\pm$ 0.005      & \s{0.87} $\pm$ 0.042  & \s{0.81} $\pm$ 0.055     & 0.68 $\pm$ 0.052      & 0.91 $\pm$ 0.009      & \f{0.91} $\pm$ 0.007      & 0.80 \\
			& ALL $(d_\theta=4)$     & \s{0.73} $\pm$ 0.006      & \s{0.79} $\pm$ 0.005  & 0.86 $\pm$ 0.050      & 0.78 $\pm$ 0.024         & \s{0.78} $\pm$ 0.024  & \s{0.93} $\pm$ 0.007  & \s{0.90} $\pm$ 0.004  & \s{0.82} \\
			& ALL $(d_\theta=8)$     & \f{0.75} $\pm$ 0.003      & \f{0.81} $\pm$ 0.005  & \f{0.88} $\pm$ 0.046  & \f{0.82} $\pm$ 0.017     & \f{0.82} $\pm$ 0.017  & \f{0.94} $\pm$ 0.005  & \f{0.91} $\pm$ 0.002  & \f{0.85} \\
            
		\hline
		\end{tabular}
	}
\end{table*}

\begin{table*}
	\centering
	\caption{\label{tab:edge_auroc}	AUROC on the task of predicting future edge importance.
	}
	\resizebox{\textwidth}{!}{
		\begin{tabular}{|c|c|c|c|c|c|c|c|c|c|}
		\hline
			\multirow{2}{*}{Centrality}  & \multirow{2}{*}{Feature} & \multicolumn{2}{c|}{Citation Networks} & \multicolumn{3}{c|}{Email/Message Networks} & \multicolumn{2}{c|}{Online Q/A Networks} & \multirow{2}{*}{\bf{Average}}\\\cline{3-9}
			& & HepPh & Hepth &  Email-EU & Email-Enron & Message-College & Mathoverflow & Askubuntu & \\ 
	    	\hline
			\multirow{8}{*}{Edge Betweenness}
			& Local-ER $(d_\theta=2)$    & 0.71 $\pm$ 0.003       & 0.77 $\pm$ 0.007      & \f{0.64} $\pm$ 0.080  & 0.82 $\pm$ 0.009      & \f{0.68} $\pm$ 0.035  & \f{0.85} $\pm$ 0.013  & \f{0.86} $\pm$ 0.006  & \s{0.76} \\
			& Global-ER $(d_\theta=2)$   & \f{0.74} $\pm$ 0.027   & \f{0.80} $\pm$ 0.033  & \s{0.63} $\pm$ 0.092  & \s{0.88} $\pm$ 0.058  & \f{0.68} $\pm$ 0.032  & \f{0.85} $\pm$ 0.015  & \f{0.86} $\pm$ 0.007  & \f{0.78} \\
			& Global-Basic $(d_\theta=2)$& 0.71 $\pm$ 0.053       & 0.77 $\pm$ 0.055      & \f{0.64} $\pm$ 0.093  & 0.87 $\pm$ 0.049      & 0.65 $\pm$ 0.058      & 0.73 $\pm$ 0.164      & 0.76 $\pm$ 0.150      & 0.73 \\
			& ALL $(d_\theta=2)$         & \s{0.72} $\pm$ 0.054   & \s{0.79} $\pm$ 0.057  & \f{0.64} $\pm$ 0.099  & \f{0.89} $\pm$ 0.051  & \s{0.66} $\pm$ 0.056  & \s{0.76} $\pm$ 0.150  & \s{0.78} $\pm$ 0.139  & 0.75 \\
			\cline{2-10}
			& ALL $(d_\theta=2)$     & 0.72 $\pm$ 0.054          & 0.79 $\pm$ 0.057      & 0.64 $\pm$ 0.099      & 0.89 $\pm$ 0.051  & 0.66 $\pm$ 0.056      & 0.76 $\pm$ 0.150      & 0.78 $\pm$ 0.139      & 0.75 \\
			& ALL $(d_\theta=4)$     & \s{0.80} $\pm$ 0.004      & \s{0.87} $\pm$ 0.004  & \f{0.75} $\pm$ 0.050  & \f{0.94} $\pm$ 0.002  & \s{0.74} $\pm$ 0.026  & \s{0.89} $\pm$ 0.016  & \s{0.89} $\pm$ 0.009  & \s{0.84} \\
			& ALL $(d_\theta=8)$     & \f{0.82} $\pm$ 0.005      & \f{0.89} $\pm$ 0.005  & \s{0.70} $\pm$ 0.046  & \s{0.93} $\pm$ 0.002      & \f{0.79} $\pm$ 0.021  & \f{0.90} $\pm$ 0.011  & \f{0.90} $\pm$ 0.008  & \f{0.85} \\
            
		\hline
		\end{tabular}
	}
\end{table*}

\end{document}